\definecolor{UBCblue}{RGB}{0, 0, 95} 
\definecolor{ForestGreen}{RGB}{34, 139, 34}
\definecolor{a2red}{RGB}{192,0,0}
\definecolor{a3sand}{RGB}{191,144,0}
\definecolor{a4green}{RGB}{0,204,0}
\newtheorem{theorem}{Theorem} 
\newtheorem*{theorem*}{Theorem}
\newtheorem{corollary}{Corollary}  
\newtheorem{lemma}{Lemma}  
\newtheorem{assumption}{Assumption}[section]
\theoremstyle{definition}
\newtheorem{definition}{Definition}[section]
\newcolumntype{L}[1]{>{\raggedright\let\newline\\arraybackslash\hspace{0pt}}m{#1}}
\newcolumntype{C}[1]{>{\centering\let\newline\\arraybackslash\hspace{0pt}}m{#1}}
\newcolumntype{R}[1]{>{\raggedleft\let\newline\\arraybackslash\hspace{0pt}}m{#1}}
\newcommand{\CI}{\mathrel{\perp\mspace{-10mu}\perp}}
\newcommand{\nCI}{\centernot{\CI}}
\newcommand{\E}[1]{\operatorname{\mathbb{E}}\left[#1\right]}
\newcommand{\Var}[1]{\operatorname{Var}\left[#1\right]}
\newcommand{\Cov}[1]{\operatorname{Cov}\left[#1\right]}
\tikzset{
    Latex-Latex,auto,node distance =1 cm and 1 cm, thick,
    state/.style ={ellipse, draw, minimum width = 0.7 cm},
    point/.style = {circle, draw, inner sep=0.04cm,fill,node contents={}},
    bidirected/.style={Latex-Latex,dashed},
    el/.style = {inner sep=2pt, align=left, sloped},
    line/.style={draw, line width=1, -},  
    cross/.style={cross out, draw=black, minimum size=2*(#1-\pgflinewidth), inner sep=0pt, outer sep=0pt},
    cross/.default={1pt}
}
\pgfplotsset{ every non boxed x axis/.append style={x axis line style=-},
     every non boxed y axis/.append style={y axis line style=-}}
\begin{document}

\begin{titlepage}
\title{Relaxing Instrument Exogeneity with Common Confounders}
\author{{\scshape Christian Tien} \thanks{\href{mailto:ct493@cam.ac.uk}{ct493@cam.ac.uk}; Faculty of Economics, University of Cambridge} }
\date{\today}
\maketitle

\begin{abstract}
\noindent 
Instruments can be used to identify causal effects in the presence of unobserved confounding, under the famous relevance and exogeneity (unconfoundedness and exclusion) assumptions. As exogeneity is difficult to justify and to some degree untestable, it often invites criticism in applications. Hoping to alleviate this problem, we propose a novel identification approach, which relaxes traditional IV exogeneity to exogeneity conditional on some unobserved common confounders. 
We assume there exist some relevant proxies for the unobserved common confounders. Unlike typical proxies, our proxies can have a direct effect on the endogenous regressor and the outcome. We provide point identification results with a linearly separable outcome model in the disturbance, and alternatively with strict monotonicity in the first stage.
General doubly robust and Neyman orthogonal moments are derived consecutively to enable the straightforward $\sqrt{n}$-estimation of low-dimensional parameters despite the high-dimensionality of nuisances, themselves non-uniquely defined by Fredholm integral equations.
Using this novel method with NLS97 data, we separate ability bias from general selection bias in the economic returns to education problem.

\vspace*{\fill}

\noindent\textbf{Keywords:} \\ Causal Inference, Unobserved Confounding, Instrumental Variables, Control Function, Proximal Learning 
\end{abstract}
\setcounter{page}{0}
\thispagestyle{empty}
\end{titlepage}
\pagebreak \newpage

\section{Introduction}
Unobserved confounding complicates the identification of a causal effect of a regressor of interest on an outcome. Despite the endogeneity of a regressor of interest, instrumental variable (IV) approaches can identify their causal effect if the famous relevance and exogeneity (unconfoundedness and exclusion) assumptions hold for the instruments. These assumptions are strong and often invite criticism of IV estimates in practice. We propose a novel approach to relax exogeneity, in favour of exogeneity conditional on an unobserved common confounder, for which some relevant variables are observed.

Relaxing exogeneity is only possible when it is replaced by other strong assumptions. 
One way to identify causal effects without instrument exogeneity is from residual distributions, not variation in the explanatory variables \citep{heckman1979, millimet2013}. Very specific forms of heteroskedasticity across the first stage and outcome model can also be used to establish identification without an exclusion restriction \citep{klein2010, lewbel2012}. Others have suggested to use irrelevant variation in instruments to test for the exogeneity of the relevant variation in the instruments \citep{d2021}. A similar idea is followed when integrated conditional moments use nonlinear mean-dependence of endogenous variables on instruments, such that the instruments may violate the exclusion restrictions in pre-specified parametric ways. Despite recent advances in estimation with integrated conditional moments \citep{tsyawo2021}, the strong identifying assumptions render all these approaches difficult to justify in applications. Our solution differs significantly from these approaches as it only uses a relaxed exogeneity assumption and variation in explanatory variables to identify the causal effect of interest.

From the perspective of IV, we allow for some endogeneity in the instruments. That endogeneity originates from unobserved common confounders. We call these unobserved confounders \emph{common}, because there are some observed variables, which are relevant for them. These observed variables are called proxies. In other words, we assume there are some unobserved variables that explain all correlation (association) between the instruments and these proxies. This assumption is testable. Then, we need to argue for the exogeneity of the instruments, but only conditional on the unobserved common confounders (and observable variables). In general, this is a strong relaxation of instrument exogeneity conditional on observed variables only.

Another way to understand our proposal is as a solution to measurement error in observed confounders for IV. Residual bias is a well-known problem when confounders are measured with error. Proximal learning \citep{cui2020} is a solution to the problem, where observed variables measure all unobserved confounders with some error. In proximal learning, the proxies for the unobserved confounder may either be causes of the treatment or outcome variable. These proxies must be sufficiently relevant (e.g. complete) for all unobserved confounders. Separately developed from the proximal learning approach, a control function solution exists with identical conditional independence assumptions and mismeasured confounders \citep{nagasawa2018}. Our solution is different, as we do not assume the existence of measurements for all confounders. Instead, we use instruments and assume that measurements exist for all confounders, conditional on which the instruments would be exogenous. In this sense, our solution can be understood as IV with mismeasured confounders.

As it is standard in the control function literature, our approach will identify average causal (structural) quantities of interest. Unless the outcome model is fully linearly separable in the treatment and disturbance \citep{newey1999}, where in our case the disturbance includes the effect of the common confounder, we identify those average causal (structural) quantities of interest that integrate out the unobservables without dependence on the treatment using a control function \citep{imbens2009}. Our identification approach is most similar to recent advances in nonlinear panels \citep{liu2021}. In panel data, unobserved fixed effects are common to the same variables across time, in a similar way as the unobserved common confounders are common to the instruments and proxies in our setup. In \cite{liu2021}, identification stems from a parametric dimension reduction of the effect of observed variables on the outcome, and an index sufficiency assumption that renders the observed variables independent from the fixed effects conditional on an index of the observed variables. In our approach, identification stems from the existence of more instruments than treatments, and an index sufficiency assumption that renders the instruments independent from the unobserved common confounders conditional on an index of the instruments. Just like \cite{blundell2004}, \cite{liu2021} do not explain how to derive this crucial index function. One of our main contributions is the derivation of the index function, which arises naturally in the common confounding setup.

A motivating example for our proposal is the returns to college education identification problem. It features various biases, ability and selection, and we motivate pre-college test scores as instruments exogenous to selection, while clearly endogenous to ability. proxies are pre-college risky behaviour dummies, which appear to correlate negatively with ability. With NLS97 data, we show that selection bias is the much more economically relevant bias compared to ability bias in this problem.

\section{Setup}

The treatment (action) $A \in \mathcal{A}$ is discrete or continuous with base measure $\mu_A$ of $\mathcal{A} \subseteq \mathbb{R}^{d_A}$. $Y \in \mathcal{Y} \subseteq \mathbb{R}$ is the one-dimensional outcome variable. Other important variables are the instruments $Z \in \mathcal{Z} \in \mathbb{R}^{d_Z}$, the proxies $W \in \mathcal{W} \subseteq \mathbb{R}^{d_W}$, and the common confounders $U \in \mathcal{U} \subseteq \mathbb{R}^{d_W}$.

\begin{assumption}[Common Confounding IV Model] \label{a:riv}
\begin{enumerate}
  \item SUTVA: $Y=Y(A,Z)$.  \label{a:sutva}
  \item Instruments \label{a:iv}
  \begin{enumerate}
    \item Exogeneity: \label{a:iv-exog} $ Y(a, z) = Y(a) \CI Z \ | \ U. $
    \item Index sufficiency: For some $\tau \in \mathcal{L}_2(Z)$, where $T \coloneqq \tau(Z)$, $U \CI Z \ | \ T$. \label{a:iv-index}
    \item Relevance (completeness): \label{a:iv-rel} For any $g(A, T) \in \mathcal{L}_2(A, T)$,  \begin{align} \E{ g(A, T) | Z}  = 0 \text{ only when } g(A,T) = 0. \end{align}
  \end{enumerate}
  \item Proxies \label{a:pl}
  \begin{enumerate}
    \item Exogeneity: $ W \CI Z \ | \ U. $ \label{a:pl-excl}
    \item Relevance (completeness): \label{a:pl-rel} For any $g(U) \in \mathcal{L}_2(U)$, \begin{align} \E{g(U) | W} = 0 \text{ only when } g(U)=0. \end{align} 
  \end{enumerate} 
\end{enumerate}
\end{assumption}

Assumption \ref{a:riv}.\ref{a:sutva} is the standard stable unit treatment value assumption (SUTVA), which implies no interference across units. In assumption \ref{a:riv}.\ref{a:iv-exog} we capture the key relaxation of this model compared to standard IV. It states that the instruments are excluded and unconfounded, yet this exclusion and unconfoundedness may be conditional on an unobserved (vector-valued) random variable $U$. This is a significant relaxation of the standard exclusion restriction and unconfoundedness assumption, which is possible only with assumptions \ref{a:riv}.\ref{a:iv-index}, \ref{a:riv}.\ref{a:iv-rel}, and \ref{a:riv}.\ref{a:pl}. 
In assumption \ref{a:riv}.\ref{a:iv-index}, we introduce a control function $\tau \in \mathcal{T} \subseteq \mathcal{L}_2(Z)$ and a control variable $T \coloneqq \tau(Z)$ such that $T \in \mathcal{Z}^\tau \subseteq \mathcal{Z}$. Conditional on the control variable $T$, the instruments $Z$ are independent from the common confounders $U$. A simple example of such a function is the conditional density $f(U|Z)$. This assumption describing the existence the control function $\tau$ is often called index sufficiency, where $T$ is a (multiple) index of $Z$. In assumption \ref{a:riv}.\ref{a:iv-rel}, we require that conditional on the control variable $T$, the instruments $Z$ are complete for treatment $A$. This is a standard completeness condition. It simply means that keeping the variation of $Z$ described by $T$ fixed, the instruments must remain sufficiently relevant for $A$. In slightly different words, after conditioning on $T$, enough variation must be left in the instruments $Z$ to infer the effect of treatment $A$ on outcome $Y$. Intuitively, we are orthogonalising $Z$ with respect to some of its own variation, the variation in $T$. As in standard IV with observed confounders, this relevance requirement is typically testable. Assumption \ref{a:riv}.\ref{a:pl-excl} states that the proxies $W$ are independent from instruments $Z$ conditional on the common confounders $U$. The proxies $W$ must also be complete for the unobserved common confounders $U$, as stated in \ref{a:riv}.\ref{a:pl-rel}. Again, completeness means the proxies $W$ are sufficiently relevant for $U$. 

A different way to understand these assumptions is that the unobserved variable $U$, which explains all association (correlation) between the proxies $W$ and instruments $Z$, renders the instruments exogenous when observed. In this sense, $W$ can be (possibly quite poor) proxies for what we consider the unobserved common confounder $U$, as long as they are sufficiently relevant. Conditional on $W$, the instruments $Z$ are still endogenous. Common confounders $U$ are never observed, and $W$ could be quite poor proxies for it. Yet, we prove that conditioning on a control function $T$, which renders the instruments $Z$ and proxies $W$ independent, exogeneity of instruments $Z$ is restored. 

\section{Learning the Confounding Structure}   \label{sec:confounding}
In this section, we describe the main idea of the paper. Using only observable information, we find a control variable $T$, conditional on which the instruments $Z$ are independent from the unobserved common confounders $U$. We then explain what may be considered the optimal control variable $T$.

\subsection{Learning a Control Function}   \label{ssec:learning-control-function}

The control function $\tau \in \mathcal{L}_2(Z)$, described in lemma \ref{l:sc-1}, generates the control variable $T$. This control variable renders the instruments $Z$ independent from the unobserved common confounders $U$. Logically, if the instruments $Z$ and the proxies $W$ are independent conditional on $U$, it follows that any such control variable $T$ also renders $Z$ and $W$ independent conditional on $T$. 
\begin{lemma}[]  \label{l:sc-1}
Assume $W \CI Z \ | \ U$ (\ref{a:riv}.\ref{a:pl-excl}).
Take any $\tau \in \mathcal{L}_2(Z)$, where $T \coloneqq \tau(Z)$, such that $U \CI Z \ | \ T$. Then, also $W \CI Z \ | \ T$.
\end{lemma}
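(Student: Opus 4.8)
The plan is to show that the conditional distribution of $W$ given $(Z,T)$ depends on $Z$ only through $T$; since $T$ is itself a function of $Z$, this is precisely the claim $W \CI Z \mid T$. The engine of the argument is disintegration over $U$ combined with the two conditional independences already in hand: $W \CI Z \mid U$ (the hypothesis, i.e.\ \ref{a:riv}.\ref{a:pl-excl}) and $U \CI Z \mid T$ (the hypothesis on $\tau$).

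First I would record the elementary observation that, because $T \coloneqq \tau(Z)$ is a measurable function of $Z$, the $\sigma$-algebra generated by $(Z,T)$ equals that generated by $Z$; hence for any bounded measurable $\phi$, $\E{\phi(W)\mid Z,T}=\E{\phi(W)\mid Z}$ and $\E{\phi(U)\mid Z,T}=\E{\phi(U)\mid Z}$ almost surely. Working with regular conditional distributions on the (Euclidean) state spaces — or, equivalently, with densities with respect to dominating base measures, as the surrounding text does when it writes $f(U\mid Z)$ — I would disintegrate
\begin{align}
\Prp{W\in\cdot \mid Z} \;=\; \int_{\mathcal{U}} \Prp{W\in\cdot \mid Z,\,U=u}\;\mathrm{d}\,\Prp{U=u\mid Z}. \label{eq:sc1-disint}
\end{align}
Then I would substitute the two hypotheses: by $W \CI Z \mid U$ the inner factor equals $\Prp{W\in\cdot\mid U=u}$, and by $U \CI Z \mid T$ together with the reduction of the previous paragraph, $\Prp{U\in\cdot\mid Z}=\Prp{U\in\cdot\mid Z,T}=\Prp{U\in\cdot\mid T}$. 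Hence the right-hand side of \eqref{eq:sc1-disint} equals $\int_{\mathcal{U}} \Prp{W\in\cdot\mid U=u}\,\mathrm{d}\,\Prp{U=u\mid T}$, which is $\sigma(T)$-measurable; call it $h(T,\cdot)$. Finally I would conclude: $\Prp{W\in\cdot\mid Z,T}=\Prp{W\in\cdot\mid Z}=h(T,\cdot)$, and taking conditional expectation given $T$ gives $\Prp{W\in\cdot\mid T}=\E{h(T,\cdot)\mid T}=h(T,\cdot)$, so $\Prp{W\in\cdot\mid Z,T}=\Prp{W\in\cdot\mid T}$, i.e.\ $W \CI Z \mid T$.

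The only real obstacle is measure-theoretic hygiene: justifying the disintegration in \eqref{eq:sc1-disint} (regular conditional probabilities exist here since all variables live in Polish spaces) and carrying the "conditioning on a function of $Z$" identifications across the relevant null sets. If one is content to argue with densities, the entire proof collapses to the one-line chain $f(w\mid z)=\int f(w\mid z,u)\,f(u\mid z)\,\mathrm{d}u=\int f(w\mid u)\,f(u\mid t)\,\mathrm{d}u=f(w\mid t)$, whence $f(w\mid z,t)=f(w\mid z)=f(w\mid t)$ and nothing substantive remains.
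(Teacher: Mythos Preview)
Your argument is correct and essentially the same as the paper's: both disintegrate over $U$ and apply the two hypotheses $W \CI Z \mid U$ and $U \CI Z \mid T$ to reduce the conditional law of $W$ given $Z$ to a $\sigma(T)$-measurable object. The paper factorizes the joint density $f_{W,Z\mid T}$ rather than working with $f_{W\mid Z}$, but this is a cosmetic difference, and your density one-liner at the end is exactly the paper's computation in slightly compressed form.
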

One possible such control variable is $T=Z$, yet this would leave no remaining variation in $Z$ to instrument for $A$ conditional on $T$. Also, lemma \ref{l:sc-1} does not provide a way to identify any control function $\tau$ apart from a function which captures the same information as $Z$ itself. For this purpose, we need lemma \ref{l:sc-2}. In this lemma, we establish that any $T = \tau(Z)$, conditional on which the instruments $Z$ and proxies $W$ are independent, also renders $Z$ conditionally independent from the unobserved common confounders $U$. 
\begin{lemma}[] \label{l:sc-2}
Assume $W \CI Z \ | \ U$ (\ref{a:riv}.\ref{a:pl-excl}), and for any $g(U) \in \mathcal{L}_2(U)$, $\E{g(U) | W} = 0$ only when $g(U)=0$ (\ref{a:riv}.\ref{a:pl-rel}).
Take any $\tau \in \mathcal{L}_2(Z)$, where $T \coloneqq \tau(Z)$, such that $W \CI Z \ | \ T$. Then, also $U \CI Z \ | \ T$.
\end{lemma}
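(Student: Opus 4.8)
The plan is to reduce $U \CI Z \mid T$ to the statement that $\E{h(U)\mid Z}$ depends on $Z$ only through $T$, for every bounded measurable $h$. Since $T=\tau(Z)$ is $\sigma(Z)$-measurable, $\sigma(Z,T)=\sigma(Z)$, so $U \CI Z \mid T$ is equivalent to $\E{h(U)\mid Z}=\E{h(U)\mid T}$ almost surely for all bounded $h$; taking $h=\mathbf{1}_B$ then gives $\Prp{U\in B\mid Z}=\Prp{U\in B\mid T}$ for every measurable $B$, which is exactly the conditional independence claimed.

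The core step turns functions of $W$ into functions of $U$ using the two exogeneity conditions. Fix a bounded $\phi\in\mathcal{L}_2(W)$ and set $\psi(U):=\E{\phi(W)\mid U}$. Because $W\CI Z\mid U$ holds and $T$ is a function of $Z$, conditional independence descends to $W\CI T\mid U$, so the tower property yields both $\E{\phi(W)\mid Z}=\E{\psi(U)\mid Z}$ and $\E{\phi(W)\mid T}=\E{\psi(U)\mid T}$. On the other hand, the hypothesis $W\CI Z\mid T$ together with $\sigma(T)\subseteq\sigma(Z)$ gives $\E{\phi(W)\mid Z}=\E{\phi(W)\mid Z,T}=\E{\phi(W)\mid T}$. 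Combining these, $\E{\psi(U)\mid Z}=\E{\psi(U)\mid T}$ for every $\psi$ of the form $\E{\phi(W)\mid U}$ with $\phi\in\mathcal{L}_2(W)$.

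Next I would upgrade this from these special functions to all of $\mathcal{L}_2(U)$ via completeness \ref{a:riv}.\ref{a:pl-rel}. That assumption says the operator $g\mapsto\E{g(U)\mid W}$ is injective on $\mathcal{L}_2(U)$, so its adjoint $\phi\mapsto\E{\phi(W)\mid U}$ has range dense in $\mathcal{L}_2(U)$: if $h\in\mathcal{L}_2(U)$ were orthogonal to every $\E{\phi(W)\mid U}$, then $0=\E{h(U)\phi(W)}=\E{\E{h(U)\mid W}\phi(W)}$ for all $\phi$, forcing $\E{h(U)\mid W}=0$ and hence $h=0$. So for arbitrary $h\in\mathcal{L}_2(U)$ choose $\psi_n=\E{\phi_n(W)\mid U}\to h$ in $\mathcal{L}_2(U)$; since conditional expectation is an $L_2$-contraction, $\E{\psi_n(U)\mid Z}\to\E{h(U)\mid Z}$ and $\E{\psi_n(U)\mid T}\to\E{h(U)\mid T}$ in $\mathcal{L}_2(Z)$ (a function of $T$ has the same $L_2$-norm under the law of $Z$ as under the law of $T$), and the equality for each $n$ passes to the limit, giving $\E{h(U)\mid Z}=\E{h(U)\mid T}$ a.s. Applying this to bounded $h$, in particular to indicators $\mathbf{1}_B$ (which lie in $\mathcal{L}_2(U)$ since the law of $U$ is a probability measure), completes the argument as in the first paragraph.

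The main obstacle is the density/limiting step: one must use completeness in its adjoint-range form and verify that the $L_2$-contraction property of conditional expectation lets the identity survive the passage to the limit, then that $\mathcal{L}_2$-convergence is enough to recover the almost-sure conditional-independence statement for indicators. The measure-theoretic bookkeeping — that conditioning on $Z$ coincides with conditioning on $(Z,T)$, and that $W\CI Z\mid U$ descends to $W\CI T\mid U$ — is routine but should be stated explicitly.
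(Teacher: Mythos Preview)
Your argument is correct and takes a genuinely different route from the paper. The paper works directly with conditional densities: it writes $f_{W\mid Z}$ two ways---once factoring through $U$ via $W\CI Z\mid U$, once factoring through $T$ via $W\CI Z\mid T$---obtaining $\int f_{W\mid U}(W,u)\bigl[f_{U\mid Z}(u,Z)-f_{U\mid T}(u,T)\bigr]\dif\mu_U(u)=0$. A Bayes-rule rearrangement turns this into $\E{g_{t,z}(U)-g_t(U)\mid W}=0$ with $g_{t,z}(u)=f_{U\mid Z}(u,z)/f_U(u)$ and $g_t(u)=f_{U\mid T}(u,t)/f_U(u)$, and a single application of completeness forces $f_{U\mid Z}=f_{U\mid T}$. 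Your approach is operator-theoretic: you establish $\E{\psi(U)\mid Z}=\E{\psi(U)\mid T}$ on the range of the adjoint $\phi\mapsto\E{\phi(W)\mid U}$, invoke the kernel--closure-of-adjoint-range duality to see that completeness makes this range dense in $\mathcal{L}_2(U)$, and pass to the limit using the $L_2$-contraction property of conditional expectation.

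The paper's argument is shorter and applies completeness once to an explicit function, but tacitly assumes conditional densities exist and that the ratios $f_{U\mid Z}/f_U$ lie in $\mathcal{L}_2(U)$. Your argument avoids densities entirely and is therefore more measure-theoretically robust; it also foreshadows the adjoint-operator language the paper itself adopts in Section~\ref{sec:semiparametric}. The cost is the additional density-of-range and limiting steps, which you have handled correctly.
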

Unlike lemma \ref{l:sc-1}, the conclusion of lemma \ref{l:sc-2} is not obvious and requires the completeness of proxies $W$ for the unobserved common confounders $U$. Again, completeness means that the proxies $W$ must be sufficiently relevant for $U$. If this were not the case, it would be impossible to keep all variation in $Z$ that is associated with $U$ fixed, using a control variable $T$ derived only using information about the association of $Z$ and $W$. We can interpret $U$ as all unobserved confounders that associate (correlate) $Z$ and $W$. 

Lemma \ref{l:sc-2} is an important result, because it allows the identification of a control function that does not capture all variation in instruments $Z$. For any $T$ that we identify conditional on which instruments $Z$ and proxies $W$ are independent, the instrument exogeneity assumption can be relaxed to exogeneity conditional on all unobservables $U$ that associate (correlate) $Z$ and $W$ (assumption \ref{a:riv}.\ref{a:iv-exog}). The parallels to standard IV are quite clear: The conditional exogeneity assumption \ref{a:riv}.\ref{a:iv-exog} is untestable, yet relaxed compared to standard IV. The relevance requirement of $Z$ for $A$ conditional on $T$ (assumption \ref{a:riv}.\ref{a:iv-rel}) is testable, yet stricter compared to standard IV. The requirement for relevance of $Z$ for $A$ conditional on $T$ implies that only a subset of control functions $\tau \in \mathcal{L}_2(Z)$, which leave enough relevant variation in $Z$ conditional on $T$, enable model identification under assumption \ref{a:riv}:
\begin{align}
{\mathcal{T}}_\text{valid} &\coloneqq \left\{ \tau \in \mathcal{L}_2(Z): \ \left( W \CI Z \ | \ \tau(Z) \right) \text{ and } \left( \E{g(A, \tau(Z) | Z} = 0 \text{ only when } g(A, \tau(Z))= 0 \right) \right\}  \label{eq:valid-cf}
\end{align}
As both defining relevance conditions of this set ${\mathcal{T}}_\text{valid}$ are testable, its non-emptiness is testable as well. 

\subsection{Optimal Control Function}   \label{ssec:optimal-control-function}
Under assumption \ref{a:riv}, the optimal control function $\tau^{*} \in {\mathcal{T}}_\text{valid}$ out of the set of valid control functions captures the minimum feasible information in $Z$ in a sense of minimising the variance of the asymptotically unbiased estimator $\hat{\theta}$ of some causal estimand $\theta$. Figure \ref{f:bias-variance} illustrates schematically how the degree of inconsistency and asymptotic variance of an IV estimator conditional on the control variable $T$ depend on the complexity of $\tau$. 

\begin{figure}[h]
\begin{center}
\caption{Implied typical estimator properties with control functions $\tau$ of varying complexity}  \label{f:bias-variance}
\includegraphics[width=0.618\textwidth]{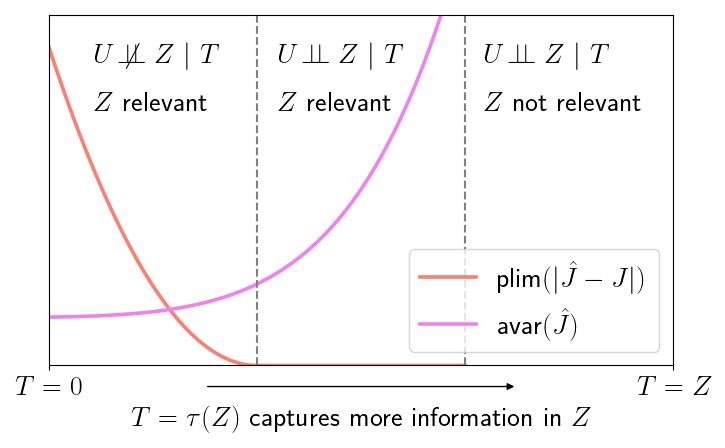}
\end{center}
\begin{footnotesize}
\textit{Notes}:
This figure illustrates some typical properties of an estimator $\hat{\theta}$ of the causal effect of a treatment $A$ on an outcome $Y$, using instruments $Z$ while conditioning on a control function $T = \tau(Z)$. Moving to the right on the x-axis, the control function captures more information in $Z$, starting with variation in $Z$ which correlates with the unobserved common confounders $U$. \\
In the left rectangle, the control function is too simple to render $Z$ exogenous conditional on $T$. Hence, the estimator is inconsistent, yet the degree of inconsistency decreases as the complexity of the control function $T$ increases. In the central rectangle, the control function captures enough information for $Z$ to be excluded conditional on $T$. So, the estimator is consistent. In the right rectangle, the control function is too complex. Conditional on $T$, the instruments $Z$ are no longer sufficiently relevant for $A$ and the estimator $\hat{\theta}$ asymptotically does not exist. In the first two rectangles, the asymptotic variance of the estimator $\hat{\theta}$ increases with the complexity of the control function, because conditional on $T$, less information in $Z$ is used to infer the effect of $A$ on $Y$.
\end{footnotesize}
\end{figure}

In figure \ref{f:bias-variance}, the complexity of $\tau$ on the x-axis increases from $\tau(Z)=0$ on the extreme left to $\tau(Z)=Z$ on the extreme right. Moving further to the right on the x-axis means that the control function captures more information in $Z$, starting with variation in $Z$ which correlates with the unobserved common confounders $U$.
In the left rectangle, the complexity of $\tau$ is low. $T=\tau(Z)$ does not capture all information in $Z$ that correlates with $U$, so even conditional on $T$ the instruments $Z$ remain endogenous, and the estimator $\hat{\theta}$ is inconsistent. However, as all information in $Z$ is used for inference, the asymptotic variance of the estimator $\hat{\theta}$ will be relatively small. As the complexity of $\tau$ increases towards the right in the left quadrant, more information about the elements of $Z$ which correlate with $U$ is captured in $T$. Increasing the complexity of $\tau$ increases the asymptotic variance of $\hat{\theta}$ as less information in $Z$ is used. Importantly, as this information corresponds to variation in $Z$ that correlates with $U$, inconsistency is being reduced. 

In the central rectangle of figure \ref{f:bias-variance}, the complexity of $\tau$ is sufficient for $Z$ to be exogenous conditional on $T$. Hence, the estimator $\hat{\theta}$ is consistent. However, as $\tau$ increases in complexity, we use less information in $Z$ to infer the causal estimand $\theta$. Hence, inevitably the asymptotic variance of the estimator $\hat{\theta}$ increases. Consequently, the optimal $\tau$ would be that of minimal complexity, such that $Z$ is excluded conditional on $T$. In practice, we do not know but can only estimate ${\mathcal{T}}_\text{valid}$, so the exact minimum complexity valid $\tau$ is unknown and estimated with sampling error. However, even if a $\tau$ is chosen with slightly too little complexity, the resulting inconsistency may still be small. The margin of sufficient complexity is at the border of the left and central rectangle. When a $\tau$ with slightly too little complexity is chosen, a small degree of inconsistency is incurred, but depending on the sample size possibly outweighed in terms of mean squared error contribution by the associated standard deviation reduction.
This is an example of a small in-sample bias-variance tradeoff, while we otherwise focus on identification to enable the construction of consistent estimators.

As the complexity of $\tau$ increases, at some point the instruments $Z$ are no longer relevant for treatment $A$ conditional on $T$. Asymptotically, the estimator $\hat{\theta}$ no longer exists. This is the case in the right rectangle of figure \ref{f:bias-variance}. In the extreme, $\tau$ is simply an identify function and $T=Z$. No variation in $Z$ remains to infer the effect of $A$ on $Y$. However, even in less extreme cases where there is some variation left in $Z$ conditional on $T$, it may simply be insufficient variation to be relevant for $A$.

\subsection{Specification Test}     \label{ssec:spec}
A straightforward way to ensure the sufficient complexity of some $\tau$ is to test $W \CI Z \ | \ T$. An alternative to this is a specification test, similar in spirit to specification testing in overidentified IV models. Consider the two control functions $\tau_1$ and $\tau_2$, such that $T_1 = \tau_1(Z)$ and $T_2 = \tau_2(Z)$. Without loss of generality, let $\tau_1$ be \emph{less} complex than $\tau_2$, i.e. $\mathcal{R}(\tau_1) \subset \mathcal{R}(\tau_2)$ where $\mathcal{R}(a)$ is the range space of function $a$. The null hypothesis is the conditional exogeneity of $Z$ given $T_1$,
\begin{align*}
H_0: Y(a) \CI Z \ | \ T_1, \text{ with alternative } H_a: Y(a) \nCI Z \ | \ T_1.
\end{align*}
Let $\hat{\theta}_1$ and $\hat{\theta}_2$ be the two causal estimators of estimand $\theta_0$ based on $\tau_1$ and $\tau_2$. Suppose that under both control functions, the instruments $Z$ remain conditionally relevant for treatment $A$, so that both estimators $\hat{\theta}_1$ and $\hat{\theta}_2$ have some probability limit. We also still assume that conditional on $U$, the instruments $Z$ are exogenous. The conditional exogeneity of $Z$ given $U$ is assumed, because here we only test for the sufficient complexity of $\tau$, i.e. $Z \CI U \ | \ T$.\footnote{To test whether some instruments $Z$ are exogenous conditional on $U$, we can use a standard specification test for different $Z$, if $\theta_0$ is overidentified conditional on $T$.} 

Under the null hypothesis $H_0$, both estimators $\hat{\theta}_1$ and $\hat{\theta}_2$ converge to the true causal effect $\theta_0$. However, the asymptotic variance of $\hat{\theta}_2$ with the more complex control function $\tau_2$ will be larger than that of $\hat{\theta}_1$, as $\hat{\theta}_2$ uses less variation in $Z$ than $\hat{\theta}_2$. Under the alternative $H_a$, the estimators do not have the same probability limit. If $\tau_2$ still captures enough variation $T_2$ in $Z$ for the instruments $Z$ to be conditionally exogenous, $\hat{\theta}_2$ still converges to $\theta_0$. $\hat{\theta}_1$ on the other hand will no longer converge to $\theta$. Generally, there is no guarantee that $T_2$ still renders $Z$ conditionally exogenous. In this case, $\hat{\theta}_2$ converges to some value other than $\theta_0$. However, unless the additional variation that we condition on in $T_2$ compared to $T_1$ is exogenous due to some particularly poor construction of $\tau_2$, $\hat{\theta}_1$ and $\hat{\theta}_2$ still have different probability limits. A specification test using this logic is generally possible for the sufficient complexity of a control function $\tau$.

\section{Point Identification}    \label{sec:id}

Without further parametric restrictions on the outcome or first stage model, at most set identification is possible. When the outcome model is linearly separable in the observables and unobservables, we show how to point-identify the model part relating to the observables \citep{newey2003}. If instead the first stage is monotonous, a control function approach can be used to point-identify average structural functions and thus causal effects with a common support assumption (instead of completeness) \citep{imbens2009}. We construct a control function for the endogenous variation in $A$ while already keeping the endogenous variation in $Z$ fixed.

\subsection{Linearly separable outcome model}    \label{ssec:ls}
An outcome model with linear separability in the treatment and a disturbance is one special case where point identification is possible. With a linearly separated disturbance $\varepsilon$, it is straightforward to represent the exogeneity of instrument $Z$ as mean-independence conditional on common confounders $U$. Assumption \ref{a:ls} fully describes this setting.
\begin{assumption}[Linearly separable outcome  model] \label{a:ls}
There exists some function $k_0 \in \mathcal{L}_2(A)$ 
such that
\begin{align}
Y = Y(A) &= k_0(A) + \varepsilon, & \E{\varepsilon | Z, U} &= \E{\varepsilon | U}. \label{eq:outcome}
\end{align}
\end{assumption}
The conditional moment describes the mean-independence of instruments $Z$ conditional on the unobserved common confounders $U$. 
\begin{theorem}[Identification in linearly separable model] \label{th:ls-id}
Let assumptions \ref{a:riv}.(\ref{a:sutva}/\ref{a:iv-index}/\ref{a:iv-rel}/\ref{a:pl}) and \ref{a:ls} hold.
Any $h \in \mathcal{L}_2(A, T)$ for which $\E{Y | Z} = \E{h(A, T) | Z}$, satisfies $h(A, T) = k_0(A) + \E{\varepsilon | T}$. Consequently, $\theta \coloneqq \int_\mathcal{A} Y(a) \pi(a) \mathrm{d}a = \int_\mathcal{T} \int_\mathcal{A} h(a, t) \pi(a) \mathrm{d}a f(t) \mathrm{d}t$.
\end{theorem}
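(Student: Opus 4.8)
The plan is to (i) exhibit one solution of the integral equation $\E{Y \mid Z} = \E{h(A,T) \mid Z}$, namely $h^{\star}(A,T) := k_0(A) + \E{\varepsilon \mid T}$; (ii) invoke the completeness condition \ref{a:riv}.\ref{a:iv-rel} to conclude that this is the \emph{only} element of $\mathcal{L}_2(A,T)$ with that property; and (iii) substitute $h = h^{\star}$ into $\int_{\mathcal{T}}\int_{\mathcal{A}} h(a,t)\pi(a)\,\mathrm{d}a\, f(t)\,\mathrm{d}t$ and simplify. The crux is entirely in step (i).

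For step (i), Assumption \ref{a:ls} gives $Y = k_0(A) + \varepsilon$, so by linearity $\E{Y \mid Z} = \E{k_0(A) \mid Z} + \E{\varepsilon \mid Z}$; and since $T = \tau(Z)$ is $\sigma(Z)$-measurable, $\E{h^{\star}(A,T) \mid Z} = \E{k_0(A) \mid Z} + \E{\varepsilon \mid T}$. Hence the existence claim reduces to the single identity $\E{\varepsilon \mid Z} = \E{\varepsilon \mid T}$. First, the mean-independence in Assumption \ref{a:ls} and the tower property give $\E{\varepsilon \mid Z} = \E{\E{\varepsilon \mid Z,U} \mid Z} = \E{\E{\varepsilon \mid U} \mid Z}$. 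Writing $m(U) := \E{\varepsilon \mid U}$, which is $\sigma(U)$-measurable, and using that $U \CI Z \mid T$ (this is the index-sufficiency Assumption \ref{a:riv}.\ref{a:iv-index}; as described in Section \ref{sec:confounding}, such a $\tau$ is also identifiable from the proxies via Lemma \ref{l:sc-2}, which is why \ref{a:riv}.\ref{a:pl} is among the hypotheses) together with $\sigma(T) \subseteq \sigma(Z)$, we get $\E{m(U) \mid Z} = \E{m(U) \mid Z,T} = \E{m(U) \mid T}$. A second application of the tower property — $\E{\varepsilon \mid U,T} = \E{\E{\varepsilon \mid U,Z} \mid U,T} = \E{\E{\varepsilon \mid U} \mid U,T} = m(U)$, since $\sigma(U,T) \subseteq \sigma(U,Z)$ — gives $\E{\varepsilon \mid T} = \E{m(U) \mid T}$, so the two expressions coincide and $h^{\star}$ solves the equation. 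Along the way, $\varepsilon \in \mathcal{L}_2$ (inherited from $Y, k_0 \in \mathcal{L}_2$) ensures $\E{\varepsilon \mid T} \in \mathcal{L}_2(T)$ and hence $h^{\star} \in \mathcal{L}_2(A,T)$.

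For step (ii), take any solution $h \in \mathcal{L}_2(A,T)$ and set $g := h - h^{\star} \in \mathcal{L}_2(A,T)$; then $\E{g(A,T) \mid Z} = \E{h(A,T) \mid Z} - \E{h^{\star}(A,T) \mid Z} = \E{Y \mid Z} - \E{Y \mid Z} = 0$, so completeness of $Z$ for $(A,T)$ (Assumption \ref{a:riv}.\ref{a:iv-rel}) forces $g = 0$, i.e. $h(A,T) = k_0(A) + \E{\varepsilon \mid T}$ almost surely. For step (iii), plug this in and use $Y(a) = k_0(a) + \varepsilon$ with $\varepsilon$ not depending on $a$, Fubini, and the law of iterated expectations: both $\int_{\mathcal{T}}\int_{\mathcal{A}} h(a,t)\pi(a)\,\mathrm{d}a\, f(t)\,\mathrm{d}t$ and (the outer expectation of) $\int_{\mathcal{A}} Y(a)\pi(a)\,\mathrm{d}a$ reduce to $\int_{\mathcal{A}} k_0(a)\pi(a)\,\mathrm{d}a + \big(\int_{\mathcal{A}}\pi\big)\,\E{\varepsilon}$, so $\theta$ is identified.

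The main obstacle is step (i): the identity $\E{\varepsilon \mid Z} = \E{\varepsilon \mid T}$ needs careful measure-theoretic bookkeeping, chaining the conditional-mean-independence of Assumption \ref{a:ls} with the conditional independence $U \CI Z \mid T$ while repeatedly exploiting that $T$ is a deterministic function of $Z$. Once that identity is established, verifying that $h^{\star}$ solves the equation, the uniqueness argument via completeness, and the final integral manipulation are all routine.
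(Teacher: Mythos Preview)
Your proposal is correct and follows essentially the same route as the paper's proof: establish that $h^\star(A,T)=k_0(A)+\E{\varepsilon\mid T}$ solves the integral equation by reducing to the identity $\E{\varepsilon\mid Z}=\E{\varepsilon\mid T}$ via the chain $\E{\varepsilon\mid Z}=\E{\E{\varepsilon\mid U}\mid Z}=\E{\E{\varepsilon\mid U}\mid T}=\E{\varepsilon\mid T}$, then invoke completeness \ref{a:riv}.\ref{a:iv-rel} for uniqueness, and finally integrate out $T$. Your measure-theoretic bookkeeping is in fact slightly more explicit than the paper's (you spell out why $\E{\varepsilon\mid T}=\E{m(U)\mid T}$ and keep the $\E{\varepsilon}$ term rather than silently setting it to zero), but the argument is the same.
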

Theorem \ref{th:ls-id} establishes point identification of the function $k_0$ of the effect of the observable treatment $A$ on outcome $Y$. While we do not make this explicit, $k_0$ may also be a function of the proxies $W$ or other observed covariates $X$. The linear separability in combination with the completeness assumption leads to a straightforward identification in the linearly separable model. Unlike in \cite{tien2022icc}, identification of an average structural function when there are interactions of the observables and unobservable $U$ is much more difficult in  this model where the proxies $W$ may also have a direct effect on treatment $A$.

We could have considered other model specifications or versions of completeness to establish identification \citep{d2011complete}. For now, we leave this exercise for future work.

\subsection{First stage monotonicity}    \label{ssec:mono}
If the outcome model is not linearly separable in treatment and disturbance, monotonicity in the first stage reduced form is an alternative assumption to identify average causal (structural) effects \citep{imbens2009}. If the common confounders $U$ were observed, there would be a simple control function for the endogenous variation in $A$ due to monotonicity. Assumption \ref{a:monotonicity} describes the necessary first stage reduced form monotonicity. 
\begin{assumption}[Monotonicity] \label{a:monotonicity} 
\begin{align}
A &= h(Z, \eta)
\end{align}
\begin{enumerate}
\item $h(Z, \eta)$ is strictly monotonic in $\eta$ with probability 1.  \label{a:mon-1}
\item $\eta$ is a continuously distributed scalar with a strictly increasing conditional CDF $F_{\eta | U}$ on the conditional support of $\eta$.  \label{a:mon-2}
\item $Z \CI \eta \ | \ U$.  \label{a:mon-3}
\end{enumerate}
\end{assumption}
Assumption \ref{a:monotonicity}.\ref{a:mon-1} describes the strict montonicity of $A$ in the disturbance $\eta$. This disturbance $\eta$ is scalar and continuously distributed conditional on the unobservable $U$, with a strictly increasing conditional CDF according to assumption \ref{a:monotonicity}.\ref{a:mon-2}. Jointly, these two assumptions ensure that for any given $Z$, any $A$ is associated with a unique $\eta$. The unobserved confounders $U$ may affect $A$, but only through their effect on $\eta$. This restriction keeps the model monotonous in the unobservables to ensure point identification. Finally, assumption \ref{a:monotonicity}.\ref{a:mon-3} requires full independence of instruments $Z$ and $\eta$ conditional on the common confounders $U$.

The above setup does not immediately help with identification, because the common confounders $U$ are always unobserved. In lemma \ref{l:7}, we establish a few useful facts about the conditional distribution of the scalar disturbance $\eta$ given $T$. Notably, this conditional distribution $F_{\eta | T}$ is also strictly increasing on the conditional support of $\eta$, and unsurprisingly the instruments $Z$ are independent from $\eta$ conditional on $T$. 
\begin{lemma} \label{l:7}
\begin{align*}
F_{\eta | T} &\coloneqq \int_{\mathcal{U}} F_{A | Z, U}(A, Z, u) f_{U | T}(u, T) \dif \mu_U(u)
\end{align*}
 is a strictly increasing CDF on the conditional support of $\eta$, and $Z \CI \eta \ | \ T$.
\end{lemma}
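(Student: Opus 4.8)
The plan is to verify the two claims of Lemma \ref{l:7} separately, with the strict monotonicity of the conditional CDF being the more delicate part.

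First I would establish that $Z \CI \eta \mid T$. Since $T = \tau(Z)$ is a deterministic function of $Z$, conditioning on $Z$ already conditions on $T$, so it suffices to show $Z \CI \eta \mid T$ in the sense that $f_{\eta \mid Z, T}(\cdot, Z, T) = f_{\eta \mid T}(\cdot, T)$ almost surely. By Assumption \ref{a:monotonicity}.\ref{a:mon-3} we have $f_{\eta \mid Z, U} = f_{\eta \mid U}$, and by Lemma \ref{l:sc-2} (applied under Assumptions \ref{a:riv}.\ref{a:pl-excl} and \ref{a:riv}.\ref{a:pl-rel}, using that $\tau$ is chosen so that $W \CI Z \mid T$) we have $U \CI Z \mid T$. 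Then I would integrate out $U$: writing $f_{\eta \mid Z, T}(\cdot) = \int_\mathcal{U} f_{\eta \mid Z, U, T}(\cdot) f_{U \mid Z, T}(u, Z, T)\, \dif\mu_U(u)$, the first factor reduces to $f_{\eta \mid U}$ because $\eta \CI Z \mid U$ (and $T$ adds nothing beyond $Z$), and the second factor equals $f_{U \mid T}(u, T)$ by the conditional independence $U \CI Z \mid T$. The result is free of $Z$, giving $Z \CI \eta \mid T$, and in particular the stated integral formula for $F_{\eta \mid T}$ is just the CDF of this mixture.

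Next I would show $F_{\eta \mid T}$ is a strictly increasing CDF on the conditional support of $\eta$ given $T$. That it is a valid CDF is immediate since it is a mixture (over $u$, with mixing density $f_{U \mid T}$) of the valid CDFs $F_{\eta \mid U}(\cdot, u)$ — monotone, right-continuous, with the correct limits, all preserved under the mixture. For strict monotonicity: fix $t$ and take $\eta_1 < \eta_2$ both in the conditional support of $\eta$ given $T = t$. I would argue that there is a positive-$f_{U\mid T=t}$-measure set of $u$ for which the interval $(\eta_1, \eta_2)$ intersects the conditional support of $\eta$ given $U = u$, and on which, by Assumption \ref{a:monotonicity}.\ref{a:mon-2}, $F_{\eta \mid U}(\eta_2, u) - F_{\eta \mid U}(\eta_1, u) > 0$; integrating against $f_{U \mid T=t}$ then yields a strictly positive increment. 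The key observation making this work is that the conditional support of $\eta$ given $T = t$ is (the closure of) the union over $u \in \mathrm{supp}(f_{U \mid T=t})$ of the conditional supports of $\eta$ given $U = u$, so any point strictly inside the former must be "covered" — with positive mixing mass — by the latter.

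The main obstacle I anticipate is this last support-covering argument: I need to rule out the degenerate possibility that $(\eta_1,\eta_2)$ lies in the conditional support given $T=t$ yet, for $f_{U\mid T=t}$-almost every $u$, the conditional CDF $F_{\eta\mid U}(\cdot,u)$ is flat on $(\eta_1,\eta_2)$ (which could happen if the supports given individual $u$ were disjoint intervals whose union only contains $(\eta_1,\eta_2)$ in its closure). Handling this cleanly requires either assuming the conditional support of $\eta$ given $U=u$ is an interval that varies measurably and "connectedly" with $u$, or invoking the strictly-increasing-CDF hypothesis of Assumption \ref{a:monotonicity}.\ref{a:mon-2} at the level of the mixture via a dominated-convergence / Lebesgue-differentiation argument showing the mixture density is positive a.e. on the support. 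I would present the argument under the natural reading that "strictly increasing conditional CDF on the conditional support" means the support is an interval and the CDF has no flat portions there, which makes the covering step routine; the remaining bookkeeping (right-continuity, limits, measurability of $u \mapsto F_{\eta\mid U}(\cdot,u)$) is standard and I would not belabor it.
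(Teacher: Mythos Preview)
Your conditional-independence argument is essentially the paper's: both factor through $U$ and use $\eta \CI Z \mid U$ together with $U \CI Z \mid T$; you work with the conditional density $f_{\eta\mid Z,T}$ while the paper factors the joint $f_{Z,\eta\mid T}$, but the key steps are identical. One minor point: the paper invokes index sufficiency (Assumption~\ref{a:riv}.\ref{a:iv-index}) directly for $U\CI Z\mid T$, whereas you route through Lemma~\ref{l:sc-2}; both are fine.

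For strict monotonicity, the approaches genuinely differ. The paper simply differentiates under the integral,
\[
\partial_e F_{\eta\mid T}(e)\big|_{e=\eta}=\int_{\mathcal{U}} \partial_e F_{\eta\mid u}(e)\big|_{e=\eta}\, f_{U\mid T}(u,T)\,\dif\mu_U(u)>0,
\]
asserting the integrand is positive ``$\forall u,\eta$''. This is quick but imprecise: $\partial_e F_{\eta\mid u}(e)>0$ only for $\eta$ in the support of $\eta\mid U=u$, so one still needs exactly the support-covering observation you spell out---that a point in the $T$-conditional support is covered with positive $f_{U\mid T}$-mass by the $U$-conditional supports. Your increment-based argument makes this explicit and avoids assuming differentiability of the mixture; the paper's route is shorter but silently relies on the same covering fact you identify as the delicate step. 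Your worry about disconnected supports is legitimate in general, and your proposed resolution (reading Assumption~\ref{a:monotonicity}.\ref{a:mon-2} as implying interval support) is the natural one; the paper does not address this at all.
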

The above lemma \ref{l:7} implies that $F_{\eta | T}(\eta)$ is a one-to-one function of $\eta$ conditional on $T$, just like $F_{\eta | U}(\eta)$ conditional on $U$. This fact is useful, because it is no longer necessary to condition on the unobservable $U$ to identify the endogenous variation in $A$, which is $\eta$, exactly. Instead, if we can identify $F_{\eta | T}(\eta)$, $\eta$ is held fixed as long as $(F_{\eta | T}(\eta), T)$ is held fixed. The remaining difficulty is to identify $F_{\eta | T}(\eta)$. In this regard, theorem \ref{th:8} states that $F_{\eta | T}(\eta)$ is equal to the conditional CDF of $A$ given $Z$. This conditional CDF is defined as $V_T$ in equation \ref{eq:vt}. 
\begin{theorem} \label{th:8}
Let 
\begin{align}
V_T &\coloneqq F_{A | Z; T}(A, Z).   \label{eq:vt}
\end{align}
Under assumption \ref{a:monotonicity}, $V_T = F_{\eta | T}(\eta)$, and 
\begin{align} 
A \CI Y(a) \ | \ (V_T, T), \text{ for all } a \in \mathcal{A}.
\end{align}
\end{theorem}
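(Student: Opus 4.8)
The plan is to prove the two claims of Theorem \ref{th:8} in sequence, relying on Lemma \ref{l:7}. First I would establish $V_T = F_{\eta\mid T}(\eta)$. By Assumption \ref{a:monotonicity}.\ref{a:mon-1}, $A = h(Z,\eta)$ is strictly monotonic in $\eta$, so for fixed $Z$ the event $\{A \le a\}$ coincides with either $\{\eta \le h^{-1}(Z,a)\}$ or $\{\eta \ge h^{-1}(Z,a)\}$ depending on the direction of monotonicity; take the increasing case without loss of generality. Then $F_{A\mid Z;T}(a,Z) = \operatorname{\mathbb{P}}[\eta \le h^{-1}(Z,a)\mid Z, T]$. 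By Lemma \ref{l:7}, $Z \CI \eta \mid T$, so this conditional probability does not depend on $Z$ beyond what is carried by $T$, i.e. it equals $F_{\eta\mid T}(h^{-1}(Z,a))$. Evaluating at $a = A = h(Z,\eta)$ gives $h^{-1}(Z,A) = \eta$, hence $V_T = F_{\eta\mid T}(\eta)$. This also matches the integral expression for $F_{\eta\mid T}$ in Lemma \ref{l:7} by iterated expectations over $U$ given $T$, using $Z \CI \eta \mid U$ from Assumption \ref{a:monotonicity}.\ref{a:mon-3}.

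Next I would prove the conditional independence $A \CI Y(a) \mid (V_T, T)$. The key observation is that, by Lemma \ref{l:7}, $F_{\eta\mid T}$ is strictly increasing on the conditional support of $\eta$, so conditioning on $(V_T, T) = (F_{\eta\mid T}(\eta), T)$ is equivalent to conditioning on $(\eta, T)$: the map $\eta \mapsto (F_{\eta\mid T}(\eta), T)$ is injective given $T$. Hence it suffices to show $A \CI Y(a) \mid (\eta, T)$. But given $(Z,\eta)$, $A = h(Z,\eta)$ is deterministic, so all residual randomness in $A$ once $\eta$ is fixed comes from $Z$; thus the claim reduces to showing that $Z$ (the part of it not already in $T$) is independent of $Y(a)$ given $(\eta, T)$.

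For that final step I would combine the exogeneity-type assumptions. Under SUTVA (Assumption \ref{a:riv}.\ref{a:sutva}) and the relaxed exogeneity $Y(a) \CI Z \mid U$ (Assumption \ref{a:riv}.\ref{a:iv-exog}), together with Lemma \ref{l:sc-2}'s conclusion $U \CI Z \mid T$ — which applies since the hypotheses of Theorem \ref{th:8} include the proxy conditions \ref{a:riv}.\ref{a:pl} and the index-sufficiency machinery — one gets $Y(a) \CI Z \mid T$, and then needs to upgrade this to hold also conditional on $\eta$. Here one uses $Z \CI \eta \mid U$ and the fact that, given $U$, the triple $(Y(a), Z, \eta)$ factors appropriately: $Y(a)$ depends on the unobservables only through $U$, $\eta$ is conditionally independent of $Z$ given $U$, so a graphoid/contraction argument yields $Y(a) \CI (Z,\eta) \mid U$, hence $Y(a) \CI Z \mid (\eta, U)$, and integrating out $U$ against $f_{U\mid T}$ (valid because $U \CI Z \mid T$ and, via Lemma \ref{l:7}, the conditional law of $U$ given $(\eta,T)$ is still pinned down by $T$) delivers $Y(a) \CI Z \mid (\eta, T)$. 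Translating back through the deterministic relation $A = h(Z,\eta)$ gives $A \CI Y(a) \mid (\eta, T)$, hence $A \CI Y(a) \mid (V_T, T)$.

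The main obstacle I expect is the bookkeeping in the last step: carefully justifying that conditioning on $\eta$ (equivalently $V_T$) does not reintroduce dependence between $Z$ and $Y(a)$ through $U$ — i.e. verifying that the conditional distribution of $U$ given $(\eta, T)$ does not vary with $Z$. This requires combining $U \CI Z \mid T$ with $\eta \CI Z \mid U$ in the right order (a contraction-type manipulation of conditional independences), and being careful that $\eta$ is a pre-treatment disturbance so that $Y(a)$'s dependence on the unobservables is fully captured by $U$. The monotonicity/strict-increase facts from Lemma \ref{l:7} are what make the change of conditioning variable from $\eta$ to $V_T$ rigorous, so those should be invoked explicitly rather than treated as obvious.
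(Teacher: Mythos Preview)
Your proposal is correct and follows the same strategy as the paper: establish $V_T = F_{\eta\mid T}(\eta)$ via monotonicity and conditional independence, invoke Lemma~\ref{l:7} for the $\sigma$-algebra equivalence $(V_T,T)\leftrightarrow(\eta,T)$, then argue that the residual variation in $A$ given $(\eta,T)$ comes only from exogenous $Z$. The presentational differences are minor. For the first claim the paper goes through $U$ explicitly---computing $F_{A\mid Z,U}(A,Z,u)=F_{\eta\mid U}(\eta)$ from Assumption~\ref{a:monotonicity} and then integrating against $f_{U\mid T}$ using $Z\CI U\mid T$---whereas you cite $Z\CI\eta\mid T$ from Lemma~\ref{l:7} directly, which is more economical since that lemma already packages the needed independence. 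For the second claim the paper writes out a density factorization, expressing $f_{Y(a),A\mid V_T,T}$ as an integral over $U$ and showing the integrand splits into $f_{Y(a)\mid V_T,T,U}\cdot f_{A\mid V_T,T}$, rather than chaining graphoid axioms as you sketch; the density calculation is more concrete and sidesteps some of the contraction bookkeeping, but the two presentations encode exactly the same conditional-independence content. The subtlety you flag---that conditioning on $\eta$ must not reintroduce dependence between $Z$ and $Y(a)$ through $U$---is precisely what the paper's underbraced identities $f_{Y(a)\mid A,V_T,T,U}=f_{Y(a)\mid V_T,T,U}$ and $f_{A\mid V_T,T,U}=f_{A\mid V_T,T}$ assert, so your instinct to justify that step carefully is well placed.
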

Theorem \ref{th:8} states that despite the unobservable common confounder $U$, there exist the observable control functions $V_T$ and $T$ conditional on which we retrieve unconfoundedness. Specifically, we retrieve unconfoundedness because all variation in treatment $A$ stems from instruments $Z$ once we condition on $V_T$ and $T$. Fortunately, conditional on $T$, the instruments $Z$ are fully exogenous. So far, our arguments have only been with respect to exogeneity, not yet relevance.

To describe relevance, we use a common support assumption \ref{a:common-support}, with focus on a causal effect of interest 
\begin{align*}
\theta_0 \coloneqq  \int_\mathcal{A} Y(a) \pi(a) \dif \mu_A(a)
.\end{align*}
This common support assumption requires the sufficient relevance of instruments $Z$ for treatment $A$, and sufficient variation in $Z$, both conditional on $T$. In slightly different words, after holding all variation in $Z$ associated with $U$ fixed through $T$, the variation in $Z$ must still be sufficiently rich and relevant for $A$.
\begin{assumption}[Common Support] \label{a:common-support}
For all $a \in \mathcal{A}$, where the contrast function is non-zero ($\pi(a) \neq 0$), the support of $(V_T, T)$ equals the support of $(V_T, T)$ conditional on $A$.
\end{assumption}
With the common support assusmption \ref{a:common-support}, average causal quantities $\theta_0$ in our model are identified under monotonicity (assumption \ref{a:monotonicity}). In theorem \ref{th:average-id}, we explicitly replace the completeness assumption in assumption \ref{a:riv}.\ref{a:iv-rel} by the common support assumption \ref{a:common-support}, which is the correct relevance requirement with a control function. 
\begin{theorem}[Average causal quantity identification]  \label{th:average-id}
Suppose assumption \ref{a:riv}.(\ref{a:sutva}/\ref{a:iv-exog}/\ref{a:iv-index}/\ref{a:pl}) [relaxed IV model], \ref{a:monotonicity} [monotonicity], and \ref{a:common-support} [common support] hold. Then, any $\theta_0 \coloneqq \int_{\mathcal{A}} Y(a) \pi(a) \dif \mu_A(a)$ is identified as 
\begin{align*}
\theta_0 = \int_{\mathcal{V}_T, {\mathcal{Z}^\tau}} \int_{\mathcal{A}} \E{Y | A=a, (V_T, T)=(v_T, t)} \pi(a) \dif \mu_A(a) \dif F_{V_T, T}(v_T, t).
\end{align*}
\end{theorem}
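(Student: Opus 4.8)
The plan is to read $\theta_0$ as its expectation $\int_{\mathcal{A}}\E{Y(a)}\,\pi(a)\,\dif\mu_A(a)$ and then to establish, for each fixed $a$, a backdoor-type adjustment identity in the observable control variables $(V_T,T)$. First I would apply the law of iterated expectations with respect to $(V_T,T)$ to write $\E{Y(a)} = \int_{\mathcal{V}_T\times\mathcal{Z}^\tau}\E{Y(a)\mid (V_T,T)=(v_T,t)}\,\dif F_{V_T,T}(v_T,t)$. Next I would invoke the unconfoundedness delivered by Theorem~\ref{th:8}, namely $A\CI Y(a)\mid (V_T,T)$, to replace $\E{Y(a)\mid (V_T,T)=(v_T,t)}$ by $\E{Y(a)\mid A=a,\,(V_T,T)=(v_T,t)}$ at every point $(v_T,t)$ in the support of $(V_T,T)$ given $A=a$. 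This is exactly where Assumption~\ref{a:common-support} enters: for every $a$ with $\pi(a)\neq 0$ that support coincides with the unconditional support of $(V_T,T)$, so the replacement is valid at every point carrying mass under $\dif F_{V_T,T}$; for $a$ with $\pi(a)=0$ the term drops out of the outer integral and can be ignored.

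Then SUTVA (Assumption~\ref{a:riv}.\ref{a:sutva}), together with the exclusion content of Assumption~\ref{a:riv}.\ref{a:iv-exog} that collapses $Y(a,z)$ to $Y(a)$, identifies $\E{Y(a)\mid A=a,\,(V_T,T)=(v_T,t)}$ with the observed regression $\E{Y\mid A=a,\,(V_T,T)=(v_T,t)}$. Substituting back and applying Fubini's theorem to interchange the integral over $\mathcal{A}$ with the integral over $\mathcal{V}_T\times\mathcal{Z}^\tau$ (justified by integrability of $Y$ and finiteness of $\pi$ on its support) yields precisely the claimed formula $\theta_0 = \int_{\mathcal{V}_T,\mathcal{Z}^\tau}\int_{\mathcal{A}}\E{Y\mid A=a,(V_T,T)=(v_T,t)}\,\pi(a)\,\dif\mu_A(a)\,\dif F_{V_T,T}(v_T,t)$.

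The step I expect to be the main obstacle is not the measure-theoretic bookkeeping but making precise that conditioning jointly on $A=a$ and on $V_T=F_{A\mid Z;T}(A,Z)=v_T$ is a non-degenerate event with a well-defined conditional mean. Since $V_T$ is itself a function of $A$ and $Z$, one must argue that holding $(V_T,T)$ fixed still leaves variation in $Z$ that drives $A$, so that $\{A=a,(V_T,T)=(v_T,t)\}$ genuinely slices the population rather than collapsing to a null event; this is secured by the strict monotonicity of the conditional CDF $F_{\eta\mid T}$ from Lemma~\ref{l:7} (which makes $V_T$ a proper generalized residual holding $\eta$ fixed) combined with Assumption~\ref{a:common-support}, and I would spell out that link explicitly rather than treat it as automatic. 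The remaining details — exchangeability of integrals, and the fact that $F_{V_T,T}$ is the relevant mixing measure — are routine once this point is settled.
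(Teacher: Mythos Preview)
Your proposal is correct and takes essentially the same approach as the paper's proof: apply Theorem~\ref{th:8} for $A\CI Y(a)\mid(V_T,T)$, use common support to insert the conditioning on $A=a$, invoke SUTVA to pass from $Y(a)$ to $Y$, and integrate out $(V_T,T)$. The paper's only cosmetic variation is that it introduces a structural representation $Y(a)=g(a,\varepsilon)$ and transfers the conditional independence to the disturbance $\varepsilon$ before integrating, but the logic is identical.
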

We simply integrate out the control functions $(V_T, T)$ without dependence on treatment $A$ to obtain the causal quantity of interest $\theta_0$. Often, $\theta_0$ will be some form of average treatment effect. 

Other functions of interest than the above (weighted) averages of potential outcomes, e.g. quantile structural functions, are also identified as a consequence of theorem \ref{th:8}, but require corresponding common support assumption which will differ from assumption \ref{a:common-support}.

\section{Linear Model}   \label{sec:linear}
In this section, we explain identification in the common confounding model in linear terms. Apart from the illustrative purpose of linear models, their tractability and ease of use make them attractive. For the common confounding IV approach, the linear model provides useful intuition regarding the relevance and exogeneity assumptions.

First, we describe the model assumptions in linear form. The outcome variable $Y \in \mathbb{R}$ is one-dimensional. For ease of notation, we let $A \in \mathbb{R}$ be one-dimensional too. All other variables $X$ are of some general dimensions $d_X$, i.e. $Z \in \mathbb{R}^{d_Z}$, $W \in \mathbb{R}^{d_W}$, and $U \in \mathbb{R}^{d_U}$. As previously, instruments are called $Z$, proxies $W$, and the unobserved common confounders $U$. 
\begin{align}
Y &= A \beta + U \gamma_{Y} + W \upsilon_{Y} + \varepsilon_Y, & \E{\varepsilon_Y^\intercal (Z, U, W)} &= \mathbf{0}   \label{eq:lin-y} \\  
A &= Z \zeta + U \gamma_{A} + W \upsilon_{A} + \varepsilon_{A}, & \E{\varepsilon_A^{\intercal} (Z, U, W)} &= \mathbf{0}  \label{eq:lin-a}
\end{align}
Equation \ref{eq:lin-y} expresses $Y$ as a linear function of $A$, $U$, $W$ and a disturbance $\varepsilon_Y$. The disturbances $\varepsilon_Y$ is uncorrelated from the instruments $Z$. With this moment equation, the model parameters would be identified under a conditional relevance requirement of $Z$ for $A$ if $U$ were observed. The parameter vector of interest in this model is $\beta$, the effect of treatment $A$ on $Y$. In equation \ref{eq:lin-a}, $A$ is a linearly projected on $Z$, $U$, $W$, and a disturbance $\varepsilon_A$. The $(d_Z \times d_A)$-dimensional parameter matrix $\zeta$ describes the marginal linear effect of $Z$ on $A$. Equation \ref{eq:lin-a} for $A$ is the model's first stage. The conditional relevance requirement of $Z$ for $A$ would simply be $\operatorname{rank}(\zeta) = d_A$, if $U$ were observed. With observable $U$, the model would be sufficiently described at this point to point identify $\beta$. As the common confounders $U$ are never observed, the model requires further assumptions.
\begin{align}
Z &= U \gamma_{Z} + \varepsilon_{Z}, & \E{\varepsilon_Z^\intercal (U, W)} &= \mathbf{0}  \label{eq:lin-z} \\
W &= U \gamma_{W} + \varepsilon_W, & \E{\varepsilon_W^\intercal (U, Z)} &= \mathbf{0}  \label{eq:lin-w}
\end{align}
Equations \ref{eq:lin-z} and \ref{eq:lin-w} imply that all correlation between $Z$ and $W$ stems from the unobserved common confounders $U$. There is no direct effect from either on the other. If there were, we could model this by increasing the dimension of $U$ by the corresponding element of $Z$ or $W$ until $Z$ and $W$ are uncorrelated conditional on $U$. Consequently, the covariance matrix for $Z$ and $W$ is 
\begin{align*}
\Sigma_{ZW} &= \gamma_Z^\intercal \Sigma_U \gamma_W, & \operatorname{rank}(\Sigma_{ZW}) &= d_U.
\end{align*}

\subsection{First projection}
To deal with the unobservedness of $U$, we use two additional projections compared to standard IV. First, we project $W$ on $Z$ as 
\begin{align*}
W &= Z \Sigma_Z^{-1} \Sigma_{ZW} + \tilde{\varepsilon}_W = \gamma_Z^\intercal \Sigma_U \gamma_W + \tilde{\varepsilon}_W, & \E{\tilde{\varepsilon}_W Z} &= \mathbf{0}.
\end{align*}
Then, choose any $C_Z \in \mathbb{R}^{d_Z \times d_U}$ and $C_W \in \mathbb{R}^{d_W \times d_U}$ such that $C_Z C_W^\intercal = \Sigma_{ZW}$ to create the control variable 
\begin{align*}
T &\coloneqq Z \Sigma_Z^{-1} C_Z, & \Sigma_{ZT} &= C_Z, & \Sigma_T &= C_Z^\intercal \Sigma_{Z}^{-1} C_Z.
\end{align*}
The control variable $T$ is a lower-dimensional representation of $Z$, which captures all correlation between $Z$ and $W$.

\subsubsection{Why can this projection help deconfound $Z$?}
All endogeneity in instruments $Z$ stems from their correlation with $U$. Hence, if the linear projection of $U$ on $Z$, $Z \Sigma_Z^{-1} \Sigma_{Z U}$, is unchanged for some linear combinations of $Z$, those linear combinations of $Z$ become excluded instruments. Assumption \ref{a:lin-rank-w} describes the necessary relevance condition for proxies $W$ with respect to $U$. 
\begin{assumption}[Relevance conditios for $\gamma_W$]  \label{a:lin-rank-w}
\begin{align}
d_W &\geq d_U   \label{eq:lin-rank-w}
\end{align}
\end{assumption}
As long as the proxies $W$ are relevant for $U$ in form of assumption \ref{a:lin-rank-w}, the endogeneity inducing unobserved covariance $\Sigma_{Z U}$ is proportional to the observable covariance $\Sigma_{Z W}$. As the rank of $\Sigma_{Z W}$ is $d_U$, keeping the $d_U$-dimensional control variable $T = Z \Sigma_Z^{-1} C_Z$ fixed restores instrument exogeneity in $Z$. Of course, $d_U$ is the minimal sufficient dimension to construct an exogeneity restoring control variable $T$. Larger dimensions of $T$ are similarly admissible choices to restore instrument exogeneity, but might complicate relevance.

\subsection{Second projection}
In the second projection step, the instruments are orthogonalised with respect to the control variable $T$ and thus with respect to the linear projection of $U$ on $Z$. The orthogonalised instrument $\tilde{Z}$ is constructed as
\begin{align*}
\tilde{Z} &= Z \underbrace{\left( I_{d_Z} - \Sigma_Z^{-1} C_Z \left( C_Z^\intercal \Sigma_Z^{-1} C_Z \right)^{-1} C_Z^\intercal \right)}_{\coloneqq M_{ZW}, \text{ note that } C_Z^\intercal M_{ZW} = \mathbf{0}} D_Z,
\end{align*}
for some $D_Z \in \mathbb{R}^{d_U \times d_{\tilde{Z}}}$ such that $\operatorname{rank} \left( \left( I_{d_Z} - \Sigma_T^{-1} \Sigma_{TZ} \right) D_Z \right) = d_{\tilde{Z}} \geq d_A$. Importantly, the orthogonalised instrument $\tilde{Z}$ is uncorrelated with $U$ and $W$:
\begin{align*}
\Sigma_{W \tilde{Z}} = \Sigma_{WZ} M_{ZW} D_Z &= C_W \underbrace{C_Z^\intercal M_{ZW}}_{=0} D_Z = 0, \\
\Sigma_{U \tilde{Z}} = \Sigma_{UZ} M_{ZW} D_Z &= \left( \gamma_W \gamma_W^\intercal \right)^{-1} \gamma_W \Sigma_{WZ} M_{ZW} D_Z \\
&= \left( \gamma_W \gamma_W^\intercal \right)^{-1} \gamma_W C_W \underbrace{C_Z^\intercal M_{ZW}}_{=0} D_Z = 0.
\end{align*}
Finally, using equations \ref{eq:lin-y} and \ref{eq:lin-a} it is easy to show that $\beta$ can be written in a familiar form:
\begin{align}
\beta &= \E{ A^\intercal P_{\tilde{Z}} A}^{-1} \E{A^\intercal P_{\tilde{Z}} Y}, \label{eq:lin-beta} \\
P_{\tilde{Z}} &\coloneqq \tilde{Z} \left( \tilde{Z}^\intercal \tilde{Z} \right)^{-1} \tilde{Z}^\intercal = Z M_{ZW} D_Z \left( D_Z^\intercal \Sigma_Z M_{ZW} D_Z \right)^{-1} D_Z^\intercal M_{ZW}^\intercal Z^\intercal.
\end{align}



\subsection{Enhanced instrument relevance condition}
With the conditional exogeneity of $Z$ sufficiently discussed, the focus shifts to the conditional relevance of $Z$ for $A$. 

Given that at least $d_U$ dimensions of variation in $Z$ are held fixed, there can be spare variation in $Z$ to identify the linear parameter vector $\beta$. A relevant condition for this is $d_Z - d_U \geq d_A$. After keeping the $d_U$-dimensional variation in $\E{W | Z}$ fixed, the expected predicted values of treatment $A$ given instruments $Z$, $\E{Z \zeta | \E{W | Z}}$, must be non-degenerate. The rank condition for any $d_A$ is described in assumption \ref{a:lin-rank-z}.
\begin{assumption}[Rank condition]  \label{a:lin-rank-z}
For some $C_Z \in \mathbb{R}^{d_Z \times d_U}$ and $C_W \in \mathbb{R}^{d_W \times d_U}$ such that $C_Z C_W^\intercal = \Sigma_{ZW}$, and $D_Z \in \mathbb{R}^{d_U \times d_{\tilde{Z}}}$,
\begin{align}
\operatorname{rank}\left(\E{A^\intercal P_{\tilde{Z}} A }\right) = d_A,  \label{eq:lin-rank-z}
\end{align}
where 
\begin{align*}
P_{\tilde{Z}} &\coloneqq Z M_{ZW} D_Z \left( D_Z^\intercal \Sigma_Z M_{ZW} D_Z \right)^{-1} D_Z^\intercal M_{ZW}^\intercal Z^\intercal, \\
M_{ZW} &\coloneqq I_{d_Z} - \Sigma_Z^{-1} C_Z \left( C_Z^\intercal \Sigma_Z^{-1} C_Z \right)^{-1} C_Z^\intercal.
\end{align*}
\end{assumption}
$d_U$ dimensions of variation in $Z$ are lost by conditioning on $\E{W | Z}$. The remaining variation in $Z$ after this conditioning step must be sufficiently relevant for $A$.

\section{Semiparametric estimation}  \label{sec:semiparametric}
With its transparency, the linear model sheds light on the assumptions in IV with common confounders. The intuition for relevance and exogeneity assumptions in the linear model carries on to nonlinear models, specifically the idea of a pre-IV control function. In the linear model, instrument variation is used conditional on the $d_U$-dimensional control variable $T$, which is derived from the correlation of $Z$ and $W$. In nonlinear settings, the control function is a general $\tau \in \mathcal{T} \subseteq \mathcal{L}_2(Z)$, which serves the same purpose: to render the instruments $Z$ conditionally independent from the proxies $W$ and hence from the unobserved common confounders $U$. Exogeneity of the instruments $Z$ conditional on this control function is the desired consequence.

Generally, we would want to identify control function $\tau$ via the integral equation
\begin{align}  
\E{g_0(Z) | W} &= \E{\tau(Z) | W}  \label{eq:nonlinear-bridge}
\end{align}
for some function $g_0 \in \mathcal{G} \subseteq \mathcal{L}_2(Z)$ such that $g_0(Z) \in \mathcal{Z}^g_0$. It is worth considering the behaviour of equation \ref{eq:nonlinear-bridge} for two extreme cases. 

First, consider $Z \CI W$, the unconditional independence of instruments and proxies. In the unconditional independence setting, no information in $Z$ must be held fixed to recover exogeneity. Formally, it follows that $\E{g_0(Z) | W} = \E{g_0(Z)}$. The function $\tau$ does not require its input $Z$, and can simply be set to $\tau = \E{g_0(Z)}$. 

Secondly, one may note that one extreme solution always solves equation \ref{eq:nonlinear-bridge}, that solution being $\tau(Z) = g_0(Z)$. Intuitively, the remaining variation in $Z$ has to be exogenous when the entirety of $Z$ is held fixed, because there simply is no remaining variation in $Z$.

Any interesting solution will be non-extreme, in the sense that $Z \CI W \ | \ U$ for unobservable $U$, and $\tau \neq g_0$. 
The relevance of that spare information for treatment $A$ is generally testable, as in standard IV conditional on covariates.

Let $\mathcal{T}_\text{exog}$ be the set of solutions to \ref{eq:nonlinear-bridge} for a given function $g_0 \in \mathcal{G}$ , i.e.
\begin{align}
\mathcal{T}_\text{exog} &= \left\{ \tau \in \mathcal{T}: \E{g_0(Z) | W} = \E{\tau(Z) | W} \right\}.  \label{eq:t-exog}
\end{align}
As discussed, the $\tau = g_0$ is always in $\mathcal{T}_\text{exog}$, but is incompatible with instrument relevance. In a sense, $\mathcal{T}_\text{exog}$ is the set of exogeneity restoring control functions. A further relevance assumption will lead to the set of valid control functions $\mathcal{T}_\text{valid} \subset \mathcal{T}_\text{exog}$. As the required relevance assumptions always depend on the estimand, we will leave this general on purpose, for now.

\subsection{Continuous linear functionals in $\tau$}  \label{ssec:linear-in-tau}

Subsequently, we broadly follow the semiparametric estimation setup in \cite{bennett2022}. Consider a general semiparametric model where the target estimand $\theta_0$ is given by an unconditional moment as
\begin{align}
\theta_0 &= \E{m(O; \tau_0)}, \ \forall \tau_0 \in \mathcal{T}_\text{valid},  \label{eq:nonlinear-estimand}
\end{align}
where $m$ is a known function, and $O = (Y, A, Z, W)$ a vector of all observable variables. Non-uniqueness of valid  control functions $\tau_0 \in \mathcal{T}_\text{valid}$ is not a problem here, because the target estimand $\theta_0$ remains uniquely identified (as in overidentified IV models).

To facilitate inference, we assume that $\tau \mapsto \E{m(O; \tau)}$ is a continuous linear functional over $\mathcal{T}$. Continuity and linearity will permit using the Riesz representation theorem \citep{luenberger1997optimization}. Specifically, there exists a Riesz representer $\alpha$ of our functional such that
\begin{align}
\E{m(O; \tau)} &= \E{\alpha(Z) \tau(Z)} \ \forall \tau \in \mathcal{T}.
\end{align}
Then, we can use ideas from automated debiased machine learning to find a quantity that we can use in place of the Riesz representer to debias the plug-in estimator $\mathbb{E}_n\left[ m(O; \hat{\tau}) \right]$ and retrieve $\sqrt{n}$-convergence along with asymptotic normality \citep{autodml2022}. These desired asymptotic properties require functional strong identification \ref{d:nonlinear-strong-id}. 

Operator language facilitates writing the strong identification definition \ref{d:nonlinear-strong-id}, so let $P_{\mathcal{L}_2(W)}^{Z, \mathcal{T}}(g): \mathcal{T} \mapsto \mathcal{L}_2(W)$ be the linear operator given by $[P_{\mathcal{L}_2(W)}^{Z, \mathcal{T}} \tau](W) = \E{\tau(Z) | W}$.
Let $P^{\mathcal{L}_2(W)}_{Z, \mathcal{T}}: \mathcal{L}_2(Z) \mapsto \mathcal{T}$ be the adjoint of $P_{\mathcal{L}_2(W)}^{Z, \mathcal{T}}$ given by $[P^{\mathcal{L}_2(W)}_{Z, \mathcal{T}} q](Z) = \Pi_{\mathcal{T}} \E{q(W) | Z} = \Pi_\mathcal{T} \left[ q(W) | Z \right]$ for any $q \in \mathcal{L}_2(W)$ and projection operator $\Pi_\mathcal{T}[.|Z]$ onto $\mathcal{T}$. 

\begin{definition}[Strong identification of $\theta_0$]  \label{d:nonlinear-strong-id}
$\theta_0$ is strongly identified if $\alpha \in \mathcal{N}^\perp(P^{\mathcal{L}_2(W)}_{Z, \mathcal{T}} P_{\mathcal{L}_2(W)}^{Z, \mathcal{T}})$, i.e.
\begin{align}
\Xi_0 &\neq \emptyset, & \text{ where } & &  \Xi_0 &\coloneqq \arg \min_{\xi \in \mathcal{T}} \left( \frac{1}{2} \E{\E{\xi(Z) | W}^2} - \E{m(O;\xi)} \right) \\
& & & & &= \left\{ \xi \in \mathcal{T}: P^{\mathcal{L}_2(W)}_{Z, \mathcal{T}} P_{\mathcal{L}_2(W)}^{Z, \mathcal{T}} \xi = \alpha \right\}.
\end{align}
\end{definition}

\begin{theorem}[Strong identification of $\theta_0$] \label{th:nonlinear-strong-id}
Suppose $\Pi_\mathcal{T} \left[ q(W) | Z \right] = \Pi_\mathcal{T} \left[ \E{q(W) | U} | Z \right]$ for any $q \in \mathcal{L}_2(W)$ and $\mathcal{T} \subseteq \mathcal{L}_2(Z)$, and assumption \ref{a:riv}.\ref{a:pl-rel} (completeness of $W$ for $U$) hold. Also, suppose that instruments $Z$ satisfy relevance conditional on any $\tau_0 \in \mathcal{T}_\text{valid} \neq \emptyset$ such that $\theta_0 = \E{m(O; \tau_0)}$, where $\tau \mapsto \E{m(O;\tau)}$ is a continuous and linear functional over $\mathcal{T}$. Then, $\theta_0$ is strongly identified with 
\begin{align}
\alpha \in \mathcal{N}^\perp(P^{\mathcal{L}_2(U)}_{Z, \mathcal{T}} P_{\mathcal{L}_2(U)}^{Z, \mathcal{T}}) = \mathcal{N}^\perp(P^{\mathcal{L}_2(W)}_{Z, \mathcal{T}} P_{\mathcal{L}_2(W)}^{Z, \mathcal{T}}).
\end{align}
\end{theorem}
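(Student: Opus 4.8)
The plan is to show the claimed equality of null-space complements by establishing that the operator $P^{\mathcal{L}_2(W)}_{Z,\mathcal{T}} P_{\mathcal{L}_2(W)}^{Z,\mathcal{T}}$ and the operator $P^{\mathcal{L}_2(U)}_{Z,\mathcal{T}} P_{\mathcal{L}_2(U)}^{Z,\mathcal{T}}$ actually coincide as maps on $\mathcal{T}$, and then to invoke the strong-identification machinery of Definition \ref{d:nonlinear-strong-id} together with the relevance hypothesis to conclude $\Xi_0 \neq \emptyset$. The first step is to compute, for arbitrary $\tau \in \mathcal{T}$, the composition $[P^{\mathcal{L}_2(W)}_{Z,\mathcal{T}} P_{\mathcal{L}_2(W)}^{Z,\mathcal{T}} \tau](Z) = \Pi_\mathcal{T}\big[ \E{\,\E{\tau(Z')|W}\,\mid Z}\big]$. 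Using the hypothesis $\Pi_\mathcal{T}[q(W)|Z] = \Pi_\mathcal{T}[\E{q(W)|U}|Z]$ applied to the inner conditional expectation $q(W) = \E{\tau(Z')|W}$, this becomes $\Pi_\mathcal{T}\big[ \E{\,\E{\E{\tau(Z')|W}\mid U}\mid Z}\big]$. Now I would use the proxy exogeneity $W \CI Z \mid U$ (Assumption \ref{a:riv}.\ref{a:pl-excl}) together with completeness of $W$ for $U$ (Assumption \ref{a:riv}.\ref{a:pl-rel}): by Lemma \ref{l:sc-2}-type reasoning, or more directly since $\tau(Z)$ is a function of $Z$ and $Z \CI W \mid U$ and $Z \CI U \mid T$ along the relevant path, the repeated smoothing collapses so that $\E{\E{\tau(Z')|W}\mid U} = \E{\tau(Z')|U}$. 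This identity is exactly the kind of "poor-proxy does not lose $U$-information under smoothing" fact that completeness buys: $\E{\cdot|W}$ and $\E{\cdot|U}$ have the same action after a further $\E{\cdot|U}$ because $W$ is complete for $U$, so any discrepancy would be a mean-zero-given-$U$ function, hence zero.

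Having reduced the $W$-operator to $\Pi_\mathcal{T}\big[\E{\E{\tau(Z')|U}\mid Z}\big] = [P^{\mathcal{L}_2(U)}_{Z,\mathcal{T}} P_{\mathcal{L}_2(U)}^{Z,\mathcal{T}}\tau](Z)$, the operator equality follows, and therefore $\mathcal{N}(P^{\mathcal{L}_2(W)}_{Z,\mathcal{T}} P_{\mathcal{L}_2(W)}^{Z,\mathcal{T}}) = \mathcal{N}(P^{\mathcal{L}_2(U)}_{Z,\mathcal{T}} P_{\mathcal{L}_2(U)}^{Z,\mathcal{T}})$ and likewise for the orthogonal complements. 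The second half of the argument is to verify the membership $\alpha \in \mathcal{N}^\perp(\cdot)$. For a self-adjoint nonnegative operator $S = P^* P$ on a Hilbert space, $\mathcal{N}^\perp(S) = \overline{\mathcal{R}(S)}$, and $\Xi_0 = \{\xi : S\xi = \alpha\}$ is nonempty precisely when $\alpha \in \mathcal{R}(S)$; more is true here because the relevance hypothesis on $Z$ conditional on $\tau_0 \in \mathcal{T}_{\text{valid}}$ is what guarantees that the linear functional $\tau \mapsto \E{m(O;\tau)}$, equivalently its Riesz representer $\alpha$, lies in the range rather than merely the closure of the range. I would spell out that $\E{m(O;\tau)} = \E{m(O;\tau_0)}$ for all $\tau$ with $\E{\tau(Z)|W} = \E{\tau_0(Z)|W}$, i.e. the functional only depends on $\tau$ through $P_{\mathcal{L}_2(W)}^{Z,\mathcal{T}}\tau$, which forces $\alpha$ to factor through that operator and hence through $P^*P$ — this is the concrete sense in which "valid control function exists $\Rightarrow$ strong identification."

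The main obstacle I anticipate is the step $\E{\E{\tau(Z)|W}\mid U} = \E{\tau(Z)|U}$, i.e. making rigorous that iterated projection through the poor proxy $W$ recovers the full $U$-conditional expectation. This is not literally the law of iterated expectations (that would need $U$ measurable w.r.t. $W$, which is exactly what we are not assuming); it requires using $Z \CI W \mid U$ to write $\E{\tau(Z)|W} = \E{\E{\tau(Z)|U}|W}$ first, and then completeness of $W$ for $U$ to argue that $\E{\E{\tau(Z)|U}|W}$ and $\E{\tau(Z)|U}$ have the same image under $\E{\cdot|U}$ — the difference is a function of $U$ with zero conditional mean given $W$, which completeness kills. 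Care is needed about domains ($\mathcal{L}_2$ versus $\mathcal{L}_1$), about whether the intermediate objects stay in $\mathcal{L}_2(U)$, and about the fact that $\mathcal{T}$ is a possibly proper subspace of $\mathcal{L}_2(Z)$ so the projection $\Pi_\mathcal{T}$ must be carried through every identity rather than dropped. Everything else — the self-adjoint-operator facts, $\mathcal{N}^\perp(P^*P) = \overline{\mathcal{R}(P^*)}$, and the equivalence of the two formulations of $\Xi_0$ — is standard Hilbert-space bookkeeping (the paper already cites \cite{luenberger1997optimization}).
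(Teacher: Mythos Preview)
Your proposal is correct and follows essentially the same route as the paper's proof: both arguments establish the operator identity $P^{\mathcal{L}_2(W)}_{Z,\mathcal{T}} P_{\mathcal{L}_2(W)}^{Z,\mathcal{T}} = P^{\mathcal{L}_2(U)}_{Z,\mathcal{T}} P_{\mathcal{L}_2(U)}^{Z,\mathcal{T}}$ by inserting the hypothesis $\Pi_\mathcal{T}[q(W)|Z]=\Pi_\mathcal{T}[\E{q(W)|U}|Z]$ and then collapsing the $W$--$U$ round trip via completeness, and both obtain $\alpha\in\mathcal{N}^\perp$ by arguing that $\E{m(O;\tau_0+\eta)}=\E{m(O;\tau_0)}$ whenever $\eta$ lies in the null space. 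The only difference is order of presentation (the paper proves $\alpha\in\mathcal{N}^\perp$ for the $U$-operator first and then the null-space equality; you do the operator identity first), and you are somewhat more explicit than the paper about why completeness is needed for the collapse step $\E{\E{\gamma(U)|W}|U}=\gamma(U)$ and about the gap between $\mathcal{N}^\perp$ and $\mathcal{R}$.
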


\begin{definition}[Debiased representation of $\theta_0$]  \label{d:nonlinear-debiasing-nuisance}
The debiasing nuisance is
\begin{align}
q_0(W) &= \E{\xi_0(Z) | W} \ \forall \xi_0 \in \Xi_0.
\end{align}
The set of valid debiasing nuisances is 
\begin{align}
\mathcal{Q}_0 &\coloneqq \left\{ q \in \mathcal{L}_2(W): \left[ P^{\mathcal{L}_2(W)}_{Z, \mathcal{T}} q \right](Z) = \alpha(Z) \right\}.
\end{align}
The debiased (Neyman orthogonal) moment $\psi$ with $\theta_0 = \E{\psi(O; \tau_0, q_0)}$ for any $\tau_0 \in \mathcal{T}_\text{valid}$ is
\begin{align}
\psi(O; \tau, q) &\coloneqq m(O; \tau) + q(W) \left( g_0(Z) - \tau(Z) \right)
\end{align}
\end{definition}

\begin{theorem}  \label{th:nonlinear-error}
Suppose $\theta_0$ is strongly identified. Then, $\theta_0 = \E{\psi(O; \tau_0, q_0)}$ for any $\tau_0 \in \mathcal{T}_\text{valid}$ and $q_0 \in \mathcal{Q}_0$. For any $\tau \in \mathcal{T}$, $q \in \mathcal{L}_2(W)$, $\tau_0 \in \mathcal{T}_\text{valid}$, $q_0 \in \mathcal{Q}_0$,
\begin{align*}
\E{\psi(O; \tau, q)} - \theta_0 &= - \E{(q(W) - q_0(W)) (\tau(Z) - \tau_0(Z))}.
\end{align*}
This implies error bound 
\begin{align}
\left| \E{\psi(O; \tau, q)} - \theta_0 \right| 
&= \left| \left\langle P_{\mathcal{L}_2(W)}^{Z, \mathcal{T}} \left( \tau - \tau_0 \right), q - q_0 \right\rangle \right| 
= \left| \left\langle   \tau - \tau_0 , P^{\mathcal{L}_2(W)}_{Z, \mathcal{T}} \left( q - q_0 \right) \right\rangle \right| \nonumber \\
&\leq \min \left\{ \left\lVert P_{\mathcal{L}_2(W)}^{Z, \mathcal{T}} \left( \tau - \tau_0 \right) \right\rVert_2 \left\lVert q - q_0 \right\rVert_2, \  \left\lVert \tau - \tau_0 \right\rVert_2 \left\lVert P^{\mathcal{L}_2(W)}_{Z, \mathcal{T}} \left( q - q_0 \right) \right\rVert_2  \right\}
\end{align}
Double robustness is satisfied as $\theta_0 = \E{\psi(O; \tau_0, q)} = \E{\psi(O; \tau, q_0)}$ for any $\tau_0 \in \mathcal{T}_\text{valid}$, $\tau \in \mathcal{T}$, $q_0 \in \mathcal{Q}_0$, $q \in \mathcal{L}_2(W)$. \\
Neyman orthogonality is satisfied as $\tau_0 \in \mathcal{T}_\text{valid}$ and $q_0 \in \mathcal{Q}_0$ if and only if
\begin{align*}
0 &= \left. \frac{\partial}{\partial t}  \E{\psi(O; \tau_0 + t \tau, q_0)} \right|_{t=0} = \left. \frac{\partial}{\partial t}  \E{\psi(O; \tau, q_0 + t q)} \right|_{t=0} \ \forall \tau \in \mathcal{T}, q \in \mathcal{L}_2(W).
\end{align*}
\end{theorem}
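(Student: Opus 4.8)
The plan is to reduce every assertion to the law of iterated expectations together with three ingredients: the Riesz identity $\E{m(O;\tau)} = \E{\alpha(Z)\tau(Z)}$ valid for all $\tau \in \mathcal{T}$ (with the representer taken in $\mathcal{T}$, which is legitimate since only its action on $\mathcal{T}$ matters); the exogeneity property defining $\mathcal{T}_\text{valid} \subseteq \mathcal{T}_\text{exog}$, namely $\E{g_0(Z)\mid W} = \E{\tau_0(Z)\mid W}$ for every $\tau_0 \in \mathcal{T}_\text{valid}$; and the defining property $\Pi_\mathcal{T}\E{q_0(W)\mid Z} = \alpha(Z)$ of $\mathcal{Q}_0$. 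Strong identification enters only to make $\mathcal{Q}_0$ non-empty: given $\xi_0 \in \Xi_0$, the function $q_0(W) = \E{\xi_0(Z)\mid W} = [P_{\mathcal{L}_2(W)}^{Z,\mathcal{T}}\xi_0](W)$ satisfies $[P^{\mathcal{L}_2(W)}_{Z,\mathcal{T}}q_0](Z) = [P^{\mathcal{L}_2(W)}_{Z,\mathcal{T}}P_{\mathcal{L}_2(W)}^{Z,\mathcal{T}}\xi_0](Z) = \alpha(Z)$, hence $q_0 \in \mathcal{Q}_0$. I would first record $\theta_0 = \E{\psi(O;\tau_0,q_0)}$: the $m$-term equals $\theta_0$ since $\tau_0 \in \mathcal{T}_\text{valid}$, and $\E{q_0(W)(g_0(Z)-\tau_0(Z))} = \E{q_0(W)\,\E{g_0(Z)-\tau_0(Z)\mid W}} = 0$ by the exogeneity property.

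The heart of the proof is the error identity. For arbitrary $\tau \in \mathcal{T}$ and $q \in \mathcal{L}_2(W)$, I would expand $\E{\psi(O;\tau,q)}$, apply Riesz to the $m$-term, and use the exogeneity property to replace $\E{q(W)g_0(Z)} = \E{q(W)\,\E{g_0(Z)\mid W}} = \E{q(W)\,\E{\tau_0(Z)\mid W}} = \E{q(W)\tau_0(Z)}$, which gives $\E{\psi(O;\tau,q)} = \E{\alpha(Z)\tau(Z)} - \E{q(W)(\tau(Z)-\tau_0(Z))}$. Then, using $q_0 \in \mathcal{Q}_0$ (so $\alpha = \Pi_\mathcal{T}\E{q_0(W)\mid Z}$) and $\tau - \tau_0 \in \mathcal{T}$ (so the projection can be dropped against $\tau - \tau_0$), one gets $\E{\alpha(Z)(\tau(Z)-\tau_0(Z))} = \E{q_0(W)(\tau(Z)-\tau_0(Z))}$; subtracting $\theta_0 = \E{\alpha(Z)\tau_0(Z)}$ then yields
\[
\E{\psi(O;\tau,q)} - \theta_0 = -\E{(q(W)-q_0(W))(\tau(Z)-\tau_0(Z))}.
\]
Everything else follows from this line: one more application of iterated expectations rewrites the right-hand side as $-\langle P_{\mathcal{L}_2(W)}^{Z,\mathcal{T}}(\tau-\tau_0),\, q-q_0\rangle = -\langle \tau-\tau_0,\, P^{\mathcal{L}_2(W)}_{Z,\mathcal{T}}(q-q_0)\rangle$, and Cauchy--Schwarz on each representation gives the claimed $\min$ bound; setting $\tau = \tau_0$, respectively $q = q_0$, gives the two directions of double robustness.

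For Neyman orthogonality I would use that $\tau \mapsto \E{m(O;\tau)}$ is linear and $\psi$ is affine in $q$, so both Gateaux derivatives are explicit. In direction $\tau \in \mathcal{T}$, $\left.\frac{\partial}{\partial t}\E{\psi(O;\tau_0+t\tau,q_0)}\right|_{t=0} = \E{\alpha(Z)\tau(Z)} - \E{q_0(W)\tau(Z)} = \E{\tau(Z)\bigl(\alpha(Z)-\Pi_\mathcal{T}\E{q_0(W)\mid Z}\bigr)}$ (the last step using $\tau \in \mathcal{T}$), which vanishes for all $\tau \in \mathcal{T}$ iff $\alpha = \Pi_\mathcal{T}\E{q_0(W)\mid Z}$, both sides lying in $\mathcal{T}$, i.e.\ iff $q_0 \in \mathcal{Q}_0$. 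In direction $q \in \mathcal{L}_2(W)$, $\left.\frac{\partial}{\partial t}\E{\psi(O;\tau_0,q_0+tq)}\right|_{t=0} = \E{q(W)(g_0(Z)-\tau_0(Z))} = \E{q(W)\,\E{g_0(Z)-\tau_0(Z)\mid W}}$, which vanishes for all $q \in \mathcal{L}_2(W)$ iff $\E{g_0(Z)\mid W} = \E{\tau_0(Z)\mid W}$, i.e.\ iff $\tau_0 \in \mathcal{T}_\text{exog}$; the additional relevance behind $\tau_0 \in \mathcal{T}_\text{valid}$ is maintained throughout and is not a first-order condition.

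I do not expect a genuine obstacle: the proof is a disciplined bookkeeping exercise in conditional expectations and the two adjoint operators $P_{\mathcal{L}_2(W)}^{Z,\mathcal{T}}$ and $P^{\mathcal{L}_2(W)}_{Z,\mathcal{T}}$. The one point needing care is to invoke strong identification exactly and only where required --- solely to ensure $\mathcal{Q}_0 \neq \emptyset$ with $P^{\mathcal{L}_2(W)}_{Z,\mathcal{T}}q_0 = \alpha$ for its members --- and to track which differences live in which subspace ($\tau - \tau_0 \in \mathcal{T}$; $g_0 - \tau_0$ in the null space of $P_{\mathcal{L}_2(W)}^{Z,\mathcal{T}}$), so that every replacement of an unconditional expectation by a projected conditional one is licensed by the orthogonality defining $\Pi_\mathcal{T}$.
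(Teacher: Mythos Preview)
Your proposal is correct and follows essentially the same route as the paper: expand $\E{\psi(O;\tau,q)}-\theta_0$, replace $\E{m(O;\cdot)}$ by its Riesz form, use $\E{g_0(Z)\mid W}=\E{\tau_0(Z)\mid W}$ to eliminate $g_0$, convert $\alpha$ back to $q_0$ via the defining property of $\mathcal{Q}_0$, and then read off the error bound (Cauchy--Schwarz after conditioning), double robustness, and Neyman orthogonality from the product form. Your handling of the projection $\Pi_\mathcal{T}$ and your remark that the first-order condition in the $q$-direction characterizes $\tau_0\in\mathcal{T}_\text{exog}$ (with relevance maintained separately) are slightly more careful than the paper's write-up, but the argument is the same.
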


If we let $\mathcal{T} = \mathcal{L}_2(Z)$ as we usually would to allow for general $\tau$, the condition $P^{\mathcal{L}_2(W)}_{Z, \mathcal{T}} P_{\mathcal{L}_2(W)}^{Z, \mathcal{T}} \xi_0 = \alpha$ for any $\xi \in \Xi_0$ simplifies to $\E{\E{\xi_0(Z) | W} | Z} = \alpha(Z)$. The Riesz representer is therefore closely related to the debiasing nuisance $\xi_0 \in \Xi_0$. More precisely, any $\xi_0 \in \Xi_0$ is equal to the Riesz representer after two projection steps. This double projection property reflects our intuition from the linear model in section \ref{sec:linear}, where similarly two projection steps were required to restore instrument exogeneity: First proxies $W$ were projected onto instruments $Z$ to retrieve control $T$, then $Z$ was orthogonalised with respect to $T$.

The simpler estimation problem as a result of \ref{th:nonlinear-strong-id} has the additional advantage that estimation error in $\tau$, identified by a possibly ill-posed inverse problem, no longer determines asymptotic behaviour of an estimator dependent on it. Instead, asymptotic behaviour is determined by the estimation error of the projection of $\tau$ on $W$. The projection eliminates the problem from ill-posedness, which can imply a discontinuous dependence of solutions to equation \ref{eq:nonlinear-bridge} on the data distribution \citep{carrasco2007linear}.

\subsection{Consecutive debiasing of more general functionals of the treatment} \label{ssec:consecutive-debiasing}
Most functionals of interest will not be linear functionals of the debiasing function $\tau$. However, semiparametric estimation of such functionals often remains possible. For example, let the target estimand $\theta_0$ be given by
\begin{align}
\theta_0 &= \E{m(O; k_0(\tau_0, g_0))}, \ \forall k_0(\tau_0, g_0) \in \mathcal{K}_0, \ \tau_0(g_0) \in \mathcal{T}_\text{valid}.
\end{align}
Let the set of functions $\mathcal{K}_0(\tau, g)$ be defined as the set of all functions $k_0(\tau, g): \mathcal{A} \times \mathcal{Z}^\tau \mapsto \mathbb{R}$ that satisfy the first order (Fredholm) integral equation
\begin{align}
\E{k_0(A; \tau, g) | Z} &= g(Z) - \tau(Z) \text{ for any given } \tau \in \mathcal{T} \text{ and } g \in \mathcal{G},  \label{eq:consecutive-debiasing-1}
\end{align}
where the dependence of $k_0$ on is $\tau$ and $g$ is made explicit.
Then, let $\mathcal{K}_0 \coloneqq \mathcal{K}_0(\tau_0, g_0)$ for any $\tau_0 \in \mathcal{T}_\text{valid}$ and a given $g_0$ be the set of functions $k_0$ that identify $\theta_0$.
Remember that $\mathcal{T}_\text{valid} \subset \mathcal{T}_\text{exog}$, where the set $\mathcal{T}_\text{exog}$ itself is defined all solutions to the first order integral equation \ref{eq:t-exog}.
In many cases, instead of being a known function, $g_0$ will also need to be estimated, e.g. if $g_0(Z) = \E{g_Y(Y)|Z}$ for some known function $g_Y \in \mathcal{L}_2(Y)$. 

\subsubsection{Common nuisance of type \ref{eq:consecutive-debiasing-1} for causal questions}
With averages over outcomes being the most common estimand, it is worth understanding \ref{eq:consecutive-debiasing-1} better. Indeed, consider $g_0(Z) = \E{Y|Z}$ and note that $\E{Y|Z} = \E{\mathbb{E}^{C(U | \tau_0(Z))} \left[ Y | A, \tau_0(Z) \right] | Z}$, where $\mathbb{E}^{C(U | \tau_0(Z))} \left[ Y | A, \tau_0(Z) \right] \coloneqq \int_\mathcal{U} \E{Y | A, \tau_0(Z), U=u} d(F_{U | \tau_0(Z)}(u | \tau_0(Z)))$, due to the conditional completeness of $Z$ for $(A, \tau(Z))$ and $\tau_0(Z) = \E{Y | \tau_0(Z)}$, such that
\begin{align}
g_0(Z) - \tau_0(Z) &= \E{\mathbb{E}^{C(U | \tau_0(Z))} \left[ Y | A, \tau_0(Z) \right] - \E{Y | \tau_0(Z)} | Z}.  \label{eq:ex-estimand}
\end{align}

Now, we perform a hypothetical exercise. Decompose variation in $Z$ into two independent components $\tau_0(Z)$ and $\tilde{\tau_0}(Z)$, i.e. $\tau_0(Z) \CI \tilde{\tau_0}(Z)$. Neither must $\tilde{\tau}$ be estimated in practice, nor is it is not necessary to be able to reconstruct $Z$ from $\tau_0(Z)$ and $\tilde{\tau_0}(Z)$ exactly. In this hypothetical exercise, we should ensure that no spare variation in $Z$ is predictive of $\mathbb{E}^{C(U | \tau_0(Z))} \left[ Y | A, \tau_0(Z) \right]$, i.e. 
\begin{align}
\E{\mathbb{E}^{C(U | \tau_0(Z))} \left[ Y | A, \tau_0(Z) \right] | Z} = \E{\mathbb{E}^{C(U | \tau_0(Z))} \left[ Y | A, \tau_0(Z) \right] | \tau_0(Z), \tilde{\tau_0}(Z) }.
\end{align}
In other words, $\tilde{\tau_0}(Z)$ should capture all variation in $Z$ that predicts $\mathbb{E}^{C(U | \tau_0(Z))} \left[ Y | A, \tau_0(Z) \right]$, while being independent of $\tau_0(Z)$.

Note that any $k_0(\tau_0, g_0) \in \mathcal{K}_0(\tau_0, g_0)$ that satisfies $\E{k_0(A; \tau_0, g_0) | Z} = g_0(Z) - \tau_0(Z)$ will also satisfy
\begin{align}
\E{k_0(A) | \tilde{\tau_0}(Z)} &= \E{g_0(Z) - \tau_0(Z) | \tilde{\tau_0}(Z)}.  \label{eq:also-satisfies}
\end{align}
Using the independence of $\tau_0(Z)$ and $\tilde{\tau_0}(Z)$, the right-hand side of the above equation reduces to a form of counterfactual mean deviation
\begin{align}
\mathbb{E}^{C(U, \tau_0(Z))} \left[ Y - c \mid A \right] &\coloneqq \int_{\mathcal{Z}^\tau} \left( \mathbb{E}^{C(U | \tau_0(Z))} \left[ Y | A, T=t \right] - \E{Y | T=t} \right) d(F_{\tau_0(Z)}(t)) \\
&= \int_{\mathcal{Z}^\tau \times \mathcal{U}} \left(  \E{Y | A, T=t, U=u} - \E{Y | T=t, U=u} \right) d(F_{\tau_0(Z), U}(t, u)),
\end{align}
where $c \in \mathbb{R}$ is some constant.
Indeed, this counterfactual mean deviation integrates out all endogenous variation independently from treatment $A$. This exact desired behaviour allows the identification of typical causal effects, like average treatment effects, making $\mathbb{E}^{C(U, \tau_0(Z))} \left[ Y - c \mid A \right]$ an interesting nuisance function to identify causal effects. Given that any $k_0$ that satisfies \ref{eq:consecutive-debiasing-1} also satisfies \ref{eq:also-satisfies}, the not necessarily unique $k_0$ to satisfy \ref{eq:consecutive-debiasing-1} is $k_0(A) = \mathbb{E}^{C(U, \tau_0(Z))} \left[ Y - c \mid A \right]$. Despite identification only up to the constant $c \in \mathbb{R}$, causal effects based on this nuisance function are often unique, e.g. if $m_0(O; k) = k_0(a') - k_0(a)$ for two distinct $a, a' \in \mathcal{A}$.

\subsubsection{Consecutive debiasing}
Consecutive debiasing will allow us to address each first order bias in successive order. Again, assume that $k \mapsto \E{m(O; k)}$ for $k \in \mathcal{K}$ is a continuous linear functional over $\mathcal{K}$, such that by the Riesz representation theorem
\begin{align}
\E{m_0(O; k)} &= \E{\alpha_{k, 0}(A) k(A)} \ \forall k \in \mathcal{K}.
\end{align}

\begin{assumption}[Strong instrument relevance]  \label{a:iv-strong-relevance}
$\alpha_{k, 0} \in \mathcal{N}^\perp \left( P_{A, \mathcal{K}}^{\mathcal{L}_2(Z)} P^{A, \mathcal{K}}_{\mathcal{L}_2(Z)} \right)$, i.e. 
\begin{align}
\Xi_{k, 0} &\neq \emptyset, & \text{ where } & &  \Xi_{k, 0} &\coloneqq \arg \min_{\xi_k \in \mathcal{K}} \left( \frac{1}{2} \E{\E{\xi_k(A) | Z}^2} - \E{m_0(O;\xi_k)} \right) \\
& & & & &= \left\{ \xi_k \in \mathcal{K}: P_{A, \mathcal{K}}^{\mathcal{L}_2(Z)} P^{A, \mathcal{K}}_{\mathcal{L}_2(Z)} \xi_k = \alpha_{k, 0} \right\}.
\end{align}
\end{assumption}
The strong instrument relevance assumption \ref{a:iv-strong-relevance} imposes a stronger regularity condition on the smoothness of $\alpha_{k, 0}$ than needed for the simple identification of $\theta_0$, because $\theta_0 = \E{q_{k, 0}(Z) k_0(A; \tau_0, g_0)}$ where $\E{q_{k, 0}(Z)} = \alpha_{k, 0}(A)$ only requires $\alpha_{k, 0} \in \mathcal{R} \left( P^{A, \mathcal{K}}_{\mathcal{L}_2(Z)} \right)$ as opposed to $\alpha_{k, 0} \in \mathcal{R} \left( P^{A, \mathcal{K}}_{\mathcal{L}_2(Z)} P_{A, \mathcal{K}}^{\mathcal{L}_2(Z))}  \right) \subseteq \mathcal{R} \left( P^{A, \mathcal{K}}_{\mathcal{L}_2(Z)} \right)$. The strong instrument relevance in assumption \ref{a:iv-strong-relevance} ensures $\sqrt{n}$-estimability of the target functional while allowing for ill-posedness in the inverse problem \ref{eq:consecutive-debiasing-1} which defines any valid $k_0$ \citep{bennett2022}. 

Strong instrument relevance as stated in \ref{a:iv-strong-relevance} requires the remaining variation in $Z$ to be relevant enough with respect to outcome-relevant variation in treatment $A$ after integrating out any of the outcome-relevant variation in $A$ associated with the endogenous variation $\tau(Z)$.
Hence, the function space $\mathcal{K}$ contains functions of $A$, where any outcome-relevant variation associated with the possibly endogenous instrument variation $\tau(Z)$ is explicitly subtracted and integrated out without dependence on treatment $A$. We can understand the function spaces $\mathcal{K}$ and $\mathcal{T}$ (as well as the $\mathcal{K}_0$ and $\mathcal{T}_\text{exog}$) as jointly defined:
\begin{align}
\mathcal{K} \times \mathcal{T} &= \left\{ (k, \tau): \left( \E{k(A) | Z} = g(Z) - \tau(Z) \right) \land \left( \E{g(Z) | W} = \E{\tau(Z) | W} \right) \ \forall g \in \mathcal{G} \right\} \label{eq:joint-def} \\
\mathcal{K}_0 \times \mathcal{T}_\text{exog} &= \left\{ (k, \tau): \left( \E{k(A) | Z} = g_0(Z) - \tau(Z) \right) \land \left( \E{g_0(Z) | W} = \E{\tau(Z) | W} \right) \right\}
\end{align}
The joint definition in \ref{eq:joint-def} illustrates the close relation between the function spaces $\mathcal{K}$ and $\mathcal{T}$.
Assumptions like \ref{a:iv-strong-relevance} are typically interpreted as smoothness restrictions on the Riesz representer $\alpha_{k, 0}$. Here, smoothness means that the nuisance $k \in \mathcal{K}$ may only be a function of treatment $A$ to the degree that remaining instrument variation after integrating out $\tau(Z)$ remains strongly relevant, hence the assumption's name \emph{strong instrument relevance}. 
Consequently, strong instrument relevance as stated in \ref{a:iv-strong-relevance} always implies a restriction on the complexity of $\mathcal{T}$. If $\mathcal{T}$ is too encompassing, $\alpha_{k, 0} \in \mathcal{N}^\perp \left( P_{A, \mathcal{K}}^{\mathcal{L}_2(Z)} P^{A, \mathcal{K}}_{\mathcal{L}_2(Z)} \right)$ is less likely to be satisfied. In the extreme, $\mathcal{G} =  \mathcal{T}$ implies $\E{k(A; \tau, g) | Z} = 0$, requiring $\alpha_{k, 0} = 0$ under strong instrument relevance \ref{a:iv-strong-relevance}, which means that $\theta_0$ is identifiable only if $k = 0$. Clearly, zero nuisances can only correspond to trivial identification problems. For any non-trivial identification problem, the strong instrument relevance assumption \ref{a:iv-strong-relevance} always implies some restriction on the complexity of $\mathcal{T}$.
A violation of strong instrument relevance \ref{a:iv-strong-relevance} can be interpreted in two ways: The first interpretation is that remaining instrument variation after integrating out $\tau(Z)$ is not sufficiently relevant for treatment $A$, the alternative interpretation is that too much variation in instruments $Z$ is associated with the common confounders $U$ and thus proxies $W$. Both of these interpretations, however, are different ways to phrase the same problem. Instruments are not strongly relevant, as required by assumption \ref{a:iv-strong-relevance}.

With assumption \ref{a:iv-strong-relevance}, it is possible to perform the first debiasing step and construct a first step debiased moment as
\begin{align}
m_1(O; k, \tau, g, q_k) &= m_0(O; k) + q_k(Z) \left( g(Z) - \tau(Z) - k(A; \tau, g) \right) \\
\mathcal{Q}_k &\coloneqq \left\{ q_k \in \mathcal{L}_2(Z): q_k(Z) = \E{\xi_k(A) | Z} \ \forall \xi_k \in \mathcal{K} \right\}  \\
\mathcal{Q}_{k, 0} &\coloneqq \left\{ q_k \in \mathcal{L}_2(Z): q_k(Z) = \E{\xi_{k, 0}(A) | Z} \ \forall \xi_{k, 0} \in \Xi_{k, 0} \right\}
\end{align}


Having taken care of the debiasing by $k$, we direct our attention to debiasing by $\tau$. The functional $\tau \mapsto \E{m_1(O; k, \tau, g, q_k)}$ is indeed straigtforwardly linear and continuous in $\tau$. Any dependence on $\tau$ trough $k$ cancels out in expectation since $m_1$ is already debiased with respect to $k$. Thus, $m_1$ only depends on $\tau$ via $- q_k(Z) \tau(Z)$. The Riesz representer of $\tau \mapsto \E{m_1(O; k, \tau, g, q_k)}$ is the same as the Riesz representer of the simpler functional $\tau \mapsto \E{q_k(Z) \tau(Z)}$. Consequently, the results in section \ref{ssec:linear-in-tau} apply, such that 
\begin{align*}
\E{q_k(Z) \tau(Z)} = \E{\alpha_{\tau}(Z; q_k) \tau(Z)}
\end{align*}
for some $\alpha_{\tau}(q_k) \in \mathcal{N}^\perp \left( P_{Z, \mathcal{T}}^{\mathcal{L}_2(U)} P^{Z, \mathcal{T}}_{\mathcal{L}_2(U)} \right) = \mathcal{N}^\perp \left( P_{Z, \mathcal{T}}^{\mathcal{L}_2(W)} P^{Z, \mathcal{T}}_{\mathcal{L}_2(W)} \right)$. $\E{q_k(Z) \tau(Z)}$ already appears to be in some sort of Riesz representer form. However, the smoothness of $q_k$ is not restricted, while the smoothness of $\alpha_\tau(q_k)$ implied by theorem \ref{th:nonlinear-strong-id} is essential to achieve robustness with respect to misestimation in $\tau$. As $\alpha_\tau(q_k)$ depends on $q_k$, let $\alpha_{\tau, 0} \coloneqq \alpha_\tau(q_{k, 0})$.

The second step debiased moment is
\begin{align}
m_2(O; k, \tau, g, q_k, q_\tau) &= m_1(O; k, \tau, g, q_k) + q_\tau(W; q_k) \left( \tau(Z) - g(Z) \right) \\
\mathcal{Q}_\tau(q_k) &\coloneqq \left\{ q_\tau \in \mathcal{L}_2(W): q_\tau = \E{\xi_\tau(Z; q_k) | W}, \ \forall \xi_\tau(q_k) \in \mathcal{T}, q_k \in \mathcal{Q}_k \right\} \\
\mathcal{Q}_{\tau, 0}(q_k) &\coloneqq \left\{ q_\tau \in \mathcal{L}_2(W): q_\tau = \E{\xi_{\tau, 0}(Z; q_k) | W}, \ \forall \xi_{\tau, 0}(q_{k, 0}) \in \Xi_{\tau, 0}(q_k), q_k \in \mathcal{Q}_k \right\} \\
\Xi_{\tau, 0}(q_k) &\coloneqq \arg \min_{\xi_\tau(q_k) \in \mathcal{T}} \left( \frac{1}{2} \E{\E{\xi_\tau(Z; q_k) | W}^2} - \E{q_k(Z) \xi_\tau(Z; q_k)} \right).
\end{align}

The only remaining task is debiasing by $g$, which requires defining $g_0$ in some form. Typically, we are interested in some sorts of averages over the outcome variable $Y$, hence for some known function $g_Y \in \mathcal{L}_2(Y)$ let
\begin{align}
g_0(Z) &\coloneqq \E{g_Y(Y) | Z}.
\end{align}
Again, the Riesz representer of $g \mapsto \E{m_2(O; k, \tau, g, q_k, q_\tau)}$ is the same as the Riesz representer of a simpler functional, in this case $g \mapsto \E{\left( q_k(Z) - q_\tau(W; q_k) \right) g(Z)}$. While this functional is not immediately in Riesz representer form, it only takes an inner expectation to show that $\E{\left( q_k(Z) - q_\tau(W; q_k) \right) g(Z)} = \E{\alpha_g(Z; q_k, q_\tau) g(Z)}$ with $\alpha_g(Z; q_k, q_\tau) = \E{q_k(Z) - q_\tau(W; q_k) | Z}$. Hence, the third step debiased moment is 
\begin{align}
m_3(O; k, \tau, g, q_k, q_\tau, \alpha_g) &= m_2(O; k, \tau, g, q_k, q_\tau) + \alpha_g(Z; q_k, q_\tau) \left( g_Y(Y) - g(Z) \right) \\
\Xi_{g, 0}(q_k, q_\tau) &\coloneqq \arg \min_{\xi_g(q_k, q_\tau) \in \mathcal{L}_2(Z)} \left( \frac{1}{2} \E{\xi_g(Z; q_k, q_\tau)^2} - \E{\left( q_k(Z) - q_\tau(W; q_k) \right) \xi_g(Z; q_k, q_\tau)} \right).
\end{align}

The third step debiased moment can be used construct a doubly robust estimator. First, note that each of the moments $m_j$ for $j \in \{ 1, 2, 3 \}$ identify $\theta_0$.
\begin{lemma}
Assume $\theta_0 = \E{m_0(O; k_0(\tau_0, g_0))}$ and \ref{a:iv-strong-relevance}. Then,
\begin{align*}
\theta_0 = \E{m_0(O; k_0)}  &= \E{m_1(O; k_0, \tau_0, g_0, q_{k, 0})} \\
 &= \E{m_2(O; k_0, \tau_0, g_0, q_{k, 0}, q_{\tau, 0})} \\
 &= \E{m_3(O; k_0, \tau_0, g_0, q_{k, 0}, q_{\tau, 0}, \alpha_{g, 0})}.
\end{align*}
\end{lemma}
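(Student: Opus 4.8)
The plan is to show that each correction term appended in passing from $m_0$ to $m_1$, from $m_1$ to $m_2$, and from $m_2$ to $m_3$ has zero expectation once the nuisances are set to their true values, so that the four expectations collapse onto $\E{m_0(O; k_0)} = \theta_0$. The only ingredients are the defining (Fredholm) equations satisfied by $k_0$, $\tau_0$, and $g_0$, together with the law of iterated expectations, applied each time by conditioning on the variable that the relevant debiasing nuisance is a function of. Assumption \ref{a:iv-strong-relevance} enters only to guarantee that $\Xi_{k,0} \neq \emptyset$, hence that $q_{k,0} \in \mathcal{Q}_{k,0}$ (and then $q_{\tau,0}$, $\alpha_{g,0}$) are well-defined objects to plug in.

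First I would treat $m_1$. Its extra term over $m_0$ is $q_{k,0}(Z)\left(g_0(Z) - \tau_0(Z) - k_0(A;\tau_0,g_0)\right)$. Since $q_{k,0}$ depends only on $Z$, I condition on $Z$ and use that $k_0 \in \mathcal{K}_0$ solves \ref{eq:consecutive-debiasing-1}, namely $\E{k_0(A;\tau_0,g_0)\mid Z} = g_0(Z) - \tau_0(Z)$; the bracket then has conditional mean zero, so $\E{m_1(O;k_0,\tau_0,g_0,q_{k,0})} = \E{m_0(O;k_0)} = \theta_0$. Next, for $m_2$ the added term is $q_{\tau,0}(W;q_{k,0})\left(\tau_0(Z) - g_0(Z)\right)$; since $q_{\tau,0}$ depends only on $W$, I condition on $W$ and invoke $\tau_0 \in \mathcal{T}_\text{valid}\subset\mathcal{T}_\text{exog}$, i.e. $\E{\tau_0(Z)\mid W} = \E{g_0(Z)\mid W}$, so this term vanishes in expectation and $\E{m_2} = \E{m_1} = \theta_0$. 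Finally, for $m_3$ the added term is $\alpha_{g,0}(Z;q_{k,0},q_{\tau,0})\left(g_Y(Y) - g_0(Z)\right)$; conditioning on $Z$ and using the definition $g_0(Z) = \E{g_Y(Y)\mid Z}$ gives conditional mean zero, yielding $\E{m_3} = \E{m_2} = \theta_0$.

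The argument is essentially mechanical, and I do not expect a genuine obstacle. The one place requiring care in the write-up is the bookkeeping of measurability — $q_{k,0}$ and $\alpha_{g,0}$ are functions of $Z$ while $q_{\tau,0}$ is a function of $W$ — so that the correct conditioning step is used at each stage, and confirming that the true-value versions of the three defining equations genuinely hold because $k_0 \in \mathcal{K}_0$, $\tau_0 \in \mathcal{T}_\text{valid}$, and $g_0$ is defined as the conditional mean $\E{g_Y(Y)\mid Z}$. No ill-posedness or completeness arguments are needed here, since we only evaluate at the true nuisances rather than bounding estimation error; those considerations belong to the accompanying error-bound and orthogonality statements in Theorem \ref{th:nonlinear-error}.
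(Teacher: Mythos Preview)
Your proposal is correct and takes essentially the same approach as the paper: the paper's own justification is simply that ``in each consecutive debiasing step a conditionally mean-zero term is added to the previous moment,'' and you have spelled out exactly those three conditioning arguments (on $Z$, on $W$, and on $Z$ again) using the defining equations for $k_0$, $\tau_0$, and $g_0$. Your remark that Assumption \ref{a:iv-strong-relevance} is needed only so that $q_{k,0}$ (and hence the subsequent debiasing nuisances) are well-defined is also accurate.
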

Each of $m_j$ for $j \in \{ 1, 2, 3 \}$ identify $\theta_0$ because in each consecutive debiasing step a conditionally mean-zero term is added to the previous moment.

More importantly, the final debiased moment $m_3$ allows for a robust error decomposition.
\begin{theorem}[Robust error decomposition]  \label{th:error-decomposition}
Suppose $\theta_0 = \E{m_0(O; k_0(\tau_0, g_0))}$, assumption \ref{a:iv-strong-relevance}, and the conditions for theorem \ref{th:nonlinear-strong-id} hold for $q = q_\tau$. Then,
\begin{align*}
\E{m_3(O; k, \tau, g, q_k, q_\tau, \alpha_g)} - \theta_0 
&= \E{\left( q_{k, 0}(Z) - q_k(Z) \right) \left( k(A; \tau, g) - k_0(A; \tau_0, g_0) \right)} \\
& \ \ \ \ + \E{\left( q_{\tau, 0}(W; q_k) - q_\tau(W; q_k) \right) \left( \tau_0(Z; g_0) - \tau(Z; g) \right)} \\
& \ \ \ \ + \E{\left( \alpha_{g, 0}(Z; q_k, q_\tau) - \alpha_g(Z; q_k, q_\tau) \right) \left( g(Z) - g_0(Z) \right)}.
\end{align*}
\end{theorem}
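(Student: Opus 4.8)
The statement is a bilinear-remainder (``multiply robust'') expansion, so the plan is to prove it by direct substitution, peeling off one debiasing correction at a time and then using the law of iterated expectations to turn each leftover term into a product of two nuisance errors. First I would unfold the recursive definitions of $m_3, m_2, m_1$ to obtain
\begin{align*}
m_3(O; k, \tau, g, q_k, q_\tau, \alpha_g) &= m_0(O; k) + q_k(Z)\left( g(Z) - \tau(Z) - k(A) \right) \\
&\quad + q_\tau(W; q_k)\left( \tau(Z) - g(Z) \right) + \alpha_g(Z; q_k, q_\tau)\left( g_Y(Y) - g(Z) \right),
\end{align*}
and take expectations. Two substitutions simplify the ``outer'' pieces. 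By the Riesz representation guaranteed by assumption \ref{a:iv-strong-relevance} together with the projection identity $\Pi_\mathcal{K}\E{q_{k,0}(Z)\,|\,A}=\alpha_{k,0}(A)$ defining $q_{k,0}$ (so that $\E{m_0(O;k)} = \E{\alpha_{k,0}(A) k(A)} = \E{q_{k,0}(Z) k(A)}$ for $k \in \mathcal{K}$), and since $g_0(Z) = \E{g_Y(Y) | Z}$ gives $\E{\alpha_g (g_Y(Y) - g)} = \E{\alpha_g (g_0 - g)}$, the expectation of $m_3$ becomes, after combining the $q_k$- and $q_\tau$-corrections into $\E{(q_k - q_\tau)(g - \tau)}$,
\begin{align*}
\E{m_3} &= \E{(q_{k,0}(Z) - q_k(Z)) k(A)} + \E{(q_k(Z) - q_\tau(W;q_k)) g(Z)} \\
&\quad - \E{(q_k(Z) - q_\tau(W;q_k)) \tau(Z)} + \E{\alpha_g(Z;q_k,q_\tau) (g_0(Z) - g(Z))}.
\end{align*}

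Next I would rewrite the two middle terms. Because $\alpha_{g,0}(Z; q_k, q_\tau) = \E{q_k(Z) - q_\tau(W;q_k) \,|\, Z}$ is, by the first-order condition defining $\Xi_{g,0}$, the Riesz representer of $g \mapsto \E{(q_k - q_\tau) g(Z)}$ on $\mathcal{L}_2(Z)$, the $g$-term equals $\E{\alpha_{g,0} g(Z)}$; and because the conditions of theorem \ref{th:nonlinear-strong-id} hold for $q = q_\tau$, the $\tau$-step debiasing nuisance $q_{\tau,0}(W;q_k)$ exists with $\E{q_{\tau,0}(W;q_k) \tau(Z)} = \E{q_k(Z) \tau(Z)}$ for every $\tau \in \mathcal{T}$, so the $\tau$-term equals $\E{(q_{\tau,0} - q_\tau) \tau(Z)}$. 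Adding and subtracting the ``missing halves'' $\E{(q_{k,0} - q_k) k_0}$, $\E{(q_{\tau,0} - q_\tau) \tau_0}$, and $\E{(\alpha_{g,0} - \alpha_g) g_0}$ then assembles precisely the three claimed bilinear terms $\E{(q_{k,0} - q_k)(k - k_0)} + \E{(q_{\tau,0} - q_\tau)(\tau_0 - \tau)} + \E{(\alpha_{g,0} - \alpha_g)(g - g_0)}$, leaving a ``diagonal'' residual
\begin{align*}
\E{m_3} - \left( \text{three bilinear terms} \right) &= \E{\alpha_{g,0}(Z;q_k,q_\tau) g_0(Z)} + \E{(q_{k,0}(Z) - q_k(Z)) k_0(A)} \\
&\quad - \E{(q_{\tau,0}(W;q_k) - q_\tau(W;q_k)) \tau_0(Z)},
\end{align*}
and it remains to show this residual equals $\theta_0$.

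For that last step I would invoke the characterizations of the true structural nuisances: $\E{k_0(A) | Z} = g_0(Z) - \tau_0(Z)$ from equation \ref{eq:consecutive-debiasing-1}, $\E{\tau_0(Z) | W} = \E{g_0(Z) | W}$ from the definition of $\mathcal{T}_\text{exog}$ in equation \ref{eq:t-exog}, and $\theta_0 = \E{m_0(O;k_0)} = \E{q_{k,0}(Z) k_0(A)} = \E{q_{k,0}(Z)(g_0(Z) - \tau_0(Z))}$. Iterating expectations --- using that $q_k$, $\alpha_{g,0}$, and $g_0$ are $Z$-measurable, that $\alpha_{g,0} = \E{q_k - q_\tau \,|\, Z}$, that $\E{k_0(A)|Z}$ has the stated form, and the $\tau$-step Riesz identity at the admissible test function $\tau_0 \in \mathcal{T}$ --- collapses the residual to $\theta_0 - \E{q_\tau(W;q_k)(g_0(Z) - \tau_0(Z))}$. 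Since $q_\tau(W;q_k)$ is $W$-measurable and $\E{g_0(Z) - \tau_0(Z) | W} = 0$, the subtracted term vanishes, giving the result. I expect the only real difficulty to be organizational rather than conceptual: keeping track that $q_\tau$ and $\alpha_g$, as well as their starred counterparts $q_{\tau,0}$ and $\alpha_{g,0}$ appearing in the remainder, are all evaluated at the \emph{plugged-in} arguments ($q_k$, respectively $q_k$ and $q_\tau$), so that the telescoping of the iterated-expectation substitutions is legitimate on the restricted subspaces $\mathcal{K}$ and $\mathcal{T}$ and no stray diagonal term survives; the single place where a genuine modelling input is needed is the final cancellation, which is exactly where $\tau_0$ solving the $W$-conditional bridge equation makes $g_0 - \tau_0$ orthogonal to every $W$-measurable function.
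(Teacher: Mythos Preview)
Your proposal is correct and uses exactly the same ingredients as the paper's proof: the Riesz identity $\E{m_0(O;k)}=\E{q_{k,0}(Z)k(A)}$ from assumption \ref{a:iv-strong-relevance}, the relation $\E{q_k(Z)\tau(Z)}=\E{q_{\tau,0}(W;q_k)\tau(Z)}$ from theorem \ref{th:nonlinear-strong-id}, the definition $\alpha_{g,0}=\E{q_k-q_\tau\,|\,Z}$, the integral equation $\E{k_0(A)\,|\,Z}=g_0-\tau_0$, and the bridge condition $\E{g_0-\tau_0\,|\,W}=0$. The only difference is organizational: the paper telescopes stage-by-stage (first the $k$-block, then the $\tau$-block, then the $g$-block, each time producing one bilinear term plus a leftover that feeds the next stage), whereas you expand $m_3$ all at once, assemble the three bilinear terms by adding and subtracting the ``missing halves,'' and verify at the end that the diagonal residual equals $\theta_0$; both routes yield the same computation.
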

This robust error decomposition admits double robustness in corollary \ref{co:double-robustness} and Neyman orthogonality in corollary \ref{co:neyman-orthogonality}. Double robustness here means that $\theta_0$ is still correctly estimated when \emph{either} the nuisances $(k, \tau, g)$ \emph{or} the debiasing functions $q_k, q_\tau, \alpha_g$, which are closely related to Riesz representers, are estimated correctly. Interestingly, the debiasing functions depend on each other in opposite order compared to the ordinary nuisances. Specifically, while nuisance $k$ depends on $(g, \tau)$, and nuisance $\tau$ depends on $g$, the debiasing function (Riesz representer) $\alpha_g$ depends on $(q_k, q_\tau)$, and the debiasing function $q_\tau$ depends on $q_k$. 
\begin{corollary}[Double robustness]  \label{co:double-robustness}
Suppose $\theta_0 = \E{m_0(O; k_0(\tau_0, g_0))}$, assumption \ref{a:iv-strong-relevance}, and the conditions for theorem \ref{th:nonlinear-strong-id} hold for $q = q_\tau$. Then, if \\
$\left( (k = k_0(\tau_0, g_0)) \lor (q_k = q_{k, 0}) \right) \land \left( (\tau = \tau_0(g_0)) \lor (q_\tau = q_{\tau, 0}) \right) \land \left( (g = g_0) \lor (\alpha_g = \alpha_{g, 0}) \right)$,
\begin{align*}
\theta_0 
 &= \E{m_3(O; k, \tau, g, q_k, q_\tau, \alpha_g)}.
\end{align*}
\end{corollary}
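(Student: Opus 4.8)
The plan is to derive the corollary directly from the robust error decomposition of Theorem~\ref{th:error-decomposition}. First I would observe that the hypotheses assumed in the corollary — namely $\theta_0 = \E{m_0(O; k_0(\tau_0, g_0))}$, Assumption~\ref{a:iv-strong-relevance}, and the conditions of Theorem~\ref{th:nonlinear-strong-id} holding for $q = q_\tau$ — are exactly the hypotheses of Theorem~\ref{th:error-decomposition}. Hence that theorem applies and represents the estimation error $\E{m_3(O; k, \tau, g, q_k, q_\tau, \alpha_g)} - \theta_0$ as a sum of three terms, each a product of a ``debiasing error'' factor and a ``nuisance error'' factor: (i) $\E{(q_{k,0}(Z) - q_k(Z))(k(A;\tau,g) - k_0(A;\tau_0,g_0))}$, (ii) $\E{(q_{\tau,0}(W;q_k) - q_\tau(W;q_k))(\tau_0(Z;g_0) - \tau(Z;g))}$, and (iii) $\E{(\alpha_{g,0}(Z;q_k,q_\tau) - \alpha_g(Z;q_k,q_\tau))(g(Z) - g_0(Z))}$.

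Next I would argue term by term that the three disjunctions in the hypothesis annihilate these terms. For term (i): if $k = k_0(\tau_0,g_0)$ the factor $k(A;\tau,g) - k_0(A;\tau_0,g_0)$ is identically zero, whereas if $q_k = q_{k,0}$ the factor $q_{k,0}(Z) - q_k(Z)$ is identically zero; in either branch of $\left((k = k_0(\tau_0,g_0)) \lor (q_k = q_{k,0})\right)$ the integrand vanishes pointwise, so term (i) $= 0$. The identical dichotomy applied to $\left((\tau = \tau_0(g_0)) \lor (q_\tau = q_{\tau,0})\right)$ kills term (ii), and applied to $\left((g = g_0) \lor (\alpha_g = \alpha_{g,0})\right)$ kills term (iii). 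Summing the three vanishing terms yields $\E{m_3(O; k, \tau, g, q_k, q_\tau, \alpha_g)} - \theta_0 = 0$, which is the asserted identity.

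The only point that warrants care — and the closest thing to an obstacle — is the nested, oppositely ordered dependence of the nuisances and the debiasing functions emphasised just before the corollary: $k$ is defined relative to $(\tau,g)$ and $\tau$ relative to $g$, while $\alpha_g$ is defined relative to $(q_k,q_\tau)$ and $q_\tau$ relative to $q_k$. I would stress that this introduces no circularity, because in the decomposition every debiasing function is evaluated at whatever (possibly misspecified) arguments are actually plugged in: both $q_{\tau,0}$ and $q_\tau$ enter as $q_{\tau,0}(\cdot;q_k)$ and $q_\tau(\cdot;q_k)$ with the \emph{same} $q_k$, and both $\alpha_{g,0}$ and $\alpha_g$ enter with the same $(q_k,q_\tau)$. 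Accordingly ``$q_\tau = q_{\tau,0}$'' is read as equality of the two functions at that common $q_k$, and ``$\alpha_g = \alpha_{g,0}$'' as equality at the common $(q_k,q_\tau)$; under this reading each disjunction genuinely forces its factor to zero irrespective of how the remaining components are specified, so the term-by-term cancellation goes through cleanly. Beyond this bookkeeping I anticipate no further analytic difficulty, since the substantive content is already packaged in Theorem~\ref{th:error-decomposition}.
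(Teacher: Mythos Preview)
Your proposal is correct and follows essentially the same approach as the paper's own proof: invoke the error decomposition of Theorem~\ref{th:error-decomposition}, then observe that each of the three disjunctions in the hypothesis forces one factor of the corresponding product term to vanish identically, so each term is zero irrespective of the remaining arguments. Your additional discussion of the nested, oppositely ordered dependence of nuisances and debiasing functions is actually more careful than the paper's proof, which simply states the three implications without elaboration.
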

A special form of double robustness and Neyman orthogonality holds in the sequential debiasing setting, where at each debiasing step, either the ordinary nuisance or the debiasing function must be correct, but never both. For example, when $\tau \notin \mathcal{T}_\text{valid}$, $q_\tau \in \mathcal{Q}_{\tau, 0}(q_k)$ is necessary to still correctly estimate $\theta_0$. However, it suffices that the ordinary nuisance is correct with $k \in \mathcal{K}_0$, while the debiasing function at the same stage may be incorrect as $q_k \notin \mathcal{Q}_{k, 0}$. Specifically, double robustness and Neyman orthogonality is not restricted to all ordinary nuisances being jointly correct when a debiasing function is incorrect. Instead, the corresponding ordinary nuisance for the incorrect debiasing function must be correct. It suffices if at all other steps in the sequential debiasing that a given nuisance function affects either the ordinary nuisance or debiasing function is correct. This logic is captured by the conditions in corollary \ref{co:neyman-orthogonality}.

\begin{corollary}[Neyman orthogonality]  \label{co:neyman-orthogonality}
Suppose $\theta_0 = \E{m_0(O; k_0(\tau_0, g_0))}$, assumption \ref{a:iv-strong-relevance}, and the conditions for theorem \ref{th:nonlinear-strong-id} hold for $q = q_\tau$. Let $k \in \mathcal{K}$, $\tau \in \mathcal{T}$, $g \in \mathcal{G}$, $k_0 \in \mathcal{K}_0$, $\tau_0 \in \mathcal{T}_\text{valid}$, $g_0(Z) = \E{g_Y(Y) | Z}$, and $q_k \in \mathcal{Q}_k$, $q_\tau \in \mathcal{Q}_\tau(q_k)$, $\alpha_g \in \mathcal{L}_2(Z)$, $q_{k, 0} \in \mathcal{Q}_{k, 0}$, $q_{\tau, 0} \in \mathcal{Q}_{\tau, 0}(q_k)$, $\alpha_{g, 0} \in \Xi_{g, 0}(q_k, q_\tau)$. Let $\partial_g f(X; g, h) \coloneqq \left. \frac{\partial}{\partial t} f(X; g_0 + t g, h) \right|_{t=0}$ for any functions $f$, $g$, and $h$. Then:

Under no further conditions,
\begin{align*}
0 &=\left. \frac{\partial}{\partial t}  \E{m_3(O; k_0 + t k, \tau, g, q_k, q_{\tau, 0}, \alpha_g)} \right|_{t=0} = \left( q_{k, 0}(Z) - q_k(Z) \right).
\end{align*}
If $\left( \left( \partial_\tau k(\tau, g) = 0 \right) \lor \left( q_k = q_{k, 0} \right) \right)$,
\begin{align*}
0 &=\left. \frac{\partial}{\partial t}  \E{m_3(O; k, \tau_0 + t \tau, g, q_k, q_{\tau, 0}, \alpha_g)} \right|_{t=0}.
\end{align*}
If $\left( (\partial_g k(\tau, g) = 0) \lor (q_k = q_{k, 0}) \right) \land \left( (\partial_g \tau(g) = 0) \lor (q_\tau = q_{\tau, 0}) \right)$,
\begin{align*}
0 &=\left. \frac{\partial}{\partial t}  \E{m_3(O; k, \tau, g_0 + t g, q_k, q_\tau, \alpha_{g, 0})} \right|_{t=0}.
\end{align*}
If $\left( (\tau = \tau_0(g_0)) \lor (q_\tau = q_{\tau, 0}) \right) \land \left( (g = g_0) \lor (\alpha_g = \alpha_{g, 0}) \right)$,
\begin{align*}
0 &=\left. \frac{\partial}{\partial t}  \E{m_3(O; k_0(\tau_0, g_0), \tau, g, q_{k, 0} + t q_k, q_\tau, \alpha_g)} \right|_{t=0}.
\end{align*}
If $\left( (g = g_0) \lor (\alpha_g = \alpha_{g, 0}) \right)$,
\begin{align*}
0 &=\left. \frac{\partial}{\partial t}  \E{m_3(O; k, \tau_0(g_0), g, q_k, q_{\tau, 0} + t q_\tau, \alpha_g)} \right|_{t=0}.
\end{align*}
Under no further conditions,
\begin{align*}
0 &=\left. \frac{\partial}{\partial t}  \E{m_3(O; k, \tau, g_0, q_k, q_\tau, \alpha_{g, 0} + t \alpha_g)} \right|_{t=0}.
\end{align*}
\end{corollary}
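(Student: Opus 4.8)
I would derive every displayed line from the robust error decomposition of Theorem~\ref{th:error-decomposition} rather than expanding and differentiating $\E{m_3}$ slot by slot; the latter is possible but forces one to re-derive the Riesz identities and cancellations that Theorem~\ref{th:error-decomposition} already packages. Write that identity as
\begin{align*}
\E{m_3(O; k, \tau, g, q_k, q_\tau, \alpha_g)} - \theta_0 = R_1 + R_2 + R_3,
\end{align*}
with $R_1 = \E{(q_{k,0}(Z)-q_k(Z))(k(A;\tau,g)-k_0(A;\tau_0,g_0))}$, $R_2 = \E{(q_{\tau,0}(W;q_k)-q_\tau(W;q_k))(\tau_0(Z;g_0)-\tau(Z;g))}$, and $R_3 = \E{(\alpha_{g,0}(Z;q_k,q_\tau)-\alpha_g(Z;q_k,q_\tau))(g(Z)-g_0(Z))}$. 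The corollary's standing hypotheses ($\theta_0 = \E{m_0(O;k_0(\tau_0,g_0))}$, assumption~\ref{a:iv-strong-relevance}, and the conditions for Theorem~\ref{th:nonlinear-strong-id} with $q=q_\tau$) coincide with those of Theorem~\ref{th:error-decomposition}, so the identity is in force. Each $R_j$ pairs a \emph{debiasing error} with a \emph{nuisance error}; for a given line I would substitute the stated one-parameter path into $R_1+R_2+R_3$, differentiate at $t=0$ (differentiation under the expectation is routine and not the crux), and read off the claim.

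The engine is one elementary fact about bilinear remainders. If along the chosen path one factor of $R_j$ is identically zero --- because the matching object is plugged in at its true value, i.e.\ one of $q_k=q_{k,0}$, $q_\tau=q_{\tau,0}$, $\alpha_g=\alpha_{g,0}$, $k=k_0(\tau_0,g_0)$, $\tau=\tau_0(g_0)$, $g=g_0$ holds --- then $R_j\equiv 0$ along the entire path and contributes nothing, and this stays true even when the truth appearing in the \emph{other} factor drifts with the perturbed argument, since a vanishing factor kills the product. For the single remainder whose two factors both move (the one matching the perturbed slot), the product rule leaves $\E{(\text{other factor, at the truth})\times(\text{perturbation direction})}$, which vanishes exactly when that other factor is zero at the truth. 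The ``$\lor$/$\land$'' structure of each line's hypothesis is precisely the statement ``for every $R_j$, at least one of its two factors is at its true value along the path''. The only line treated slightly differently is the first: there $q_k$ is left general, $q_\tau=q_{\tau,0}$ forces $R_2\equiv 0$, $R_3$ is constant in the $k$-slot, and differentiating $R_1$ --- together with linearity of $k\mapsto\E{m_0(O;k)}$, the Riesz identity $\E{m_0(O;k)}=\E{\alpha_{k,0}(A)k(A)}$, and the normal equation defining $\mathcal{Q}_{k,0}$, which makes $\alpha_{k,0}$ the double projection of $q_{k,0}$ so that $\alpha_{k,0}-\Pi_{\mathcal{K}}\E{q_k|A}=\Pi_{\mathcal{K}}\E{q_{k,0}(Z)-q_k(Z)|A}$ --- collapses the derivative to $\E{(q_{k,0}(Z)-q_k(Z))k(A)}$, the announced residual (the corollary writes it in abbreviated form), which vanishes once $q_k=q_{k,0}$.

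The substantive work, and the main obstacle, is bookkeeping the chained dependencies built into $m_3$: on the nuisance side $k=k(\tau,g)$ is built from $(\tau,g)$ and $\tau=\tau(g)$ from $g$; on the debiasing side $q_\tau=q_\tau(q_k)$ is built from $q_k$ and $\alpha_g=\alpha_g(q_k,q_\tau)$ from $(q_k,q_\tau)$. A perturbation of a ``deep'' argument therefore propagates into several remainders: moving $g$ moves $g$ in $R_3$, $\tau$ in $R_2$, and $k$ in $R_1$ (through both $\partial_g k$ and $\partial_g\tau$); moving $\tau$ moves $k$ in $R_1$ (through $\partial_\tau k$); moving $q_k$ moves $q_\tau$ in $R_2$ and $\alpha_g$ in $R_3$ besides $q_k$ in $R_1$; moving $q_\tau$ moves $\alpha_g$ in $R_3$. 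The Gateaux conditions $\partial_\tau k=0$, $\partial_g k=0$, $\partial_g\tau=0$ appearing in the $\tau$- and $g$-lines are exactly the conditions under which those propagated pieces are absent, while the alternative hypotheses $q_k=q_{k,0}$, $q_\tau=q_{\tau,0}$ kill the \emph{same} pieces by zeroing the relevant debiasing factor in $R_1$, resp.\ $R_2$. One must check line by line that every propagation path lands inside an $R_j$ that either already carries a zero factor or is neutralised by one of the stated $\partial$-conditions, and that no path has been omitted; this is mechanical once the dependency graph is drawn, but it is where the care goes.

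Two closing remarks. The lines that perturb $q_k$, $q_\tau$, $\alpha_g$ around their truths are the exact mirror of the $k$-, $\tau$-, $g$-lines, with the roles of nuisance and debiasing factor in each $R_j$ swapped; in particular the $\alpha_g$-line needs no side condition beyond $g=g_0$ being plugged in, since then the nuisance factor $g-g_0$ of $R_3$ vanishes identically in $\alpha_g$ while $R_1,R_2$ do not see $\alpha_g$. And for the $q_\tau$- and $\alpha_g$-lines one genuinely needs Theorem~\ref{th:nonlinear-strong-id} (invoked with $q=q_\tau$): it is what makes $\Xi_{\tau,0}(q_k)$, hence $\mathcal{Q}_{\tau,0}(q_k)$, nonempty, so that ``$q_\tau=q_{\tau,0}$'' is attainable, and what gives the Riesz representer of $\tau\mapsto\E{q_k(Z)\tau(Z)}$ the smoothness (membership in the $\mathcal{N}^\perp$ of the relevant composition, equivalently for the $W$- and the $U$-projection operators) that underlies the very shapes of $R_2$ and $R_3$ --- without it the decomposition of Theorem~\ref{th:error-decomposition} would not be available in the first place.
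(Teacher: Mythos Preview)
Your approach is essentially the same as the paper's: invoke the robust error decomposition of Theorem~\ref{th:error-decomposition} to obtain $R_1+R_2+R_3$, then differentiate term by term along each one-parameter path, using the bilinear structure of each $R_j$ and tracking the chained dependencies ($k$ on $(\tau,g)$, $\tau$ on $g$, $q_\tau$ on $q_k$, $\alpha_g$ on $(q_k,q_\tau)$) exactly as the paper does through its $\partial_\tau k$, $\partial_g k$, $\partial_g\tau$, $\partial_{q_k}(\cdot)$, $\partial_{q_\tau}(\cdot)$ terms. Your write-up is in fact more explicit about the dependency bookkeeping and about the first line's derivative being $\E{(q_{k,0}(Z)-q_k(Z))k(A)}$ rather than the paper's abbreviated $\E{q_{k,0}(Z)-q_k(Z)}$, but the method is identical.
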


Error bounds for estimation in corollary \ref{co:error-bounds} explicitly reflect the sequential nature of debiasing function estimation. However, it is worth recalling that also the ordinary nuisances must be estimated sequentially, implicitly implying the same sequentiality in the opposite direction. As usual in debiased machine learning, error bounds are products of estimation errors from ordinary nuisance functions and debiasing functions. This product form of error bounds allows $\sqrt{n}$-estimation of $\theta_0$ under the standard assumptions in debiased machine learning on faster than $n^{1/4}$-estimation of ordinary nuisances and debiasing functions and appropriate sample splitting. Notably, when nuisances are defined via Fredholm integral equations as they are here, the error rates are generally rates of projections of nuisances. Crucially, having error bounds based on projections of nuisances eliminates the problem from ill-posedness \citep{bennett2022}, which is a common feature of inverse problems.
\begin{corollary}[Error bounds] \label{co:error-bounds}
Suppose $\theta_0 = \E{m_0(O; k_0(\tau_0, g_0))}$, assumption \ref{a:iv-strong-relevance}, and the conditions for theorem \ref{th:nonlinear-strong-id} hold for $q = q_\tau$. Then,
\begin{footnotesize}
\begin{align*}
&\left| \E{m_3(O; \hat{k}_n, \hat{\tau}_n, \hat{g}, \hat{q}_{k, n}, \hat{q}_{\tau, n}, \hat{\alpha}_{g, n})} - \theta_0 \right| \\
&\ \  \leq  \min \left\{ \left\lVert P^{A, \mathcal{K}}_{\mathcal{L}_2(Z)} \left(k_0(\tau_0, g_0) - \hat{k}_n(\hat{\tau}_n, \hat{g}_n)\right) \right\rVert_2 \left\lVert q_{k, 0} - \hat{q}_{k, n} \right\rVert_2, \ \left\lVert k_0(\tau_0, g_0) - \hat{k}_n(\hat{\tau}_n, \hat{g}_n) \right\rVert_2 \left\lVert P_{A, \mathcal{K}}^{\mathcal{L}_2(Z)} \left( q_{k, 0} - \hat{q}_{k, n} \right) \right\rVert_2  \right\}  \\
&\ \ \  + \min \left\{ \left\lVert P^{\mathcal{L}_2(W)}_{Z, \mathcal{T}} \left(\tau_0(g_0) - \hat{\tau}_n(\hat{g}_n) \right) \right\rVert_2 \left\lVert q_{\tau, 0}(\hat{q}_{k, n}) - \hat{q}_{\tau, n}(\hat{q}_{k, n}) \right\rVert_2, \ \left\lVert \tau_0(g_0) - \hat{\tau}_n(\hat{g}_n) \right\rVert_2 \left\lVert P_{\mathcal{L}_2(W)}^{Z, \mathcal{T}}  \left( q_{\tau, 0}(\hat{q}_{k, n}) - \hat{q}_{\tau, n}(\hat{q}_{k, n}) \right) \right\rVert_2  \right\} \\
&\ \ \  + \left\lVert \left(\hat{g}_n - g_0 \right) \right\rVert_2 \left\lVert \alpha_{g, 0}(Z; \hat{q}_{k, n}, \hat{q}_{\tau, n}) - \hat{\alpha}_{g, n}(Z; \hat{q}_{k, n}, \hat{q}_{\tau, n}) \right\rVert_2.
\end{align*}
\end{footnotesize}
\end{corollary}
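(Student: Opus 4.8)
The plan is to read the inequality off the robust error decomposition in Theorem~\ref{th:error-decomposition}, treating each of its three summands as a bilinear form in a nuisance error and a debiasing-function error, and bounding each by Cauchy--Schwarz after moving a conditional expectation onto one of the two factors. The hypotheses of this corollary are exactly those under which Theorem~\ref{th:error-decomposition} applies, so the first step is immediate: instantiating it at the sample estimates, with $k=\hat k_n(\hat\tau_n,\hat g_n)$, $\tau=\hat\tau_n(\hat g_n)$, $g=\hat g_n$, $q_k=\hat q_{k,n}$, $q_\tau=\hat q_{\tau,n}(\hat q_{k,n})$, $\alpha_g=\hat\alpha_{g,n}(\hat q_{k,n},\hat q_{\tau,n})$, and with the oracle objects $q_{k,0}$, $q_{\tau,0}(\hat q_{k,n})$, $\alpha_{g,0}(\,\cdot\,;\hat q_{k,n},\hat q_{\tau,n})$ taken to be the true debiasing functions evaluated at the plugged-in estimates — precisely the dependence pattern displayed in the statement. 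The triangle inequality then leaves three terms to bound separately:
\[
\mathrm{I}=\E{(q_{k,0}(Z)-\hat q_{k,n}(Z))(\hat k_n(A)-k_0(A))},
\]
$\mathrm{II}$ the analogous $W$--$Z$ term in $(\tau_0-\hat\tau_n)$ and $(q_{\tau,0}-\hat q_{\tau,n})$, and $\mathrm{III}$ the $Z$--$Z$ term in $(g_0-\hat g_n)$ and $(\alpha_{g,0}-\hat\alpha_{g,n})$.

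For $\mathrm{I}$ I would use that $\hat k_n-k_0\in\mathcal K$ and $q_{k,0}-\hat q_{k,n}\in\mathcal L_2(Z)$. Conditioning on $Z$ by iterated expectations gives $\mathrm{I}=\langle q_{k,0}-\hat q_{k,n},\,P^{A,\mathcal K}_{\mathcal L_2(Z)}(\hat k_n-k_0)\rangle$ since $[P^{A,\mathcal K}_{\mathcal L_2(Z)}k](Z)=\E{k(A)|Z}$, whence Cauchy--Schwarz yields $\lvert\mathrm{I}\rvert\le\lVert q_{k,0}-\hat q_{k,n}\rVert_2\,\lVert P^{A,\mathcal K}_{\mathcal L_2(Z)}(\hat k_n-k_0)\rVert_2$; conditioning on $A$ instead and inserting the $\mathcal L_2$-projection $\Pi_{\mathcal K}$ onto $\mathcal K$ (which leaves the inner product unchanged because $\hat k_n-k_0$ already lies in $\mathcal K$) gives $\mathrm{I}=\langle P_{A,\mathcal K}^{\mathcal L_2(Z)}(q_{k,0}-\hat q_{k,n}),\,\hat k_n-k_0\rangle$, hence $\lvert\mathrm{I}\rvert\le\lVert\hat k_n-k_0\rVert_2\,\lVert P_{A,\mathcal K}^{\mathcal L_2(Z)}(q_{k,0}-\hat q_{k,n})\rVert_2$. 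Taking the smaller of the two reproduces the first line of the claimed bound. Term $\mathrm{II}$ is handled identically after replacing $(\mathcal K,A,Z)$ by $(\mathcal T,Z,W)$ and the operators by the corresponding $W$--$Z$ conditional-expectation operator $P_{\mathcal L_2(W)}^{Z,\mathcal T}$ and its adjoint $P^{\mathcal L_2(W)}_{Z,\mathcal T}$ (with $\Pi_{\mathcal T}$ playing the role of $\Pi_{\mathcal K}$); this is where the standing hypothesis that the conditions of Theorem~\ref{th:nonlinear-strong-id} hold for $q=q_\tau$ is used, as it is what underlies the relevant adjoint identities and the coincidence of the $W$- and $U$-based null spaces in that step. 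For $\mathrm{III}$, both $g_0-\hat g_n$ and $\alpha_{g,0}-\hat\alpha_{g,n}$ already live in $\mathcal L_2(Z)$, so a single Cauchy--Schwarz gives $\lvert\mathrm{III}\rvert\le\lVert\hat g_n-g_0\rVert_2\,\lVert\alpha_{g,0}(\,\cdot\,;\hat q_{k,n},\hat q_{\tau,n})-\hat\alpha_{g,n}(\,\cdot\,;\hat q_{k,n},\hat q_{\tau,n})\rVert_2$, and no minimum appears because the functional $g\mapsto\E{(q_k-q_\tau)g}$ acts on $\mathcal L_2(Z)$ itself and its Riesz representer $\E{q_k-q_\tau|Z}$ is already a function of $Z$: there is no inverse problem here, hence nothing to absorb. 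Summing the three bounds is the claim.

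The only routine ingredient is the pair of adjoint identities $\langle P^{A,\mathcal K}_{\mathcal L_2(Z)}k,\,q\rangle=\langle k,\,P_{A,\mathcal K}^{\mathcal L_2(Z)}q\rangle$ for $k\in\mathcal K$, $q\in\mathcal L_2(Z)$ and its $(\mathcal T,Z,W)$ analogue, which follow from iterated expectations together with idempotency and self-adjointness of $\Pi_{\mathcal K}$ and $\Pi_{\mathcal T}$. The part I expect to be the main obstacle is not an inequality but the bookkeeping around the conditionally-defined oracle objects: one must (i) check that the oracle debiasing functions appearing in Theorem~\ref{th:error-decomposition} are exactly the $\hat q_{k,n},\hat q_{\tau,n}$-evaluated objects in the corollary \emph{before} any norm is taken, since they are defined conditionally on whichever $q_k,q_\tau$ are plugged in; and (ii) ensure the nuisance estimators respect their constraint sets, i.e. $\hat k_n\in\mathcal K$, $\hat\tau_n\in\mathcal T$, $\hat g_n\in\mathcal G$, so that differences such as $\hat k_n-k_0$ genuinely lie in $\mathcal K$ and $\Pi_{\mathcal K}$ acts as the identity on them; if an estimator is only asymptotically in its space, an extra $o_p$ remainder must be tracked through the projection steps. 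With those two points settled, the rest is the triangle inequality and Cauchy--Schwarz.
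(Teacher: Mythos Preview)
Your approach is essentially identical to the paper's: invoke the robust error decomposition of Theorem~\ref{th:error-decomposition}, bound the absolute value by the triangle inequality, then for each of the first two summands rewrite the cross-expectation via iterated expectations as an inner product with the conditional-expectation operator (or its adjoint) acting on one factor, apply Cauchy--Schwarz both ways, and take the minimum; the third summand gets a direct Cauchy--Schwarz since both factors already live in $\mathcal L_2(Z)$. One small misattribution: the hypothesis that the conditions of Theorem~\ref{th:nonlinear-strong-id} hold for $q=q_\tau$ is consumed upstream in establishing Theorem~\ref{th:error-decomposition} itself, not in the adjoint manipulation for Term~II, which is pure iterated expectations plus the self-adjointness of $\Pi_{\mathcal T}$.
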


\begin{definition}[Debiased Machine Learning Estimator]  \label{d:dml}
Take $K$ folds with index set $\mathcal{I}_j$ for $k = 1, 2, \hdots, K$ of (approximately) equal size. Construct all nuisance estimators $K$ nuisance estimators $\eta^{(j)}$ with all data except data in $\mathcal{I}_j$ for $\eta \in \{ k, \tau, g, q_k, q_\tau, \alpha_g \}$. The debiased machine learning estimator is
\begin{align}
\hat{\theta}_n = \frac{1}{K} \sum_{k=1}^K \frac{1}{|\mathcal{I}_j|} \sum_{i \in \mathcal{I}_j} m_3(O; \hat{k}^{(j)}, \hat{\tau}^{(j)}, \hat{g}^{(j)}, \hat{q}_k^{(j)}, \hat{q}_\tau^{(j)}, \hat{\alpha}_g^{(j)}).
\end{align}
\end{definition}

\begin{definition}[L2 Convergence Rates of Nuisance Estimators]  \label{d:l2-convergence-rates}
Direct convergence rates:
\begin{align*}
\left\lVert k_0(\tau_0, g_0) - \hat{k}_n(\hat{\tau}_n, \hat{g}_n) \right\rVert_2 &\leq \epsilon_{n, k} \\
\left\lVert \tau_0(g_0) - \hat{\tau}_n(\hat{g}_n) \right\rVert_2 &\leq \epsilon_{n, \tau} \\
\left\lVert \hat{g}_n - g_0 \right\rVert_2 &\leq \epsilon_{n, g} \\
\left\lVert q_{k, 0} - \hat{q}_{k, n} \right\rVert_2 &\leq \epsilon_{n, q_k} \\
\left\lVert q_{\tau, 0}(q_k) - \hat{q}_{\tau, n}(q_k) \right\rVert_2 &\leq \epsilon_{n, q_\tau} \text{ for any } q_k \in \mathcal{Q}_k \\
\left\lVert \alpha_{g, 0}(q_k, q_\tau) - \hat{\alpha}_{g, n}(q_k, q_\tau) \right\rVert_2 &\leq \epsilon_{n, \alpha_g} \text{ for any } q_k \in \mathcal{Q}_k, q_\tau \in \mathcal{Q}_\tau
\end{align*}
Projected convergence rates:
\begin{align*}
\left\lVert P^{A, \mathcal{K}}_{\mathcal{L}_2(Z)} \left(k_0(\tau_0, g_0) - \hat{k}_n(\hat{\tau}_n, \hat{g}_n)\right) \right\rVert_2 &\leq \epsilon_{n, k, \mathcal{L}_2(Z)} \\
\left\lVert P_{A, \mathcal{K}}^{\mathcal{L}_2(Z)} \left( q_{k, 0} - \hat{q}_{k, n} \right) \right\rVert_2 &\leq \epsilon_{n, q_k, \mathcal{K}} \\
\left\lVert P_{\mathcal{L}_2(W)}^{Z, \mathcal{T}} \left(\tau_0(g_0) - \hat{\tau}_n(\hat{g}_n) \right) \right\rVert_2 &\leq \epsilon_{n, \tau, \mathcal{L}_2(W)} \\
\left\lVert P^{\mathcal{L}_2(W)}_{Z, \mathcal{T}} \left( q_{\tau, 0}(q_k) - \hat{q}_{\tau, n}(q_k) \right) \right\rVert_2 &\leq \epsilon_{n, q_\tau, \mathcal{T}} \text{ for any } q_k \in \mathcal{Q}_k \\
\end{align*}
\end{definition}

\begin{assumption}[Rate conditions on nuisance estimators]  \label{a:rate-conditions}
\begin{enumerate}
  \item $\lVert \hat{\eta}_n - \eta_0 \rVert_2^2 = o_p(1)$ for any $\eta \in \left\{k, \tau, g, q_k, q_\tau, \alpha_g\right\}$.
  \item The following requirements on the convergence rates in L2 norm hold:
\begin{align*}
o_p \left( n^{-1/2} \right) &= 
 \min \left\{ \epsilon_{n, k, \mathcal{L}_2(Z)} \epsilon_{n, q_k}, \epsilon_{n, k} \epsilon_{n, q_k, \mathcal{K}}  \right\} \\
o_p \left( n^{-1/2} \right) &=  \min \left\{ \epsilon_{n, \tau, \mathcal{L}_2(W)} \epsilon_{n, q_\tau}, \epsilon_{n, \tau} \epsilon_{n, q_\tau, \mathcal{T}}  \right\} \\
o_p \left( n^{-1/2} \right) &=  \epsilon_{n, g} \epsilon_{n, \alpha_g}
\end{align*}
\end{enumerate}
\end{assumption}

Assumption \ref{a:rate-conditions} is useful, because it can circumvent problems with possibly poor convergence rates in ill-posed inverse problems as long as the projections of nuisance functions are estimated with sufficient convergence rates, which may be much easier to satisfy. Firstly, we require mean square convergence in probability of each nuisance function estimator. In a more general framework, \citep{bennett2022} in assumptions 12-21b provide extensive conditions under which mean square convergence in probability is satisfied for minimum-norm nuisance functions identified as solutions to Fredholm integral equations of the first kind. The requirements on convergence rates in L2 norm in assumption \ref{a:rate-conditions} are doubly robust, and allow for convergence rates of projections instead of the nuisance functions themselves. As a result, estimation is robust not only in the typical debiased machine learning sense to slower than parametric estimation of nuisance parameters, but also to the ill-posedness of inverse problems that define some nuisances by only requiring L2 norm convergence in more favourable projections. For example, it suffices if all projections of nuisances identified by integral equations satisfy $o_p \left( n^{-1/4} \right)$ convergence in L2 norm.

\begin{theorem}[Debiased Estimation of Target Parameter]  \label{th:dr-est}
Let $\hat{\theta}_n$ be a debiased machine learning estimator as in definition \ref{d:dml}. 
Suppose $\theta_0 = \E{m_0(O; k_0(\tau_0, g_0))}$, assumption \ref{a:iv-strong-relevance}, assumption \ref{a:rate-conditions}, and the conditions for theorem \ref{th:nonlinear-strong-id} hold for $q = q_\tau$. Then,
\begin{align*}
\hat{\theta}_k - \theta_0 &= \mathbb{E}_{n, k} \left[ m_3(O; k_0, \tau_0, g_0, q_{k, 0}, q_{\tau, 0}, \alpha_{g, 0}) - \theta_0 \right] + o_p \left( n^{-1/2} \right) \\
\sqrt{n} \left( \hat{\theta}_k - \theta_0 \right) &= \frac{1}{\sqrt{n}} \sum_{i=1}^n \left( m_3(O; k_0, \tau_0, g_0, q_{k, 0}, q_{\tau, 0}, \alpha_{g, 0}) - \theta_0 \right) + o_p(1) \\
 &\rightarrow N(0, \sigma_0^2), \ \ \ \ \sigma_0^2 = \E{\left( m_3(O; k_0, \tau_0, g_0, q_{k, 0}, q_{\tau, 0}, \alpha_{g, 0}) - \theta_0 \right)^2} \\
 \frac{\sqrt{n}}{\hat{\sigma}^2_n} \left( \hat{\theta}_k - \theta_0 \right) &\rightarrow N(0, 1) \ \text{ for any } \ \hat{\sigma} \ \text{ s.t. } \ \hat{\sigma}^2_n \rightarrow \sigma^2_0 \ \text{ in probability}.
\end{align*}
\end{theorem}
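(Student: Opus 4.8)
The argument follows the standard template for cross-fitted debiased machine learning estimators, using the already-established error decomposition and bounds as black boxes. Write $\eta=(k,\tau,g,q_k,q_\tau,\alpha_g)$ for the collection of nuisances, and fix a choice of true nuisances $\eta_0=(k_0,\tau_0,g_0,q_{k,0},q_{\tau,0},\alpha_{g,0})$ with $k_0\in\mathcal{K}_0$, $\tau_0\in\mathcal{T}_\text{valid}$, $g_0(Z)=\E{g_Y(Y)|Z}$, etc., such that $\theta_0=\E{m_3(O;\eta_0)}$ (possible by the lemma preceding Theorem \ref{th:error-decomposition}) and such that the convergence statements of Definition \ref{d:l2-convergence-rates} and Assumption \ref{a:rate-conditions} hold for this representative; non-uniqueness of the Fredholm solutions is harmless because the error bounds of Corollary \ref{co:error-bounds} are stated through projections. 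For each fold $j$, $\hat\eta^{(j)}$ is built from the data outside $\mathcal{I}_j$ and is therefore independent of $\{O_i:i\in\mathcal{I}_j\}$. The plan is to decompose, with $\mathbb{E}_{n,j}$ the average over $\mathcal{I}_j$ and $\mathcal{I}_j^c$ the out-of-fold sample,
\begin{align*}
\hat\theta_n-\theta_0 &= \frac1K\sum_{j=1}^K \mathbb{E}_{n,j}\!\left[m_3(O;\eta_0)-\theta_0\right] \\
&\quad + \frac1K\sum_{j=1}^K \left( \mathbb{E}_{n,j}\!\left[\Delta_j\right] - \E{\Delta_j \mid \mathcal{I}_j^c} \right) + \frac1K\sum_{j=1}^K \left( \E{m_3(O;\hat\eta^{(j)})\mid\mathcal{I}_j^c} - \theta_0 \right),
\end{align*}
where $\Delta_j\coloneqq m_3(O;\hat\eta^{(j)})-m_3(O;\eta_0)$. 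With (approximately) equal folds the first term equals $\mathbb{E}_n[m_3(O;\eta_0)-\theta_0]$ up to an $o_p(n^{-1/2})$ error; it remains to show the other two terms, call them $R_{1}=K^{-1}\sum_j R_{1,j}$ and $R_{2}=K^{-1}\sum_j R_{2,j}$, are $o_p(n^{-1/2})$.

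\textbf{Bias term $R_2$.} Condition on $\mathcal{I}_j^c$, so that $\hat\eta^{(j)}$ is deterministic. The hypotheses of Theorem \ref{th:dr-est} include Assumption \ref{a:iv-strong-relevance} and the conditions imported for Theorem \ref{th:nonlinear-strong-id} with $q=q_\tau$, which are exactly the hypotheses of Theorem \ref{th:error-decomposition}; applying it gives that $\E{m_3(O;\hat\eta^{(j)})\mid\mathcal{I}_j^c}-\theta_0$ equals the sum of the three cross-product terms displayed there. Corollary \ref{co:error-bounds} bounds this sum by a sum of products of (direct or projected) nuisance estimation errors, and Assumption \ref{a:rate-conditions}.2 makes each product $o_p(n^{-1/2})$. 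Averaging over the finitely many folds preserves the rate, so $R_2=o_p(n^{-1/2})$. This is exactly the step where Neyman orthogonality (Corollary \ref{co:neyman-orthogonality}) earns its keep: it is what forces the conditional bias to be second order in the nuisance errors rather than first order.

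\textbf{Empirical-process remainder $R_1$.} Sample splitting is the key device. Conditionally on $\mathcal{I}_j^c$ the summands $\{m_3(O_i;\hat\eta^{(j)})-m_3(O_i;\eta_0):i\in\mathcal{I}_j\}$ are i.i.d., so $R_{1,j}$ is a centered average over $|\mathcal{I}_j|\asymp n/K$ of them, with conditional mean zero and $\Var{\sqrt n\,R_{1,j}\mid\mathcal{I}_j^c}\le \tfrac{n}{|\mathcal{I}_j|}\,\E{\Delta_{j}^2\mid\mathcal{I}_j^c}=O(1)\cdot\E{\Delta_j^2\mid\mathcal{I}_j^c}$. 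A conditional Chebyshev inequality followed by dominated convergence then reduces the problem to showing $\E{\Delta_j^2\mid\mathcal{I}_j^c}=o_p(1)$. This is where the main work and the main obstacle lie: $m_3$ depends on $\eta$ through sums of products of nuisances (e.g.\ $q_k(Z)(g(Z)-\tau(Z)-k(A))$, $q_\tau(W;q_k)(\tau(Z)-g(Z))$, and $\alpha_g(Z)(g_Y(Y)-g(Z))$), so bounding $\E{\Delta_j^2\mid\mathcal{I}_j^c}$ by the mean-square consistency in Assumption \ref{a:rate-conditions}.1 requires some additional mild regularity — for instance a uniform $L_\infty$ bound (or a uniform $L_4$ envelope) on the nuisance estimators and their limits — so that differences of products can be expanded via Cauchy--Schwarz into sums of single-nuisance $L_2$ errors, each of which vanishes. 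Granting such a regularity condition (standard in this literature, and implicit in the $L_2$-rate bookkeeping of Definition \ref{d:l2-convergence-rates}), $\E{\Delta_j^2\mid\mathcal{I}_j^c}\to_p 0$, hence $\sqrt n\,R_{1,j}\to_p 0$ conditionally and therefore unconditionally, and $R_1=o_p(n^{-1/2})$.

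\textbf{Conclusion.} Combining the three pieces yields $\sqrt n(\hat\theta_n-\theta_0)=n^{-1/2}\sum_{i=1}^n\big(m_3(O_i;\eta_0)-\theta_0\big)+o_p(1)$. Since $\E{m_3(O;\eta_0)}=\theta_0$, the influence-function summands are i.i.d.\ with mean zero; assuming the finite variance $\sigma_0^2=\E{(m_3(O;\eta_0)-\theta_0)^2}<\infty$, the Lindeberg--L\'evy CLT gives $\sqrt n(\hat\theta_n-\theta_0)\to N(0,\sigma_0^2)$. Finally, for any $\hat\sigma_n^2\to_p\sigma_0^2$ — for example the cross-fitted empirical second moment of $m_3(O;\hat\eta^{(j)})-\hat\theta_n$, whose consistency follows from the same $L_2$-consistency and boundedness arguments used for $R_1$ together with $R_1,R_2=o_p(1)$ — Slutsky's theorem delivers the studentized limit $N(0,1)$, completing the proof.
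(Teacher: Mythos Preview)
Your proof is correct and follows essentially the same route as the paper: the same three-way decomposition into the oracle empirical average, the centered empirical-process remainder, and the conditional-bias term, with Corollary~\ref{co:error-bounds} plus Assumption~\ref{a:rate-conditions}.2 handling the bias and conditional Chebyshev plus $L_2$-consistency of the nuisances handling the empirical-process piece. The one place where you are in fact more careful than the paper is the $R_1$ step: the paper asserts by ``simple algebra'' that $\E{(m_3(O;\hat H^{(j)})-m_3(O;H_0))^2}\le c\sum_\eta\lVert\hat\eta^{(j)}-\eta_0\rVert_2^2$, whereas you correctly note that, because $m_3$ contains products of nuisances, this bound tacitly requires an envelope condition (e.g.\ uniform $L_\infty$ or $L_4$ bounds) that is not stated explicitly among the theorem's hypotheses.
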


\section{Practical Guide}  \label{sec:guide}
Straightforward testing and discussion of model assumptions is key in any application. In this section, we provide a practical guide to identification with this approach. Each of the four steps we describe has its own subsection. As in standard IV, the relevance assumptions generally remain testable. The conditional exogeneity assumption is not testable up to a specification test.

\subsection{Find $T$ and test relevance of $Z, W$ for $U$ } \label{ssec:guide-1}
In this step, we test the relevance of $W$ for $U$ (assumption \ref{a:riv}.\ref{a:pl}), as well as the relevance of $Z$ for $U$. The latter is a necessary condition for the relevance of $Z$ for $A$ conditional on $T$ (assumption \ref{a:riv}.\ref{a:iv-rel}), which is tested explicitly in subsection \ref{ssec:guide-2}.

First, we find some $T = \tau(Z)$ such that $Z \CI W \ | \ T$ is satisfied. As long as some $T=\tau(Z)$ leads to the conditional independence of $Z$ and $W$, this control function $\tau \in \mathcal{L}_2(Z)$ also renders $Z$ and $U$ conditionally independent. To test for the sufficient relevance of $Z$ and $W$ with respect to $U$, we can use $T$. A sufficient condition for relevance of $Z$ and $W$ for $U$ is that both $Z$ and $W$ contain spare information conditional on $T$. This motivates the choice of a valid $\tau \in {\mathcal{T}}_\text{valid}$, which captures the least information about $Z$ while ensuring the conditional exogeneity of $Z$ conditional on $T$, as discussed in section \ref{sec:confounding}. To simplify the argument, suppose that our model is linear and the dimensions for $(Z, T, W, U)$ are $(d_Z, d_T, d_W, d_U)$. Often, relevance of $Z$ and $W$ for $U$ imply $\min \{d_Z, d_W\} \geq d_U$. The minimum dimension of $T$ to ensure conditional independence of $Z$ and $W$ is $d_U$, so we know that any $T = \tau(Z)$ such that $Z \CI W \ | \ T$ must satisfy $d_T \geq d_U$. If we can reject a test with the null hypothesis 
\begin{align*}
H_0: \min \{d_Z, d_W\} \leq d_T, \text{ and alternative } H_a: \min \{d_Z, d_W\} > d_T,
\end{align*} 
this implies $\min \{d_Z, d_W\} > d_U$. Thus, there is a test whether $Z$ and $W$ are relevant for $U$. However, $\min \{d_Z, d_W\} > d_T$ is a sufficient, not a necessary condition for the relevance of $Z$ and $W$ for $U$. $Z$ and $W$ are still relevant for $U$ when $\min \{d_Z, d_W\} = d_U$. Unfortunately, this hypothesis is not testable with unobserved $U$. So, how should an applied researcher proceed when $\min \{d_Z, d_W\} = d_T$ (which could mean $\min \{d_Z, d_W\} = d_U$)? Here, we need to distinguish $d_Z = d_T$ from $d_W = d_T$. 
    \begin{enumerate}
      \item[$d_Z = d_T$] When $T$ contains as much information as $Z$, there is no point in moving to step 2. The instruments $Z$ contain no variation conditional on $T$, so $Z$ cannot be relevant for treatment $A$ conditional on $T$. 
      \item[$d_W = d_T$] We know for sure that $d_W \leq d_U$. If $d_W = d_U$, the proxies $W$ are exactly relevant for $U$ without spare information. If we are willing to assume $d_W = d_U$, we could move forward to step 2. The variation in $U$ associated with $Z$ would still be held fixed with $T$ in this case. Yet, $d_W = d_U$ is not testable. It may well be that $d_W < d_U$. Then, the variation in $U$ associated with $Z$ is \emph{not} held fixed with $T$. We can never test the completeness of $W$ for $U$ when $d_W = d_T$. Accordingly, we do not generally suggest to move to step 2 by relying on the assumption $d_W = d_U$, when $d_W = d_T$ is observed.
    \end{enumerate}

\subsection{Test relevance of $Z$ for $A$ given $T$}   \label{ssec:guide-2}
In step 1, the relevance of $Z, W$ for $U$ was confirmed. Now, we test the relevance of the instruments $Z$ for treatment $A$ conditional on $T$ (assumption \ref{a:riv}.\ref{a:iv-rel}). 

Depending on the additional model assumptions, this is either a test of completeness (\ref{a:riv}.\ref{a:iv-rel}), or common support (\ref{a:common-support}). As any $T=\tau(Z)$ is simply the control function $\tau \in \mathcal{L}_2(Z)$ applied to instruments $Z$, the test of relevance of $Z$ for $A$ given $T$ is straightforward for any given $\tau$. It is as simple as a test of instrument relevance with observed confounders. Let us return to a linear model. If all components of the instrument vector $Z$ are correlated with both $U$ and $A$, the conditional relevance requirement simplifies to $(d_Z - d_T) \geq d_A$. After conditioning on all variation in $Z$ which correlates with $U$ by holding $T$ fixed, $d_Z - d_T$ dimensions of instruments $Z$ remain to infer the causal effect of treatment $A$ on outcome $Y$. The remaining instrument variation of dimension $d_Z - d_T$ is relevant for treatment $A$ only if the treatment's dimension $d_A$ is smaller than or equal to $d_Z - d_T$. 

If $Z$ is found to be relevant for $A$ given $T$, it implies that $Z$ is relevant for both $U$ and $A$. Interpreting $U$ as any source of unobserved variation which associates instruments $Z$ and proxies $W$, $Z$ would have no variation conditional on $T$ if they were not sufficiently relevant for $U$ ($d_Z < d_U$). So, if $Z$ is relevant for $A$ given $T$, it implies that $Z$ already had to be relevant for $U$.

\subsection{Exogeneity of $Z$ conditional on $U$}   \label{ssec:guide-3}
In step 1 and 2 we tested all relevance assumptions in this model. The conditional exogeneity assumption for $Z$ conditional on $U$ remains untestable. To be precise, $Y \CI Z \ | \ (A, U)$ (assumption \ref{a:riv}.\ref{a:iv-exog}) can only be justified on theoretic grounds, not observed data. In order to justify conditional exogeneity theoretically, $T$ can be used to understand the unobserved common confounders $U$. As $T$ captures all variation in $Z$ associated with $U$, $T$ immediately explains $U$ in terms of its association with $Z$ and $W$. From the association of $T$ and $W$, we can interpret $U$ even better. For example: If subject-specific pre-college GPA measures are used as instruments $Z$, and $T$ turns out to capture average GPA, then $U$ could be interpreted as general ability. Suppose $W$ contains dummies capturing whether someone has engaged in risky behaviour, including drugs and illegal activity, while in high school. From theory and empirical evidence we would expect high ability to lead to less risky behaviour. Thus, if $T$ is an average GPA, it would be expected to negatively correlated with $W$.

Once we used $T$ to understand the variation reflected by the unobserved confounders $U$, we can construct a theoretic argument with respect to the conditional exogeneity of instruments $Z$. In our example, the common confounder $U$ reflected general ability. The conditional exogeneity assumption reduces to whether conditional on general ability $U$, the subject-specific pre-college GPA measures $Z$ are excluded. This argument clearly depends on the respective choice of treatment $A$ and outcome $Y$.

As in standard IV, specification tests, which can be revealing about the exogeneity of $Z$, are possible if the model is overidentified. If a specification test suggests that different subset of instruments $Z$ conditional on $T$ result in estimators with different probability limits, we reject that all instruments $Z$ are excluded conditional on $T$ (unless the estimand is the local average treatment effect which can vary across subpopulations). Necessary for any such test is model overidentification for the causal effect $\theta_0$ conditional on $T$. In a simple linear model, overidentification would e.g. mean $(d_Z - d_T) > d_A$. After keeping $d_T$ dimensions of $Z$ fixed, the instruments must still contain overidentifying information. 

Ultimately, just as in standard IV, the conditional exogeneity assumption for $Z$ remains largely untestable. Therefore, it is crucial to better understand $U$ from the control variable $T$.

\subsection{Estimation}     \label{ssec:guide-4}
In the final fourth step, use $Z$ to instrument for treatment $A$ conditional on control variable $T$, to identify (and estimate) the structural function or average causal effect $\theta_0$ of $A$ on $Y$. Having established that all necessary relevance and exogeneity requirements hold, an estimator $\hat{\theta}$ can be formulated for the causal effect of interest $\theta_0$. The form of this estimator depends on the type of parametric model assumptions made. 


\section{Example: Linear Returns to Education}   \label{sec:ex-educ}
Interested in the returns to education, we use data from the National Longitudinal Survey of Youth 1997 \citep{nls97}. The variables of interest are introduced below.
\begin{itemize}
  \item[$Y$] Household net worth at 35: continuous variable, in USD.
  \item[$A$] BA degree: $1$ if individual $i$ obtained a BA degree, $0$ otherwise.
  \item[$Z$] Pre-college test results: subject-specific and overall GPA; ASVAB percentile.
  \item[$W$] Risky behaviour dummies: whether $i$ drank, smoked, or engaged in other behaviours considered risky by the age of 17.    
  \item[$U$] Ability: Unmeasured intellectual capacity.
  \item[$\tilde{U}$] Other biases: Selection on unobservables into obtaining a BA degree (at least in part result of optimising individuals).
  \item[$X$] Covariates: sex, college GPA, parental education/net worth, siblings, region, etc.
\end{itemize}


A review of the vast literature on returns to education is far beyond the scope of this paper \citep{psacharopoulos2018}. Instead, we focus on estimation of a very specific return to education: The causal effect of obtaining a bachelor's degree $A$ on household net worth at 35. Even in a simple linear model like 
\begin{align}
Y &= \alpha_Y + A \beta + U \gamma_Y + W \upsilon_Y + X \eta_Y + \varepsilon_Y,  \label{eq:ex-educ-lin}
\end{align}
two distinct potential sources of confounding are easily identified via the unobservable components of \ref{eq:ex-educ-lin}:
\begin{enumerate}
  \item[$U \gamma_Y$] Ability $U$ likely has a positive effect on household net worth $Y$, by means of salary and non-salary based net worth accumulation \citep{griliches1977}. The vector-valued linear parameter $\gamma_Y$ captures this positive linear effect of ability on net worth. 
  \item[$\varepsilon_Y$] The disturbance $\varepsilon_Y$ captures all variation in $Y$, which is jointly unexplained by $(A, U, W, X)$. This can be understood as individual-specific, heterogeneous characteristics, and chance. 
\end{enumerate}
Any correlation of $A$ with either of these terms leads to biased estimates of $\beta$. 

\subsubsection*{How does obtaining a BA degree $A$ correlate with ability $U$ and a general disturbance $\varepsilon_Y$?} 
In this identification problem, selection bias in inherent.
At least to some degree, individuals choose whether to obtain a BA degree as a result of an optimisation problem of expected utility subject to an information set $\mathcal{I}$:
\begin{align}
A &= \underset{a \in \{0,1\}}{\arg \max{}}\left( \E{u(Y(a)) - c(a) | A=a, \mathcal{I}} \right), \label{eq:ex-educ-opt}
\end{align}
where $u: \mathcal{Y} \rightarrow \mathbb{R}$ is a utility function for net worth with diminishing returns, and $c: \{0,1\} \rightarrow \mathbb{R}$ a cost function for obtaining a BA degree $A$. Both utility and cost function can be individual-specific.
 
For ease of illustration, suppose individuals are perfectly informed with $\mathcal{I} = (A, U, W, X, \varepsilon_Y)$. Then, each individual chooses $a \in \{0,1\}$ to maximise the utility associated with potential outcome $Y(a)$ minus cost $c(a)$. In this case, there is an easy decision rule to determine optimal $A$:
\begin{align*}
A &= \underset{a \in \{0,1\}}{\arg \max{}}  u(Y(a)) - c(a), \\
&=  \mathds{1} \left[ u(Y(1)) - u(Y(0)) \ > \ c(1) - c(0) \right].
\end{align*}

\begin{enumerate}
  \item[$U \gamma_Y$] Ceteris paribus, an increase in ability $U$ equally increases $Y(0)$ and $Y(1)$ according to model \ref{eq:ex-educ-lin}. Due to diminishing returns in the utility function $u$, $u(Y(1)) - u(Y(0))$ decreases. However, also $c(1)$ decreases as higher-ability individuals experience a lower utility cost of obtaining a BA degree. The overall effect on the choice of $A$ is ambiguous and depends on the utility functions of the individual. 
  \item[$\varepsilon_Y$] The effect of $\varepsilon_Y$ on the choice of $A$, on the other hand, is unambiguous. An increase in $\varepsilon_Y$ reduces $u(Y(1)) - u(Y(0))$ due to the diminishing returns of $u$. Cost $c$, however, is unaffected by $\varepsilon_Y$. Hence, $A$ inevitably negatively correlates with $\varepsilon_Y$.
\end{enumerate}

This logic regarding negative selection bias when treatment is chosen by utility-maximising individuals is by no means novel \citep{heckman2006}, or unique to the returns to education identification problem. Negative selection bias is inherent to the treatment variable when it is at least in part the result of optimising behaviour by utility-maximising heterogeneous individuals. Novel in our approach is the ability to explicitly account for certain biases, in this case ability bias, when proxies for them exist. Finding excluded instruments can be much more straightforward when pertinent biases, like ability bias, have already been taken care of.

In our identification approach, instruments $Z$ are pre-college test results. These results are strongly correlated with ability $U$. Yet, conditional on ability, 
and some other covariates, pre-college test results contain random variation, which is excluded with respect to household net worth $Y$ (at age 35).
Concurrently, even random variation in pre-college test results is a strong predictor of obtaining a BA degree. Hence, instrument relevance likely holds. The proxies $W$ are dummies for whether an individual engaged in risky behaviours at high school age. Among others, the risky behaviour dummies include drinking, smoking (marijuana), selling drugs and stealing. Theory and empirical evidence suggest the correlation of low intelligence and risky behaviour \citep{loeber2012}. Therefore, ability $U$ both causes instruments $Z$ and proxies $W$ in our data. Ability $U$ is the common confounder in this causal question. Clearly, additional covariates are necessary to justify instrument exogeneity. These include sex, college GPA, parental education and net worth, the number of siblings, region of residence, etc. 

\subsection{Assumptions}  \label{ssec:ex-educ-a}
The linear equivalent to the general common confounding IV model in assumption \ref{a:riv} is described as assumption \ref{a:livcc}. Again, for ease of notation assume $d_A = 1$, just as in this returns to education identification problem.
\begin{assumption}[Linear IV Model with Common Confounding] \label{a:livcc} \ \ \ \ \ \ \ \

\begin{enumerate}
  \item Linear outcome model projection:  
          \begin{align}
          Y &= \alpha_Y + A \beta + U \gamma_Y + W \upsilon_Y + X \eta_Y + \varepsilon_Y
          \end{align}
  \item Instruments \label{a:iv}
  \begin{enumerate}
    \item Exogeneity: \label{a:livcc-iv-exog} $ \E{\varepsilon_Y (Z, U, W, X)} = \pmb{0}. $
    \item Relevance: \label{a:livcc-iv-rel}  For the linear projection of $A$ on $(Z, U, W, X)$,          
    \begin{align}
          A &= \alpha_{A} + Z \zeta + U \gamma_{A} + W \upsilon_{A} + X \eta_{A} + \varepsilon_A, 
          \ \E{\varepsilon_A (Z, U, W, Z)} = \pmb{0} \\
          \operatorname{rank} & \left(\E{\left. \left( Z \zeta \right) A \ \right| \ T, X}\right) = d_A. \label{eq:livcc:iv-rel-A}
          \end{align}
  \end{enumerate}
  \item Proxies \label{a:livcc-pl}
  \begin{enumerate}
    \item Exogeneity: \label{a:livcc-pl-excl} For the linear projection of $W$ on $(Z, U, X)$ and $(Z, X)$, 
    \begin{align} 
       W &= \alpha_{W} + U \gamma_{W} + X \eta_W + \varepsilon_W, 
       &  \E{\varepsilon_W (Z, U, X)} &= \pmb{0}, \\
       W &= \tilde{\alpha}_W + Z \tilde{\gamma}_W + X \tilde{\eta}_W + \tilde{\varepsilon}_W, 
       & \E{\tilde{\varepsilon}_W (Z, X)} &= \pmb{0}.  \label{eq:w-lp}
    \end{align}
    with $T \coloneqq Z \tilde{\gamma}_W + X \tilde{\eta}_W$.
    \item Relevance: \label{a:livcc-pl-rel} $\operatorname{rank}(\gamma_W) \geq  d_U$
    \begin{align} . \end{align} 
  \end{enumerate} 
\end{enumerate}
\end{assumption}
To simplify notation, let $Z_{|X}$ be the true residual of a projection of $Z$ onto $X$. The linearity of the outcome model implies that the covariance $\Cov{(Z_{|X} \zeta), Y}$ is
\begin{align*}
\Cov{(Z_{|X} \zeta), Y} &=  \Cov{(Z_{|X} \zeta), A} \beta + \Cov{(Z_{|X} \zeta), U} \gamma_{Y} + \Cov{(Z_{|X} \zeta), W} \upsilon_{Y}.
\end{align*}
The above expression uses the uncorrelatedness of $Z$ and $\varepsilon_Y$ in assumption \ref{a:livcc}.\ref{a:livcc-iv-exog}. If it were not for the linear confounding from the unobserved common confounders $U$ and proxies $W$, $Z$ would be excluded. Next, we demonstrate how to use the proxies $W$ to keep $\Cov{(Z \zeta) U}$ fixed. 
\begin{align*}
\Cov{(Z_{|X} \zeta), U} &= \Cov{(Z_{|X} \zeta), W} \gamma_W^\intercal \left( \gamma_W \gamma_W^\intercal \right)^{-1}
\end{align*} 
The inverse $\left( \gamma_W \gamma_W^\intercal \right)^{-1}$ exists under assumption \ref{a:livcc}.\ref{a:livcc-pl} that the rank of $\gamma_W$ is at least $d_U$. Then, slightly rewriting $\Cov{(Z \zeta) Y}$ as
\begin{align*}
\Cov{(Z_{|X} \zeta), Y} &=  \Var{Z_{|X} \zeta} \beta + \Cov{(Z_{|X} \zeta), W} \tilde{\upsilon}_W, \\
\tilde{\upsilon}_Y &\coloneqq \upsilon_Y + \upsilon_A \beta +  \gamma_W^\intercal \left( \gamma_W \gamma_W^\intercal \right)^{-1} \left( \gamma_Y + \gamma_A \beta \right), 
\end{align*}
implies that any endogeneity of residualised instruments $Z_{|X}$ is controlled for by conditioning on $Z \tilde{\gamma}_W$ from the linear projection \ref{eq:w-lp}. To be precise, 
\begin{align*}
\Cov{(Z_{|X} \zeta), Y | Z \tilde{\gamma}_W} &=  \Var{Z_{|X} \zeta | Z \tilde{\gamma}_W} \beta + \underbrace{\Cov{(Z_{|X} \zeta) W | Z \tilde{\gamma}_W}}_{=0} \tilde{\upsilon}_Y.
\end{align*}
The covariance of the first stage can be rewritten as
\begin{align*}
\Cov{(Z_{|X} \zeta), A | Z \tilde{\gamma}_W} &= \Var{Z_{|X} \zeta | Z \tilde{\gamma}_W} + \underbrace{\Cov{(Z_{|X} \zeta) W | Z \tilde{\gamma}_W}}_{=0} \tilde{\upsilon}_A.
\end{align*}
Using both of these results, and one-dimensional treatment $A$ to simplify notation, a simple ratio form for the linear effect of $A$ on outcome $Y$ is 
\begin{align}
\beta &= \frac{ \Cov{\left(Z_{|X} \zeta \right) Y \ | \ Z \tilde{\gamma}_W} }{ \Cov{\left(Z_{|X} \zeta \right) A \ | \ Z \tilde{\gamma}_W}} = \frac{ \Cov{\left(Z \zeta \right) Y \ | \ T, X} }{ \Cov{\left(Z \zeta \right) A \ | \ T, X} }, \text{ with } T = Z \tilde{\gamma}_W.
\end{align}
Hence, the estimator differs from standard IV based estimation only by also holding a linear prediction $T$ of $W$ fixed as the partial predicted values $Z \zeta$ for $A$ change. Thus, the relevance requirement \ref{a:livcc}.\ref{a:livcc-iv-rel} for the instruments $Z$ is conditional on $T$ and $X$. $T$ can be represented by a $d_U$-dimensional linear function of $Z$, $\E{U | Z, X}$, multiplied by $\gamma_W$. Hence, a simpler way to understand the relevance requirement \ref{a:livcc}.\ref{a:livcc-iv-rel} is as
\begin{align}
\operatorname{rank}(\zeta) \geq (d_U + d_A)  \label{eq:livcc-simple-rel}
\end{align}
A total of $d_U$ dimensions of variation in $Z$ are typically needed to account for the $d_U$-dimensional confounding effect of $U$ via $\E{U | Z, X}$, while the remaining variation in $Z$ still needs to be relevant for $A$. Other than in trivial cases\footnote{e.g. when $Z$ contains perfectly collinear variation conditional on $(U, W, X)$.}, equation \ref{eq:livcc-simple-rel} describes this relevance requirement satisfactorily as a rank condition on $\zeta$, the partial linear projection effect of $Z$ on $A$ conditional on $(U, W, X)$.


\subsection{Find $T$ and test relevance of $Z$, $W$ for $U$}     \label{sssec:ex-lin-educ-1}
A valid control function is the linear prediction $T = Z \tilde{\gamma}_W$ under assumption \ref{a:livcc}.\ref{a:livcc-pl}, meaning that conditional on $(T, X)$, instruments $Z$ are still relevant for $A$. However, its OLS estimate $T = Z \hat{\tilde{\gamma}}_W$ generally is not a valid control function, because $T$ and $Z$ are perfectly correlated due to sampling variation, unless $d_Z > d_W$. However, even when $d_Z > d_W$, the true $\tilde{\gamma}_W$ will have rank $d_U \leq d_W$, while its OLS estimate $\hat{\tilde{\gamma}}_W$ always has possibly larger than necessary rank $d_W$ due to sampling variation. Ultimately, the estimate $\hat{\tilde{\gamma}}_W$ should at best have exactly rank $d_U$. A test is needed for the rank $r_0$ of matrix $\tilde{\gamma}_W$. If $\E{U | Z, X} = Z \gamma_Z + X \gamma_X$, then $\tilde{\gamma}_W = \gamma_Z \gamma_W$. Sufficient for the rank condition in assumption \ref{a:livcc}.\ref{a:livcc-pl} is $r_0 < \min\{d_Z, d_W\}$. This condition means that an unobservable variable of smaller dimension than both $W$ and $Z$ can explain all correlation between $W$ and $Z$ conditional on $X$. This unobserved variable is the common confounder $U$. By the definition of $U$ as the (minimum information) unobserved variable which renders $W$ and $Z$ mean-independent, $\gamma_Z$ has $d_U \leq d_Z$ linearly independent rows ($\operatorname{rank}(\gamma_Z) = d_U$). As $\gamma_W$ has dimensions $d_U \times d_W$ and $d_U \leq d_W$, $\operatorname{rank}(\gamma_W) \leq d_U$ and thus $r_0 = \operatorname{rank}(\gamma_Z \gamma_W) = \operatorname{rank}(\gamma_W)$.

While $r_0 < d_W$ suffices to confirm the relevance of $W$ for $U$ in assumption \ref{a:livcc}.\ref{a:livcc-pl}, $r_0 < d_Z$ is necessary for $Z$ to be relevant for treatment $A$ in assumption \ref{a:livcc}.\ref{a:livcc-iv-rel}. A suitable test for some $r < \min\{d_Z, d_W\}$ has null hypothesis
\begin{align}
H_0: r_0 \leq r, \text{ and alternative } H_a: r_0 > r.  \label{eq:test-r}
\end{align}
With the OLS estimator $\hat{\tilde{\gamma}}_W$, we apply a bootstrap based test for its rank. First, write the singular value decomposition as
\begin{align}
\tilde{\gamma}_W &= \underset{d_Z \times d_Z}{P_0} \underset{d_Z \times d_W}{\Pi_0} \underset{d_W \times d_W}{Q_0^\intercal}
\end{align}
Then, let $\phi_r(A) \coloneqq \sum_{j=r+1}^{m_A} \pi_j^2(A)$ be the sum of squared singular values of $A$ from the $(r+1)$ largest to the smallest singular value, which is the $m_A$-th singular value, where $m_A$ is the minimum across $A$'s number of rows and columns. Then, an equivalent test to \ref{eq:test-r} is a test with null hypothesis
\begin{align}
 H_0: \phi_r \left( \tilde{\gamma}_W \right) = 0, \text{ and alternative } H_a: \phi_r \left( \tilde{\gamma}_W \right) > 0. \label{eq:test-phi}
\end{align}

The bootstrap procedure is as follows:
\begin{enumerate}
  \item For each binary proxy $W_j \in W$, calculate the probability $p_j \coloneqq \Pr{(W_j = 1)}$
    under $H_0$ as  $p_{j, 0} = \text{Logit}\left( \left( Z \underset{d_Z \times r}{P_{0, r}} \underset{r \times r}{\Pi_{0, r}} \underset{r \times 1}{Q_{0, r, j}^\intercal} + X \underset{d_X \times 1}{\tilde{\eta}_{W, j}} \right) \beta_0 + \alpha_0 \right)$, where
        \begin{enumerate}
          \item $P_{0,r}$ corresponds to the first $r$ columns of $P_0$,
          \item $Q_{0,j}$ corresponds to the first $r$ entries of the $j$-th row of $Q_0$
          \item $\Pi_{0, r}$ corresponds to the $r \times r$ matrix of of the first $r$ rows and columns of $\Pi_0$,
          \item $\tilde{\eta}_{W, j}$ corresponds to $j$-th column of $\tilde{\eta}_W$,
          \item $\beta_0$ and $\alpha_0$ are univariate coefficients, which need to be estimated.
        \end{enumerate}
  \item Draw 1000 new bootstrap samples $b \in \mathcal{B}$ of binary proxies as $W^b_0$ using the $n \times d_W$ probability matrix $(p_{0, 0}, p_{1, 0}, \hdots, p_{d_W, 0})$.
  \item For each bootstrap sample $b \in \mathcal{B}$: Calculate the sample projection coefficient $\hat{\tilde{\gamma}}^b_{W, 0}$ by projecting $W^b_0$ onto $(Z, X)$ (all demeaned), and the sum of its smallest squared singular values starting at the $(r+1)$ largest as $\phi_{r, 0, b} \coloneqq \phi_r \left( \hat{\tilde{\gamma}}^b_{W, 0} \right)$. \\
  \item Obtain the $p$-value as $1 - \frac{1}{\left| \mathcal{B} \right|} \sum_{b \in \mathcal{B}} \mathds{1} \left(\phi_{r, 0, b} < \phi_r(\tilde{\gamma}_W) \right) $.
\end{enumerate}


\begin{figure}[h]
\begin{center}
\caption{Bootstrap based test for $H_0: \operatorname{rank}(\tilde{\gamma}_W) = r_0 \leq r$}  \label{f:boot-chen-lin}
\includegraphics[width=0.786\textwidth]{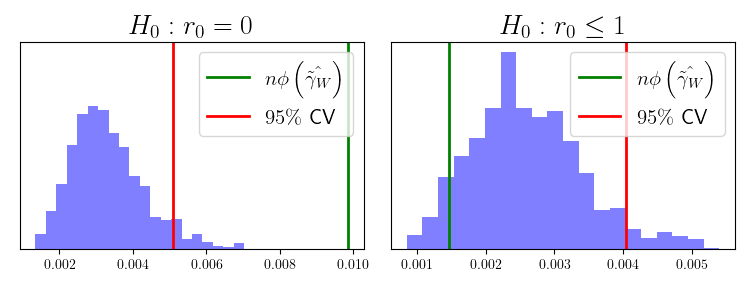}
\end{center}
\begin{footnotesize}
\textit{Notes}:
This figure illustrates the bootstrap distribution of the test statistic $n \phi_r \left( \hat{\tilde{\gamma}}_{W} \right)$ in the test with null hypothesis $H_0: \operatorname{rank}(\tilde{\gamma}_W) = r_0 \leq r$. The left figure depicts the test statistic distribution under $r=0$, and the left under $r=1$. For $r=0$, the $p$-value is zero. $H_0: r_0 = 0$ is strongly rejected. For $r=1$, the $p$-value is $0.933$. $H_0: r_0 \leq 1$ cannot be rejected at any meaningful level of significance.
\end{footnotesize}
\end{figure}

In figure \ref{f:boot-chen-lin}, the bootstrapped distributions of the test statistic $n \phi_r \left( \hat{\tilde{\gamma}}_{W} \right)$ are depicted under two different null hypotheses: $r_0 = 0$ and $r_0 \leq 1$. Non-rejection of the test is evidence in favour of the low rank $r_0$ of $\tilde{\gamma}_W$. In the left diagram of figure \ref{f:boot-chen-lin}, where the test concerns $r_0 = 0$, the $p$-value is at zero. The test provides strong evidence against $r_0 = 0$, which indicates some correlation between $W$ and $Z$ conditional on $X$. The right diagram of figure \ref{f:boot-chen-lin} depicts the test statistic bootstrap distribution for $H_0: r_0 \leq 1$, and provides strong evidence against rejection. The associated $p$-value is 93.3\%. Thus, we can conclude that the rank of $\gamma_W$ is at most one. In the NLS97 data, pre-college test results $Z$ and risky behaviour dummies have dimensions $d_Z=7$ and $d_W=9$. Thus, $r_0 \leq 1$ allows the conclusion that the common confounder dimension is small: $d_U \leq 1$. Conditional on covariates $X$, all covariance between $Z$ and $W$ is explained by a one-dimensional unobserved $U$. Successfully, the proximal assumption \ref{a:livcc}.\ref{a:livcc-pl} was tested. In addition, the necessary $d_Z > d_U$ condition for conditional instrument relevance (assumption \ref{a:livcc}.\ref{a:livcc-iv-rel}) was confirmed. 

\subsection{Test relevance of $Z$ for $A$ given $T$}     \label{sssec:ex-lin-educ-2}
Despite satisfying the necessary $d_Z > d_U$ condition for IV relevance (assumption \ref{a:livcc}.\ref{a:livcc-iv-rel}), a proper test for the conditional relevance of $Z$ for $A$ given the control function $T$ is still missing. In this step, we first explain how to construct the here one-dimensional control variable $T$ after having conducted the tests in section \ref{sssec:ex-lin-educ-2}. Then, we test for the conditional relevance of instrument $Z$ for treatment $A$ given this control function $T$.

Given the statistical evidence in favour of $d_U \leq 1$, we construct the variable 
\begin{align}
\underset{N \times 1}{T} &\coloneqq \underset{N \times d_Z}{Z} \underset{d_Z \times 1}{\hat{P}_{0, 1}} \underset{1 \times 1}{\hat{\Pi}_{0, 1}},  \label{eq:control-variable}
\end{align}
with the singular value decomposition of the OLS estimator $\hat{\tilde{\gamma}}_W = \hat{P}_{0} \hat{\Pi}_0 \hat{Q}_0^\intercal$. $\hat{P}_{0, 1}$ is the first column of $\hat{P}_{0}$, and $\hat{\Pi}_{0, 1}$ is the top-left entry of $\hat{\Pi}_0$. Aside from sampling error, proxies $W$ are mean-independent from instruments $Z$ conditional on $(T, X)$. 

With the control $T$ now defined, we can use a bootstrap based test to confirm the relevance of instruments $Z$ for $A$ conditional on $(T, X)$. The null hypothesis can be formulated as 
\begin{align}
H_0: \operatorname{rank}\left( \E{ (Z {\zeta}) A | T, X} \right) < d_A, \text{ with alternative } H_a: \operatorname{rank}\left( \E{ (Z {\zeta}) A | T, X} \right) = d_A.
\end{align}
Importantly, under $H_0$ the effect of $Z$ on $A$ (given $X$) would be fully described by a one-dimensional $T$, as the dimension of $U$ was found to be $r_0 \leq 1$ in section \ref{sssec:ex-lin-educ-1}. When treatment $A$ is one-dimensional, a simple test for this null hypothesis compares the $R^2$ of an unrestricted (\ref{eq:reduced-ur}) and restricted regression (\ref{eq:reduced-r}). 
\begin{align}
A &= \tilde{\alpha}_{A, ur} + Z \tilde{\zeta} + X \tilde{\eta}_{A, ur} + \tilde{\varepsilon}_{A, ur}, & & & \E{\tilde{\varepsilon}_{A, ur} | Z, X} &= 0,  \label{eq:reduced-ur} \\
A &= \tilde{\alpha}_{A, r} + T \tilde{\gamma}_{A} + X \tilde{\eta}_{A, r} + \tilde{\varepsilon}_{A, r}, & & & \E{\tilde{\varepsilon}_{A, r} | T, X} &= 0.   \label{eq:reduced-r}
\end{align}
Under $H_0$, both regressions would predict $A$ equally well, despite the dimension reduction on $Z$ in the second regression, \ref{eq:reduced-r}. 
With the uncertainty in estimated $T$, we use a simple bootstrap-based test. With 1000 bootstrap samples $b_t \in \mathcal{B}_t$, we obtain a bootstrap distribution of $R^2_{r}$. Under $H_0$, $R^2_{ur}$ is (asymptotically) distributed as $R^2_r$. 

\begin{figure}[h]
\begin{center}
\caption{Test for Relevance of $Z$ for $A$ given $T$ ($H_0: \operatorname{rank}\left( \E{ Z {\zeta} | T, X} \right) < d_A$)}  \label{f:boot-rel-lin}
\includegraphics[width=0.655\textwidth]{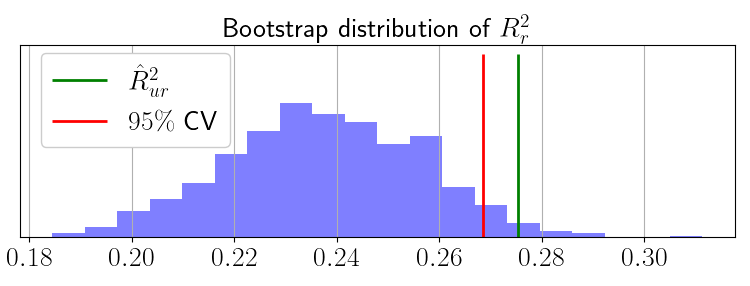}
\end{center}
\begin{footnotesize}
\textit{Notes}:
This figure illustrates the bootstrap distribution of the restricted $R^2_r$ in regression \ref{eq:reduced-r}. The dimension of $T$,$d_T=1$, is based on the test in section \ref{sssec:ex-lin-educ-2}.
\end{footnotesize}
\end{figure}

Figure \ref{f:boot-rel-lin} depicts the bootstrap distribution of $R^2_r$ based on the restricted regression \ref{eq:reduced-r}. The control variable $T$ is constructed for each bootstrap sample as described in \ref{eq:control-variable}. The unrestricted $R^2_{ur}$ based on the unrestricted regression \ref{eq:reduced-ur} fits the data significantly better, which indicates rejection of $H_0$. The $p$-value is 0.023. There is predictive information in $Z$ for $A$, beyond that controlled for in $(T, X)$. In other words, $Z$ satisfies the conditional instrument relevance requirement \ref{a:livcc}.\ref{a:livcc-iv-rel}.

\subsection{Exogeneity of $Z$ conditional on $U$}     \label{ssec:ex-lin-educ-3}
While both relevance assumptions \ref{a:livcc}.\ref{a:livcc-pl} and \ref{a:livcc}.\ref{a:livcc-iv-rel} could be tested successfully, the exogeneity of instrument $Z$ conditional on $U$ in assumption \ref{a:livcc}.\ref{a:livcc-iv-exog} remains generally untestable. 

To argue whether $Z$ is exogenous conditional on $U$, it is worth asking: Which information is being held fixed in $T$, and what does this imply about $U$? The linear construction of $T$ from $Z$ is illustrated in table \ref{t:hat-t}, where the instruments have been normalised to standard deviation one. 
$Z$ is mean-independent from $W$ conditional on $(T, X)$. $T$ mostly consists of an average of subject-specific pre-college GPA measures. In this sense, $T$ closely measures academic ability, as captured by pre-college GPA measures. Without the transcript GPA and ASVAB percentile, the subject-specific GPA measures describe 94.5\% of variation in $T$. Despite the negative dependence of $T$ on ASVAB percentile in its construction, $T$ positively correlates with ASVAB percentile unconditional on the GPA measures with a 0.31 correlation coefficient. The interpretation of $T$ and consequently $U$ is pretty straightforward: It positively reflects (academic) ability.
\begin{table}[h]
\centering
\caption{Construction of $T = Z \hat{P}_{0,1} \hat{\Pi}_{0,1}$}  \label{t:hat-t}
          \begin{tabular}{c | c c c c c} 
                                                         & \multicolumn{4}{c}{GPA} & ASVAB \\ 
                                                         & English & Math & SocSci & LifeSci & percentile \\ 
                                                         \midrule
                                                         $T$ & 0.537 & 0.216 & 0.280 & 0.214 & -0.262 \\ 
          \end{tabular}
\end{table}

As $U$ reflects (academic) ability, an increase in $T$ is expected to result in a reduction of risky behaviour \citep{loeber2012}. Indeed, a one standard deviation increase in $T$ reduces the probability of having engaged in risky behaviour by the age of 17 between 3\% and 9\%, as illustrated in table \ref{t:hat-t-w}. All effects have strong statistical and economic significance. Compared to the average probability of engaging in risky behaviour, the estimated effect of a one standard deviation change in $T$ is largest for some of the riskiest behaviour we considered: selling drugs (-54\%), running away (-46\%), and attacking someone (-41\%).

\begin{table}[h]
\centering
\caption{Effect of $T$ on $W$}  \label{t:hat-t-w}
          \begin{tabular}{c | c c c c c c c c c} & & & try & run & attack & sell & destroy & steal & steal \\ 
                                                         & drink & smoke & marijuana & away & someone & drugs & property & $<50$\$ & $>50$\$ \\ 
                                                         \midrule
                                                         $\Pr$   &  65.5\% &  47.1\% &         29.8\% &    10.9\% &          19.0\% &       9.1\% &            32.0\% &       38.1\% &        8.0\% \\
                                                         $T$ &  -7.9\% &  -9.5\% &         -9.6\% &    -5.0\% &          -7.8\% &      -4.9\% &            -3.5\% &       -4.7\% &       -3.2\% \\
\multicolumn{10}{c}{ } \\[-1ex]
\multicolumn{10}{p{16cm}}{
\begin{minipage}{1 \linewidth}
\begin{footnotesize}
\textit{Notes}:
The table contains sample probabilities for engaging in risky behaviour by the age of 17 in the $\Pr$ row. The estimated decrease in the probability of engaging in risky behaviour from a linear probability model for a one standard deviation increase in $T$ is noted in the row corresponding to $T$.
\end{footnotesize}
\end{minipage}}
          \end{tabular}
\end{table}

$T$ captures the information we expected based on our suspicion about the unobserved confounder ability. $T$ closely reflects (academic) ability as measured by high-school GPA measures, which reduces the probability of engaging in risky behaviours during high-school. Thus, we can conclude that the common confounder $U$ contains the unobserved variable ability.

Now, an argument is required for the conditional exogeneity of instruments $Z$ given unobserved ability $U$ and observed covariates $X$:
\begin{align*}
Y &= \alpha_Y + A \beta + U \gamma_Y + W \upsilon_Y + X \eta_Y + \varepsilon_Y, & & &  \E{\left. \varepsilon_Y \right| Z, X} &= 0.
\end{align*}
While ability is the obvious confounder of the effect of pre-college GPA measures on net worth later in life, there are other possible confounders. Among everyone who goes to college, those with higher pre-college GPA are likely to also have a higher college GPA. Even conditional on whether someone obtained a BA degree, a higher college GPA likely leads to higher earnings later in life. Thus, college GPA is an important observed confounder. Family and individual net worth at young age can affect pre-college GPA measures as more learning resources are available. Their effect on net worth later in life is undeniable. Apart from net worth, other family background characteristics likely affect both pre-college test scores and net worth later in life. We include parental education, maternal age at first birth and the individual's birth, as well as the number of siblings to capture family background characteristics. Individual-specific characteristics are other important confounders. We include sex and citizenship status based on birth 
as further covariates. Conditional on this rich set of covariates $X$, and the unobserved variable ability $U$, there is no reason to believe that pre-college test scores $Z$ would affect or be correlated with post-college earnings through any other channel than obtaining a BA degree $A$. Despite our best efforts in explaining $U$, and the provided arguments in favour of assumption \ref{a:livcc}.\ref{a:livcc-iv-exog}, a test or conditional instrument exogeneity is not possible. A specification test is not feasible, because in this example the model is not overidentified.

\subsection{Estimation}     \label{sssec:ex-lin-educ-4}
Estimation of the fully linear model is now straightforward. As in \cite{tien2022icc}, we call the estimator an \emph{instrumented common confounding} (ICC) estimator.
\begin{align}
\hat{\beta}_{ICC} &= \left( A^\intercal P_Z M_{T, X} A \right)^{-1} \left( A^\intercal P_Z M_{T, X} Y \right).
\end{align}
Here, $P_Z = Z \left( Z^\intercal Z \right)^{-1} Z^\intercal$ is the projection matrix of $Z$, and $M_{T, X} = I_n - P_{T, X}$ is the annihilator matrix of $(T, X)$. In table \ref{t:est-lin}, the estimates of four major methods are compared: ordinary least squares (OLS), instrumental variables (IV), proximal learning (PL), and the here suggested ICC estimator. The row corresponding to $T$ describes the estimated partial effect of $T$ (normalised to standard deviation one) on net worth $Y$ (at 35) in the respective regressions. $T$ is only used in proximal learning and ICC, but derived from the covariation of $(Z, A)$ and $W$ in negative control \citep{cui2020}, as opposed to $Z$ and $W$ in our approach. The row corresponding to $A$ contains estimates for $\beta$, the causal effect of obtaining a BA degree $A$ on net worth $Y$ (at 35). Their unit is US Dollar. 

\begin{table}[h]
\caption{Estimates with different estimators ($Y$ in thousands (k))}
\label{t:est-lin}
\begin{center}
\begin{tabular}{l @{\hspace{2.618\tabcolsep}} | @{\hspace{2.618\tabcolsep}} c @{\hspace{4.236\tabcolsep}} c @{\hspace{4.236\tabcolsep}} c @{\hspace{4.236\tabcolsep}} c}
\toprule
                            & OLS       & PL        & IV        & ICC        \\
\midrule
$A$                       & 59.18***  & 30.90***  & 222.97*** & 125.15**   \\
                              & (9.12)    & (10.40)   & (34.74)   & (52.93)    \\
\midrule
$T$                    &           & 27.76***  &           & 16.05**    \\
                              &           & (4.82)    &           & (7.37)     \\
\bottomrule
\multicolumn{5}{c}{ } \\[-1ex]
\multicolumn{5}{p{11cm}}{
\begin{minipage}{1 \linewidth}
\begin{footnotesize}
\textit{Notes}:
The table contains estimates and their standard errors (in parantheses) for $\beta$ in the $A$ row, and the linear parameter on $T$ if used in the method from four estimators: Ordinary Least Squares (OLS), Proximal Learning (NC), Instrumental Variables (IV), and Instrumented Common Confounding (ICC). Asterisks indicate significance at the 1\% (***), 5\% (**) and 10\% (*) level.
\end{footnotesize}
\end{minipage}}
\end{tabular}
\end{center}
\end{table}

OLS estimates that obtaining a BA degree increases net worth at 35 by 59k\$. 
The proximal learning estimator conditions on its own $T$, so implicitly anything fixed that covaries $(Z, A)$ and proxies $W$. The proximal learning estimate at 31k\$, is indeed economically significantly smaller than the OLS estimate. As hypothesised, this might indicate that unobserved ability, which correlates $(Z, A)$ and $W$, is a confounder which biases the estimated effect of education on net worth upwards.
In contrast, the IV estimate is much larger at 223k\$. The inherent negative selection bias may thus be quite large. However, the IV estimator ignores the strong correlation of the pre-college test score instruments $Z$ with ability $U$, which may lead to an accentuated ability bias compared to that in OLS. As the estimator should be robust to ability bias, we condition on $T$ and obtain the ICC estimator at 125k\$. Indeed, conditioning on $T$ attenuates the estimate by the expected ability bias. As relevance is not strongly satisfied for the instruments $Z$ conditional on $T$ in the ICC estimator, the standard error is expectedly large for this method. Still, both general selection bias and ability bias appear to be strong confounders in this difficult identification problem.

Quantitatively separating ability and general selection bias helps add the necessary credibility to IV, which misses under the original IV exogeneity assumption.

\section{Conclusion}   \label{sec:conclusion}
In this work, we relax instrument exogeneity in the presence of mismeasured confounders. Other observed variables, the proxies, must be relevant for the unobserved confounders, which cause endogeneity in the instruments. The mild parametric index sufficiency assumption is also required. Importantly, the proxies can be economically meaningful variables, with their own effects on treatment and outcome. This method can be useful in various causal identification problems with observational data, where the unobserved confounders are otherwise unrestricted observed variables. The linear returns to education identification problem illustrates how this method can identify causal effects when instrument exogeneity, as often in practice, is a strong and hardly testable assumption.

This paper established two point identification results. When point identification is impossible, this approach can still identify informative bounds on causal effects. This set identification exercise is left to future work. Further, we have not demonstrated how to construct estimators other than in the linear example. Uncertainty in the control function estimation will be reflected in the performance of any estimator using this identification approach. The integration of this approach, which at best identifies averages of causal effects across unobservables, with marginal treatment effects, is another remaining task.

\clearpage

\bibliographystyle{plainnat}
\bibliography{references}

\clearpage

\section{Proofs} \label{sec:proofs}
\subsection*{Section \ref{sec:confounding} proofs}   \label{ssec:proofs-confounding}
\begin{proof}[Proof of lemma \ref{l:sc-1}]
Let $t \in {\mathcal{Z}^\tau}$. 
\begin{align*}
f_{W, Z | T}(W, Z | T) &= \int_{\mathcal{U}} f_{W, Z | U, T}(W, Z | u, T) f_{U | T}(u, T) \dif \mu_U(u)  \\
&= \int_{\mathcal{U}} f_{W | Z, U, T}(W | Z, u, T) f_{Z | U, T}(Z | u, T) f_{U | T}(u, T) \dif \mu_U(u)  \\
&= \int_{\mathcal{U}} f_{W | U}(W | u) f_{Z | T}(Z | T) f_{U | T}(u, T) \dif \mu_U(u)  \\
&= f_{Z | T}(Z | T) \int_{\mathcal{U}} f_{W | U}(W | u) f_{U | T}(u, T) \dif \mu_U(u)  \\
&= f_{Z | T}(Z | T) f_{W | T}(W | T) 
\end{align*}
From line two to three we use $W \CI Z \ | \ U$ (assumption \ref{a:riv}.\ref{a:pl-excl}) and $U \CI Z \ | \ T$ (by construction of $T$). From line four to five we again use $W \CI \tau(Z) \ | \ U$ (assumption \ref{a:riv}.\ref{a:pl-excl}).
\end{proof}

\begin{proof}[Proof of lemma \ref{l:sc-2}]
We write $f_{W | Z}(W, Z)$ in two separate ways using $T$ and relate those.
\begin{enumerate}
\item
\begin{align*}
f_{W | Z}(W, Z) &= \int_{\mathcal{U}} f_{W | U}(W, u) f_{U | Z}(u, Z) \dif \mu_U(u) \\
&= \int_{\mathcal{U}} f_{W | U}(W | u)  f_{U | T, Z}(u, t, z)  \dif \mu_U(u) 
\end{align*}
The first line follows from $W \CI Z \ | \ U$ (assumption \ref{a:riv}.\ref{a:pl-excl}). 
\item
\begin{align*}
f_{W | Z}(W, Z) &= f_{W | T}(W, T) \\
&= \int_{\mathcal{U}} f_{W | U}(W, u) f_{U | T}(u, t) \dif \mu_U(u) 
\end{align*}
The first line follows from the construction of $T=\tau(Z)$, $\tau \in \mathcal{L}_2(Z)$, such that $W \CI Z \ | \ T$.
\end{enumerate}
By the equality of the expressions in steps 1 and 2 above, for all $z \in \mathcal{Z}$,
\begin{align*}
\int_{\mathcal{U}} f_{W | U}(W, u) f_{U | T, Z}(u, t, z) \dif \mu_U(u) &= \int_{\mathcal{U}} f_{W | U}(W, u) f_{U | T}(u, t) \dif \mu_U(u), \\
\int_{\mathcal{U}} f_{U | T, Z}(u, t, z) \frac{f_W(W)}{f_U(u)} f_{U | W}(u, W) \dif \mu_U(u) &= \int_{\mathcal{U}} f_{U | T}(u, t) \frac{f_W(W)}{f_U(u)} f_{U | W}(u, W) \dif \mu_U(u).
\end{align*}
Let $g_{t,z}(U) \coloneqq \frac{f_{U | T,Z}(U, t, z)}{f_U(U)}$ and $g_t(U) \coloneqq \frac{f_{U | T}(U, t)}{f_U(U)}$ for any $z \in \mathcal{Z}$. Then,
\begin{align*}
\E{g_{t, z}(U) f_W(W) | W} &= \E{g_t(U) f_W(W) | W} \\
\E{\left. \left(g_{t, z}(U) - g_t(U) \right) \right| W} f_W(W) &= 0
\end{align*}
By completeness of $W$ for $U$ (assumption \ref{a:riv}.\ref{a:pl-rel}), the above only holds if
\begin{align*}
g_{t, z}(U) &= g_t(U).
\end{align*}
This implies $$f_{U | T,Z}(U, t, z) = f_{U | T}(U, t),$$ and thus $U \CI Z \ | \ T$ for any $T=\tau(Z)$, $\tau \in \mathcal{L}_2(Z)$ such that $W \CI Z \ | \ T$.
\end{proof}

\subsection*{Section \ref{ssec:ls} proofs}   \label{ssec:proofs-ls}
\begin{proof}[Proof of Theorem \ref{th:ls-id}]

\begin{align*}
\E{Y | Z} &= \E{\left. k_0(A) + \varepsilon \right| Z} \\
&= \E{k(A) | Z} + \E{ \E{\varepsilon | U} | Z} \\
&= \E{k(A) | Z} + \E{ \E{\varepsilon | U} | T} \\
&= \E{k(A) | Z} + \E{\varepsilon | T} \\
&= \E{k(A) + \E{\varepsilon | T} | Z} \\
h(A, T) &= k(A) + \E{\varepsilon | T}
\end{align*}
From line one to two we use the moment $\E{\varepsilon | Z, U} = \E{\varepsilon | U}$ formulated in assumption \ref{a:ls}. 
From line two to three we use $U \CI Z \ | \ T$ and $T = \tau(Z)$ (assumption \ref{a:riv}.\ref{a:iv-index}/\ref{a:iv-rel}). Required for this step is the identifiability of $T$, given by lemma \ref{l:sc-1} subject to assumption \ref{a:riv}.\ref{a:pl}.
Completeness of $Z$ for $A$ given $T$ (assumption \ref{a:riv}.\ref{a:iv-rel}) is used to obtain the final line given that $\E{Y | Z} = \E{h(A, T) | Z}$. It follows that
\begin{align*}
\int_\mathcal{A} Y(a) \pi(a) \mathrm{d}a &= \int_\mathcal{A} k(a) \pi(a) \mathrm{d}a \\
&= \int_\mathcal{A} \left( h(a, t) - \E{\varepsilon | t} \right) \pi(a) \mathrm{d}a \text{ for any } t \in \mathcal{T} \\
&= \int_\mathcal{T} \int_\mathcal{A} \left( h(a, t) - \E{\varepsilon | t} \right) \pi(a) \mathrm{d}a f(t) \mathrm{d}t \\
&= \int_\mathcal{T} \int_\mathcal{A} h(a, t) \pi(a) \mathrm{d}a f(t) \mathrm{d}t.
\end{align*}
In line one we use $\int_\mathcal{\varepsilon} \varepsilon f(\varepsilon) \mathrm{d}\varepsilon = 0$. From line one to two we use $h(A, T) = k(A) + \E{\varepsilon | T}$. From line two to three we simply extend what holds for any given $t$ to all $t \in \mathcal{T}$. From line three to four we use $\int_\mathcal{T} \E{\varepsilon | t} f(t) \mathrm{d}t = \E{\varepsilon} = 0$.

\end{proof}

\subsection*{Section \ref{ssec:mono} proofs}  \label{ssec:proofs-mono}
\begin{proof}[Proof of lemma \ref{l:7}]
\begin{align*}
F_{\eta | T}(\eta) &= \int_{\mathcal{U}} F_{\eta | u}(\eta) f_{U | T}(u, T) \dif \mu_U(u), \\
\left. \frac{ \partial F_{\eta | T}(e) }{ \partial e } \right|_{e = \eta} &= \int_{\mathcal{U}} \underbrace{\left. \frac{ \partial F_{\eta | u}(e) }{ \partial e } \right|_{e = \eta}}_{>0 \ \forall u, \eta} f_{U | T}(u, T) \dif \mu_U(u) > 0,
\end{align*}
 so $F_{\eta | T}(\eta)$ is strictly increasing on the conditional support of $\eta$. \\
Now prove $Z \CI \eta \ | \ T$.
\begin{align*}
f_{Z, \eta | T} &= \int_{\mathcal{U}} f_{Z, \eta | U, T} f_{U | T} \dif \mu_U(u) \\
 &= \int_{\mathcal{U}} f_{\eta | Z, U, T} f_{Z | U, T} f_{U | T} \dif \mu_U(u) \\
 &= \int_{\mathcal{U}} f_{\eta | U, T} f_{Z | T} f_{U | T} \dif \mu_U(u) \\
 &= f_{Z | T} \int_{\mathcal{U}} f_{\eta | U, T} f_{U | T} \dif \mu_U(u) \\
 &= f_{Z | T} f_{\eta | T}
\end{align*}
From second to third line we use $Z \CI \eta \ | \ U$ (assumption \ref{a:monotonicity}.3) and $Z \CI U \ | \ T$ (assumption \ref{a:riv}.4). We have shown that $Z \CI \eta \ | \ T$.
\end{proof}

\begin{proof}[Proof of theorem \ref{th:8}] 
First, we derive a preliminary result as if $U$ were observed.
\begin{align*}
 F_{A | Z, U}(a, z, u) &= \Pr{ \left( A \leq a | Z = z, U = u \right) } = \Pr{ \left( h(z, \eta) \leq a | Z=z, U=u \right) }  \\
&= \Pr{ \left( \eta \leq h^{-1}(a, z) | Z=z, U=u \right) }  \\
&= \Pr{ \left( \eta \leq h^{-1}(a, z) | U=u \right) }  \\
&= F_{\eta | u} \left( h^{-1}(a, z) \right) \\
&= F_{\eta | u} \left( \eta \right).
\end{align*}
From line one to two we use the invertibility of $h(z, \eta)$ (assumption \ref{a:monotonicity}.(1/2)). From line two to three we use $Z \CI \eta \ | \ U$ (assumption \ref{a:monotonicity}.3). 
Then,
\begin{align*}
V_T &\coloneqq F_{A | Z}(A, Z) = \int_{\mathcal{U}} F_{A | Z, U}(A, Z, u) f_{U | T}(u, T) \dif \mu_U(u) \\
&= \int_{\mathcal{U}} F_{\eta | U} \left( \eta \right) f_{U | T}(u, T) \dif \mu_U(u) \\
&=  F_{\eta | T} (\eta)
\end{align*}
On line one we use $Z \CI U \ | \ T$ (assumption \ref{a:riv}.\ref{a:iv-rel}). From line one to two, we use the result derived at the beginning of this proof. The final line again follows from $\tau(Z) \CI \eta \ | \ U$ (assumption \ref{a:monotonicity}.3).
Hence,
\begin{align*}
V_T &= F_{A | Z}(A, Z) = F_{\eta | T}(\eta).
\end{align*}
By lemma \ref{l:7} (strictly increasing $F_{\eta | T}$ on the support of $\eta$), $(\eta, T)$ and $(V_T, T)=(F_{\eta | T}(\eta), T)$ are associated with the same sigma algebra. \\
We show $A \CI Y(a) \ | \ (V_T, T)$.
\begin{align*}
&f_{Y(a), A | V_T, T}(Y(a), A | V_T, T) \\
&= \int_{\mathcal{U}} \underbrace{f_{Y(a) | A, V_T, T, U}(Y(a) | h(Z, \eta), V_T, T, u)}_{=f_{Y(a) | V_T, T, u}(Y(a) | V_T, T, u)} \underbrace{f_{A | T, V_T, U}(h(Z, \eta) | V_T, T, u)}_{=f_{A | V_T, T}(h(Z, \eta) | V_T, T)} f_{U | T}(u | T) \dif \mu_U(u)  \\
 &= f_{A | V_T, T}(h(Z, \eta) | V_T, T) \underbrace{\int_{\mathcal{U}} f_{Y(a) | V_T, T, u}(Y(a) | V_T, T, u) f_{U | T}(u | T) \dif \mu_U(u)}_{=f_{Y(a) | V_T, T}(Y(a) | V_T, T)}  \\
 &=  f_{A | V_T, T}(A | V_T, T)  f_{Y(a) | V_T, T}(Y(a) | V_T, T) \ \implies \ A \CI Y(a) \ | \ (V_T, T)
\end{align*}
\end{proof}

\begin{proof}[Proof of theorem \ref{th:average-id}]
$\theta_0 \coloneqq \E{ \int_{\mathcal{A}} Y(a) \pi(a) \dif \mu_A(a) }$ is identified as 
\begin{align*}
&\int_{\mathcal{V}_T, {\mathcal{Z}^\tau}} \int_{\mathcal{A}} \E{Y | A=a, (V_T, T)=(v_T, t)} \pi(a) \dif \mu_A(a) \dif F_{V_T, T}(v_T, t) \\
&\E{ \int_{\mathcal{A}} \E{Y(a) | A=a, V_T, T} \pi(a) \dif \mu_A(a) } \\
&\E{ \int_{\mathcal{A}} \E{g(A, \varepsilon) | A=a, V_T, T} \pi(a) \dif \mu_A(a) } \\
&\E{ \int_{\mathcal{A}} \int_{\mathcal{E}} g(A, \varepsilon) \dif F_{\varepsilon | A, V_T, T}(\varepsilon, a, V_T, T) \pi(a) \dif \mu_A(a) } \\
&\E{ \int_{\mathcal{A}} \int_{\mathcal{E}} g(A, \varepsilon) \dif F_{\varepsilon | V_T, T}(\varepsilon, V_T, T) \pi(a) \dif \mu_A(a) } \\
&\E{ \int_{\mathcal{E}} \int_{\mathcal{A}} g(A, \varepsilon) \pi(a) \dif \mu_A(a) \dif F_{\varepsilon | V_T, T}(\varepsilon, V_T, T) } \\
&\E{ \int_{\mathcal{A}} g(A, \varepsilon) \pi(a) \dif \mu_A(a) } \\
&\E{ \int_{\mathcal{A}} Y(a) \pi(a) \dif \mu_A(a) }  =  \theta_0.
\end{align*}
We let $Y(a) = g(a, \varepsilon)$ for some $\varepsilon \in \mathcal{E}$ and $g \in \mathcal{L}_2(A, \varepsilon)$ from line two to three. Then, $A \CI Y(a) \ | \ (V_T, T)$ from theorem \ref{th:8} implies $A \CI \varepsilon \ | \ (V_T, T)$, which we use from line four to five. All other steps are algebra.
\end{proof}

\subsection*{Section \ref{sec:semiparametric} proofs}   \label{ssec:proofs-semiparametric}

\begin{proof}[Proof of theorem \ref{th:nonlinear-strong-id}]
For any $\tau_0 \in \mathcal{T}_\text{valid}$,
\begin{align*}
\E{m(O; \tau_0 + \eta) - m(O; \tau_0)} &= \E{m(O; \eta)} \\
 &= \E{\alpha(Z) \eta(Z)}.
\end{align*}
The first line holds by linearity of $\tau \mapsto \E{m(O; \tau)}$, the second by the Riesz representation theorem.

For any $\tau_0 \in \mathcal{T}_\text{valid}$, consider any $\eta \in \mathcal{N}(P^{\mathcal{L}_2(U)}_{Z, \mathcal{T}} P_{\mathcal{L}_2(U)}^{Z, \mathcal{T}})$, i.e.
\begin{align*}
\Pi_{\mathcal{T}} \left[ {\E{\tau_0(Z) + \eta(Z) | U} | Z} \right] &= 0.
\end{align*}
Completeness of $Z$ for $U$ by construction implies $\E{h(\eta(Z); g) | U} = 0$.

For any $\eta \in \mathcal{N}(P^{\mathcal{L}_2(U)}_{Z, \mathcal{T}} P_{\mathcal{L}_2(U)}^{Z, \mathcal{T}})$ instruments $Z$ remain exogenous, which implies
\begin{align*}
0 &= \E{m(O; \eta)} 
 = \E{\alpha(Z) \eta(Z)}.
\end{align*}
The above condition holds for any $\eta \in \mathcal{N}(P^{\mathcal{L}_2(U)}_{Z, \mathcal{T}} P_{\mathcal{L}_2(U)}^{Z, \mathcal{T}})$, therefore it must be that $\alpha \in \mathcal{N}^\perp(P^{\mathcal{L}_2(U)}_{Z, \mathcal{T}} P_{\mathcal{L}_2(U)}^{Z, \mathcal{T}})$ when $\theta_0$ is identified as $\theta_0 = \E{m(O; \tau_0)}$ for any $\tau_0 \in \mathcal{T}_\text{valid}$.

Note that by $\E{g(Z) | W} = \E{\E{g(Z) | U} | W}$ for any $g \in \mathcal{G}$ and $\Pi_\mathcal{T} \left[ q(W) | Z \right] = \Pi_\mathcal{T} \left[ \E{q(W) | U} | Z \right]$ for $\mathcal{T} \subseteq \mathcal{L}_2(Z)$ for any $q \in \mathcal{L}_2(W)$ (relaxed version of assumption \ref{a:riv}.\ref{a:pl-excl}),
\begin{align*}
P^{\mathcal{L}_2(W)}_{Z, \mathcal{T}} P_{\mathcal{L}_2(W)}^{Z, \mathcal{T}} &= \Pi_\mathcal{T} \left[ {\E{\tau_0(Z) + \eta(Z) | W} | Z} \right] \\
 &= \Pi_\mathcal{T} \left[ { \E{ \E{ \E{\tau_0(Z) + \eta(Z) | U} | W} | U} | Z} \right] = P_{Z, \mathcal{T}}^{\mathcal{L}_2(U)} P_{\mathcal{L}_2(U)}^{\mathcal{L}_2(W)} P_{\mathcal{L}_2(W)}^{\mathcal{L}_2(U)} P_{\mathcal{L}_2(U)}^{Z, \mathcal{T}}.
\end{align*}
Then, by completeness of $W$ for $U$ (assumption \ref{a:riv}.\ref{a:pl-rel}),
\begin{align*}
P_{\mathcal{L}_2(U)}^{\mathcal{L}_2(W)} P_{\mathcal{L}_2(W)}^{\mathcal{L}_2(U)} P_{\mathcal{L}_2(U)}^{Z, \mathcal{T}} &= \E{ \E{ \E{\tau_0(Z) + \eta(Z) | U} | W} | U} \\
 &= \E{\tau_0(Z) + \eta(Z) | U} = P_{\mathcal{L}_2(U)}^{Z, \mathcal{T}}.
\end{align*}
Completeness of $W$ for $U$ (assumption \ref{a:riv}.\ref{a:pl-rel}) implies the above for the following reason: Only if completeness is violated can it be that $0 = \E{g(U) | W}$ (and thus $0 = \E{\E{g(U) | W} | U}$) for some $g(U) \neq 0$. Consequently, $\gamma(U) = \E{\E{\gamma(U) + g(U) | W} | U}$ only if $g(U) = 0$ when completeness holds, which implies $\gamma(U) = \E{\E{\gamma(U) | W} | U}$ when completeness holds.
 
Jointly, this concludes the proof as
\begin{align*}
\alpha \in \mathcal{N}^\perp(P^{\mathcal{L}_2(U)}_{Z, \mathcal{T}} P_{\mathcal{L}_2(U)}^{Z, \mathcal{T}}) = \mathcal{N}^\perp(P^{\mathcal{L}_2(W)}_{Z, \mathcal{T}} P_{\mathcal{L}_2(W)}^{Z, \mathcal{T}}).
\end{align*}
\end{proof}

\begin{proof}[Proof of theorem \ref{th:nonlinear-error}]
$\theta_0$ is strongly identified, so
\begin{align*}
\E{q_0(W) \tau(Z)} = \E{\alpha(Z) \tau(Z)} = \E{m(O; \tau)} \ \text{for any} \ h \in \mathcal{H}.
\end{align*}
Use this, and $\E{g_0(Z) | W} - \E{\tau_0(Z) | W}$ below:
\begin{align*}
\E{\psi(O; \tau, q)} - \theta_0 &= \E{m(O; \tau) + q(W)(g_0(Z) - \tau(Z))} - \theta_0 \\
&= \E{m(O; \tau) + q(W)(g_0(Z) - \tau(Z))} - \E{m(O; \tau_0)} - \E{q(W)(g_0(Z) - \tau_0(Z))} \\
&= \E{\alpha(Z) \tau(Z)} - \E{\alpha(Z) \tau_0(Z)} + \E{q(W)(\tau_0(Z) - \tau(Z))} \\
&= \E{\alpha(Z) (\tau(Z) - \tau_0(Z))} - \E{q(W)(\tau(Z) - \tau_0(Z))} \\
&= \E{\E{q_0(W) | Z} (\tau(Z) - \tau_0(Z))} - \E{q(W)(\tau(Z) - \tau_0(Z))} \\
&= \E{q_0(W) (\tau(Z) - \tau_0(Z))} - \E{q(W)(\tau(Z) - \tau_0(Z))} \\
&= - \E{\left( q(W) - q_0(W) \right) (\tau(Z) - \tau_0(Z))}
\end{align*}

By projecting onto $Z$ and $W$ respectively, the error rates follow:
\begin{align*}
\left| \E{\psi(O; \tau, q)} - \theta_0 \right| &= \left| \E{ \E{ q(W) - q_0(W) | Z} (\tau(Z) - \tau_0(Z))} \right| = \left| \left\langle P^{\mathcal{L}_2(W)}_{Z, \mathcal{T}} (q - q_0), \tau - \tau_0 \right\rangle \right| \\
&= \left| \E{ \left( q(W) - q_0(W) \right) \E{\tau(Z) - \tau_0(Z) | W}} \right| = \left| \left\langle q - q_0, P_{\mathcal{L}_2(W)}^{Z, \mathcal{T}} (\tau - \tau_0) \right\rangle \right|
\end{align*}

Then use the Cauchy-Schwartz inequality to obtain
\begin{align*}
\left| \E{\psi(O; \tau, q)} - \theta_0 \right| &\leq \min \left\{ \left\lVert P_{\mathcal{L}_2(W)}^{Z, \mathcal{T}} \left( \tau - \tau_0 \right) \right\rVert_2 \left\lVert q - q_0 \right\rVert_2, \  \left\lVert \tau - \tau_0 \right\rVert_2 \left\lVert P^{\mathcal{L}_2(W)}_{Z, \mathcal{T}} \left( q - q_0 \right) \right\rVert_2  \right\}
\end{align*}

Double robustness is satisfied as
\begin{align*}
\E{\psi(O; \tau, q_0)} &= \theta_0 - \E{\left( q_0(W) - q_0(W) \right) (\tau(Z) - \tau_0(Z))} = \theta_0 \\
\E{\psi(O; \tau_0, q)} &= \theta_0 - \E{\left( q(W) - q_0(W) \right) (\tau_0(Z) - \tau_0(Z))} = \theta_0.
\end{align*}

Neyman orthogonality is satisfied as $\tau_0 \in \mathcal{T}_\text{valid}$ and $q_0 \in \mathcal{Q}_0$ if and only if
\begin{align*}
\left. \frac{\partial}{\partial t}  \E{\psi(O; \tau_0 + t \tau, q_0)} \right|_{t=0} = \E{\left( \alpha(Z) - q_0(W) \right) \tau(Z)} = \E{\E{ \alpha(Z) - q_0(W) | Z} \tau(Z)} &= 0 \\
\left. \frac{\partial}{\partial t}  \E{\psi(O; \tau, q_0 + t q)} \right|_{t=0} = \E{q(W) (g_0(Z) - \tau_0(Z))} = \E{q(W) \E{g_0(Z) - \tau_0(Z) | W}} &= 0.
\end{align*}
\end{proof}

\begin{proof}[Proof of theorem \ref{th:error-decomposition}]
Want to show that:
\begin{footnotesize}
\begin{align*}
&\E{m_3(O; k, \tau, g, q_k, q_\tau, \alpha_g)} - \theta_0 \\
&= \E{m_0(O; k) + q_k(Z) \left( g(Z) - \tau(Z) - k(A; \tau, g) \right) + q_\tau(W; q_k) \left( \tau(Z; g) - g(Z) \right) + \alpha_g(Z; q_k, q_\tau) \left( g_Y(Y) - g(Z) \right)} - \theta_0 \\
&= \E{\left( q_{k, 0}(Z) - q_k(Z) \right) \left( k(A; \tau, g) - k_0(A; \tau_0, g_0) \right)} \\
& \ \ \ \ + \E{\left( q_{\tau, 0}(W; q_k) - q_\tau(W; q_k) \right) \left( \tau_0(Z; g_0) - \tau(Z; g) \right)} \\
& \ \ \ \ + \E{\left( \alpha_{g, 0}(Z; q_k, q_\tau) - \alpha_g(Z; q_k, q_\tau) \right) \left( g(Z) - g_0(Z) \right)}  
\end{align*}
\end{footnotesize}

First, deal with $\E{m_0(O; k) + q_k(Z) \left( g(Z) - \tau(Z) - k(A) \right)} - \theta_0$.
From line one to two below use that $\theta_0 = \E{m(O; k_0)}$, and that $\E{m(O; k)} = \E{\alpha_{k, 0} k(A)}$ for any $k \in \mathcal{K}$ by the Riesz representation theorem when $k \mapsto \E{m(O; k)}$ is a linear and continuous functional. From line one to two also use that $\E{k_0(A; \tau, g) | Z} = g(Z) - \tau(Z)$ for any $\tau \in \mathcal{T}$ and $g \in \mathcal{G}$. From line two to three use that $\alpha_{k, 0}(A) = \E{q_{k, 0}(Z) | A}$ for $q_{k, 0}(Z) \in \mathcal{Q}_{k, 0}$ under strong instrument relevance \ref{a:iv-strong-relevance}.
\begin{footnotesize}
\begin{align*}
&\E{m_0(O; k(\tau, g)) + q_k(Z) \left( g(Z) - \tau(Z) - k(A; \tau, g) \right)} - \theta_0 \\
&= \E{\alpha_{k, 0}(A) k(A; \tau, g)} - \E{\alpha_{k, 0}(A) k_0(A; \tau_0, g_0)} \\
& \ \ \ \ + \E{q_k(Z) \left( g(Z) - \tau(Z; g) - k(A; \tau, g) \right)} - \E{q_k(Z) \left( g_0(Z) - \tau_0(Z; g_0) - k_0(A; \tau_0, g_0) \right)} \\
&= \E{q_{k, 0}(Z) \left( k(A; \tau, g) - k_0(A; \tau_0, g_0) \right)} \\
& \ \ \ \ + \E{q_k(Z) \left( \left( g(Z) - \tau(Z; g) - k(A; \tau, g) \right) - \left( g_0(Z) - \tau_0(Z; g_0) - k_0(A; \tau_0, g_0) \right) \right)}  \\
&= \E{\left( q_{k, 0}(Z) - q_k(Z) \right) \left( k(A; \tau, g) - k_0(A; \tau_0, g_0) \right)} \\
& \ \ \ \ + \E{q_k(Z) \left( \left( g(Z) - \tau(Z; g) \right) - \left( g_0(Z) - \tau_0(Z; g_0) \right) \right)}  
\end{align*}
\end{footnotesize}

Now, deal with the terms involving $\tau_0$ and $\tau$ from the previous step, and $\E{q_\tau(W; q_k) \left( \tau(Z; g) - g(Z) \right)}$ from $m_3$.
From line one to two use theorem \ref{th:nonlinear-strong-id}, which implies $\E{q_k(Z) \tau(Z; g)} = \E{\alpha_{\tau, 0}(Z; q_k) \tau(Z; g)}$ with $\alpha_{\tau, 0}(q_k) \in \mathcal{N}^\perp(P^{\mathcal{L}_2(W)}_{Z, \mathcal{T}} P_{\mathcal{L}_2(W)}^{Z, \mathcal{T}})$. Also, from line one to two use $\E{g(Z) | W} = \E{\tau_0(Z; g) | W}$ for any $g \in \mathcal{G}$.
\begin{footnotesize}
\begin{align*}
&\E{q_k(Z) \left( \tau_0(Z; g_0) - \tau(Z; g) \right)} + \E{q_\tau(W; q_k) \left( \tau(Z; g) - g(Z) \right)} \\
&= \E{q_{\tau, 0}(W; q_k) \left( \tau_0(Z; g_0) - \tau(Z; g) \right)} + \E{q_\tau(W; q_k) \left( \tau(Z; g) - g(Z) \right)} - \E{q_\tau(W; q_k) \left( \tau_0(Z; g_0) - g_0(Z) \right)} \\
&= \E{\left( q_{\tau, 0}(W; q_k) - q_\tau(W; q_k) \right) \left( \tau_0(Z; g_0) - \tau(Z; g) \right)} + \E{q_\tau(W; q_k) \left( g_0(Z) - g(Z) \right)} 
\end{align*}
\end{footnotesize}

Finally, deal with the remaining terms involving $g$ and $g_0$. From line to two use that $\alpha_{g, 0}(Z; q_k, q_\tau) = \E{\left(q_k(Z) - q_\tau(W; q_k)\right) | Z}$. Also use that $\E{g_Y(Y) | Z} = g_0(Z)$.
\begin{footnotesize}
\begin{align*}
&\E{\left(q_k(Z) - q_\tau(W; q_k)\right) \left( g(Z) - g_0(Z) \right)} + \E{\alpha_g(Z; q_k, q_\tau) \left( g_Y(Y) - g(Z) \right)} \\
&= \E{\alpha_{g, 0}(Z; q_k, q_\tau) \left( g(Z) - g_0(Z) \right)} + \E{\alpha_g(Z; q_k, q_\tau) \left( g_Y(Y) - g(Z) \right)} - \E{\alpha_g(Z; q_k, q_\tau) \left( g_Y(Y) - g_0(Z) \right)}  \\
&= \E{\alpha_{g, 0}(Z; q_k, q_\tau) \left( g(Z) - g_0(Z) \right)} + \E{\alpha_g(Z; q_k, q_\tau) \left( g_0(Z) - g(Z) \right)}  \\
&= \E{\left( \alpha_{g, 0}(Z; q_k, q_\tau) - \alpha_g(Z; q_k, q_\tau) \right) \left( g(Z) - g_0(Z) \right)}   
\end{align*}
\end{footnotesize}

Join all of the above to complete the proof:
\begin{footnotesize}
\begin{align*}
\E{m_3(O; k, \tau, g, q_k, q_\tau, \alpha_g)} - \theta_0 
&= \E{\left( q_{k, 0}(Z) - q_k(Z) \right) \left( k(A) - k_0(A) \right)} \\
& \ \ \ \ + \E{\left( q_{\tau, 0}(W; q_k) - q_\tau(W; q_k) \right) \left( \tau_0(Z) - \tau(Z) \right)} \\
& \ \ \ \ + \E{\left( \alpha_{g, 0}(Z; q_k, q_\tau) - \alpha_g(Z; q_k, q_\tau) \right) \left( g(Z) - g_0(Z) \right)}.
\end{align*}
\end{footnotesize}
\end{proof}

\begin{proof}[Proof of corollary \ref{co:double-robustness}]
The conditions of theorem hold, \ref{th:error-decomposition} hold, so
\begin{align*}
\E{m_3(O; k, \tau, g, q_k, q_\tau, \alpha_g)} - \theta_0 
&= \underbrace{\E{\left( q_{k, 0}(Z) - q_k(Z) \right) \left( k(A; \tau, g) - k_0(A; \tau_0, g_0) \right)}}_{\text{Term 1}} \\
& \ \ \ \ + \underbrace{\E{\left( q_{\tau, 0}(W; q_k) - q_\tau(W; q_k) \right) \left( \tau_0(Z; g_0) - \tau(Z; g) \right)}}_{\text{Term 2}} \\
& \ \ \ \ + \underbrace{\E{\left( \alpha_{g, 0}(Z; q_k, q_\tau) - \alpha_g(Z; q_k, q_\tau) \right) \left( g(Z) - g_0(Z) \right)}}_{\text{Term 3}}.
\end{align*}
\begin{enumerate}
  \item Term 1 is zero whenever $\left( (k = k_0(\tau_0, g_0)) \lor (q_k = q_{k, 0}) \right)$, irrespective of the other two terms.
  \item Term 2 is zero whenever $\left( (\tau = \tau_0(g_0)) \lor (q_\tau = q_{\tau, 0}) \right)$, irrespective of the other two terms.
  \item Term 3 is zero whenever $\left( (g = g_0) \lor (\alpha_g = \alpha_{g, 0}) \right)$, irrespective of the other two terms.
\end{enumerate}
This completes the proof.
\end{proof}

\begin{proof}[Proof of corollary \ref{co:neyman-orthogonality}]
The conditions of theorem hold, \ref{th:error-decomposition} hold, so
\begin{align*}
\E{m_3(O; k, \tau, g, q_k, q_\tau, \alpha_g)} - \theta_0 
&= \E{\left( q_{k, 0}(Z) - q_k(Z) \right) \left( k(A; \tau, g) - k_0(A; \tau_0, g_0) \right)} \\
& \ \ \ \ + \E{\left( q_{\tau, 0}(W; q_k) - q_\tau(W; q_k) \right) \left( \tau_0(Z; g_0) - \tau(Z; g) \right)} \\
& \ \ \ \ + \E{\left( \alpha_{g, 0}(Z; q_k, q_\tau) - \alpha_g(Z; q_k, q_\tau) \right) \left( g(Z) - g_0(Z) \right)}.
\end{align*}

Let $\partial_g f(g, h) \coloneqq \left. \frac{\partial}{\partial t} f(.; g_0 + t g, h) \right|_{t=0}$ for any functions $f$, $g$, and $h$.
\begin{small}
\begin{align*}
&\left. \frac{\partial}{\partial t}  \E{m_3(O; k_0 + t k, \tau, g, q_k, q_\tau, \alpha_g)} \right|_{t=0} = \E{ q_{k, 0}(Z) - q_k(Z) } = 0 \text{ if } q_k = q_{k, 0}, \\
&\left. \frac{\partial}{\partial t}  \E{m_3(O; k, \tau_0 + t \tau, g, q_k, q_\tau, \alpha_g)} \right|_{t=0} \\
 & \ \ \ \ = \E{ \left( q_{k, 0}(Z) - q_k(Z) \right) \partial_\tau k(A; \tau, g) + \left( q_{\tau, 0}(W; q_k) - q_\tau(W; q_k) \right) } \\
 & \ \ \ \ = 0 \text{ if } \left( q_\tau = q_{\tau, 0} \right) \land \left( \left( \partial_\tau k(\tau, g) = 0 \right) \lor \left( q_k = q_{k, 0} \right) \right), \\
&\left. \frac{\partial}{\partial t}  \E{m_3(O; k, \tau, g_0 + t g, q_k, q_\tau, \alpha_g)} \right|_{t=0} \\
 & \ \ \ \ = \E{ \left( q_{k, 0}(Z) - q_k(Z) \right) \partial_g k(A; \tau, g) - \left( q_{\tau, 0}(W; q_k) - q_\tau(W; q_k) \right) \partial_g \tau(Z; g) + \left( \alpha_{g, 0}(Z; q_k, q_\tau) - \alpha_g(Z; q_k, q_\tau) \right) } \\
 & \ \ \ \ = 0 \text{ if } \left( \alpha_g = \alpha_{g, 0} \right) \land \left( (\partial_g k(\tau, g) = 0) \lor (q_k = q_{k, 0}) \right) \land \left( (\partial_g \tau(g) = 0) \lor (q_\tau = q_{\tau, 0}) \right), \\
&\left. \frac{\partial}{\partial t}  \E{m_3(O; k, \tau, g, q_{k, 0} + t q_k, q_\tau, \alpha_g)} \right|_{t=0} \\
 & \ \ \ \ = \E{- \left( k(A; \tau, g) - k_0(A; \tau_0, g_0) \right) + \left( \partial_{q_k} \left( q_{\tau, 0}(W; q_k) - q_\tau(W; q_k) \right) \right) \left( \tau_0(Z; g_0) - \tau(Z; g) \right)} \\
 & \ \ \ \ \ \ \ \ + \E{ \left( \partial_{q_k} \left( \alpha_{g, 0}(Z; q_k, q_\tau) - \alpha_g(Z; q_k, q_\tau) \right) \left( g(Z) - g_0(Z) \right) \right) } \\
 & \ \ \ \ = 0 \text{ if } \left( k = k_0(\tau_0, g_0) \right) \land \left( (\tau = \tau_0(g_0)) \lor (q_\tau = q_{\tau, 0}) \right) \land \left( (g = g_0) \lor (\alpha_g = \alpha_{g, 0}) \right), \\
&\left. \frac{\partial}{\partial t}  \E{m_3(O; k, \tau, g, q_k, q_{\tau, 0} + t q_\tau, \alpha_g)} \right|_{t=0} \\
 & \ \ \ \ = \E{ - \left( \tau_0(Z; g_0) - \tau(Z; g) \right) + \partial_{q_\tau} \left( \alpha_{g, 0}(Z; q_k, q_\tau) - \alpha_g(Z; q_k, q_\tau) \right) \left( g(Z) - g_0(Z) \right) } \\
 & \ \ \ \ = 0 \text{ if } \left( \tau = \tau_0(g_0) \right) \land \left( (g = g_0) \lor (\alpha_g = \alpha_{g, 0}) \right), \\
&\left. \frac{\partial}{\partial t}  \E{m_3(O; k, \tau, g, q_k, q_\tau, \alpha_{g, 0} + t \alpha_g)} \right|_{t=0} = \E{ - \left( g(Z) - g_0(Z) \right) } = 0 \text{ if } (g = g_0).
\end{align*}
\end{small}
\end{proof}

\begin{proof}[Proof of corollary \ref{co:error-bounds}]
The conditions of theorem hold, \ref{th:error-decomposition} hold, so
\begin{align*}
\E{m_3(O; k, \tau, g, q_k, q_\tau, \alpha_g)} - \theta_0 
&= \underbrace{\E{\left( q_{k, 0}(Z) - q_k(Z) \right) \left( k(A; \tau, g) - k_0(A; \tau_0, g_0) \right)}}_{\text{Term 1}} \\
& \ \ \ \ + \underbrace{\E{\left( q_{\tau, 0}(W; q_k) - q_\tau(W; q_k) \right) \left( \tau_0(Z; g_0) - \tau(Z; g) \right)}}_{\text{Term 2}} \\
& \ \ \ \ + \underbrace{\E{\left( \alpha_{g, 0}(Z; q_k, q_\tau) - \alpha_g(Z; q_k, q_\tau) \right) \left( g(Z) - g_0(Z) \right)}}_{\text{Term 3}}.
\end{align*}
Each of these terms can be bounded using the Cauchy-Schwarz inequality.

Term 1:
\begin{small}
\begin{align*}
&\E{\left( q_{k, 0}(Z) - q_k(Z) \right) \left( k(A; \tau, g) - k_0(A; \tau_0, g_0) \right)} \\
 & \ \ \ \ = \E{\left( q_{k, 0}(Z) - q_k(Z) \right) P^{A, \mathcal{K}}_{\mathcal{L}_2(Z)} \left[ \left( k(A; \tau, g) - k_0(A; \tau_0, g_0) \right) \right]} \\
 &\ \ \ \ \ \ \ \ = \left\langle (q_{k, 0} - q_k), P^{A, \mathcal{K}}_{\mathcal{L}_2(Z)} \left[ k(\tau, g) - k_0(\tau_0, g_0) \right] \right\rangle \\
  & \ \ \ \ = \E{ P_{A, \mathcal{K}}^{\mathcal{L}_2(Z)} \left[ \left( q_{k, 0}(Z) - q_k(Z) \right) \right]  \left( k(A; \tau, g) - k_0(A; \tau_0, g_0) \right)} \\
 &\ \ \ \ \ \ \ \ = \left\langle P_{A, \mathcal{K}}^{\mathcal{L}_2(Z)} \left[ q_{k, 0} - q_k \right], \left( k(\tau, g) - k_0(\tau_0, g_0) \right) \right\rangle \\
  & \ \ \ \ \leq \min \left\{ \left\lVert P^{A, \mathcal{K}}_{\mathcal{L}_2(Z)} \left(k_0(\tau_0, g_0) - k(\tau, g)\right) \right\rVert_2 \left\lVert q_{k, 0} - q_k \right\rVert_2, \ \left\lVert k_0(\tau_0, g_0) - k(\tau, g) \right\rVert_2 \left\lVert P_{A, \mathcal{K}}^{\mathcal{L}_2(Z)} \left( q_{k, 0} - q_k \right) \right\rVert_2  \right\} 
\end{align*}
\end{small}

Term 2:
\begin{small}
\begin{align*}
&\E{\left( q_{\tau, 0}(W; q_k) - q_\tau(W; q_k) \right) \left( \tau_0(Z; g_0) - \tau(Z; g) \right)} \\
 & \ \ \ \ = \E{\left( q_{\tau, 0}(W; q_k) - q_\tau(W; q_k) \right) P^{\mathcal{L}_2(W)}_{Z, \mathcal{T}} \left[ \tau_0(Z; g_0) - \tau(Z; g) \right]} \\
  & \ \ \ \ \ \ \ = \left\langle \left( q_{\tau, 0}(q_k) - q_\tau(q_k) \right), P^{\mathcal{L}_2(W)}_{Z, \mathcal{T}} \left[ \tau_0(g_0) - \tau(g) \right] \right\rangle \\
  & \ \ \ \ = \E{ P_{\mathcal{L}_2(W)}^{Z, \mathcal{T}} \left[ q_{\tau, 0}(W; q_k) - q_\tau(W; q_k) \right]  \left( \tau_0(Z; g_0) - \tau(Z; g) \right)} \\
  & \ \ \ \ \ \ \ = \left\langle P_{\mathcal{L}_2(W)}^{Z, \mathcal{T}}  \left[ q_{\tau, 0}(q_k) - q_\tau(q_k) \right], \left( \tau_0(g_0) - \tau(g) \right) \right\rangle \\
  & \ \ \ \ \leq \min \left\{ \left\lVert P^{\mathcal{L}_2(W)}_{Z, \mathcal{T}} \left(\tau_0(g_0) - \tau(g) \right) \right\rVert_2 \left\lVert q_{\tau, 0}(q_k) - q_\tau(q_k) \right\rVert_2, \ \left\lVert \tau_0(g_0) - \tau(g) \right\rVert_2 \left\lVert P_{\mathcal{L}_2(W)}^{Z, \mathcal{T}}  \left( q_{\tau, 0}(q_k) - q_\tau(q_k) \right) \right\rVert_2  \right\}
\end{align*}
\end{small}

Term 3:
\begin{small}
\begin{align*}
&\E{\left( \alpha_{g, 0}(Z; q_k, q_\tau) - \alpha_g(Z; q_k, q_\tau) \right) \left( g(Z) - g_0(Z) \right)} \\
 & \ \ \ \ = \left\langle \left( \alpha_{g, 0}(Z; q_k, q_\tau) - \alpha_g(Z; q_k, q_\tau) \right), \left(g - g_0 \right) \right\rangle \\
 & \ \ \ \ \leq \left\lVert \left(g - g_0 \right) \right\rVert_2 \left\lVert \alpha_{g, 0}(Z; q_k, q_\tau) - \alpha_g(Z; q_k, q_\tau) \right\rVert_2
\end{align*}
\end{small}

Joining these terms, we obtain
\begin{small}
\begin{align*}
&\E{m_3(O; k, \tau, g, q_k, q_\tau, \alpha_g)} - \theta_0 \\
&\ \ \ \ \leq  \min \left\{ \left\lVert P^{A, \mathcal{K}}_{\mathcal{L}_2(Z)} \left(k_0(\tau_0, g_0) - k(\tau, g)\right) \right\rVert_2 \left\lVert q_{k, 0} - q_k \right\rVert_2, \ \left\lVert k_0(\tau_0, g_0) - k(\tau, g) \right\rVert_2 \left\lVert P_{A, \mathcal{K}}^{\mathcal{L}_2(Z)} \left( q_{k, 0} - q_k \right) \right\rVert_2  \right\}  \\
&\ \ \ \ \ \  + \min \left\{ \left\lVert P^{\mathcal{L}_2(W)}_{Z, \mathcal{T}} \left(\tau_0(g_0) - \tau(g) \right) \right\rVert_2 \left\lVert q_{\tau, 0}(q_k) - q_\tau(q_k) \right\rVert_2, \ \left\lVert \tau_0(g_0) - \tau(g) \right\rVert_2 \left\lVert P_{\mathcal{L}_2(W)}^{Z, \mathcal{T}}  \left( q_{\tau, 0}(q_k) - q_\tau(q_k) \right) \right\rVert_2  \right\} \\
&\ \ \ \ \ \  + \left\lVert \left(g - g_0 \right) \right\rVert_2 \left\lVert \alpha_{g, 0}(Z; q_k, q_\tau) - \alpha_g(Z; q_k, q_\tau) \right\rVert_2.
\end{align*}
\end{small}
\end{proof}

\begin{proof}[Proof of theorem \ref{th:dr-est}]

Let $\hat{\theta}_k = \frac{1}{|\mathcal{I}_j|} \sum_{i \in \mathcal{I}_j} m_3 \left(O; \hat{k}^{(j)}, \hat{\tau}^{(j)}, \hat{g}^{(j)}, \hat{q}_k^{(j)}, \hat{q}_\tau^{(j)}, \hat{\alpha}_g^{(j)} \right)$. Also, let $H = (k, \tau, g, q_k, q_\tau, \alpha_g)$, and $H_0 = (k_0, \tau_0, g_0, q_{k,0}, q_{\tau,0}, \alpha_{g,0})$ as well as $\hat{H}^{(j)} = (\hat{k}^{(j)}, \hat{\tau}^{(j)}, \hat{g}^{(j)}, \hat{q}_k^{(j)}, \hat{q}_\tau^{(j)}, \hat{\alpha}_g^{(j)})$, so $\hat{\theta}_k =  \mathbb{E}_{n, k} \left[ m_3(O; \hat{H}^{(j)}) \right]$.
Expand the above to get
\begin{small}
\begin{align*}
\hat{\theta}_k - \theta_0 &= \mathbb{E}_{n, k} \left[ m_3(O; \hat{H}^{(j)}) - \theta_0 \right] \\
 &= \mathbb{E}_{n, k} \left[ m_3(O; H_0) - \theta_0 \right] + \mathbb{E}_{n, k} \left[ m_3(O; \hat{H}^{(j)}) - m_3(O; H_0) \right] \\
 &= \mathbb{E}_{n, k} \left[ m_3(O; H_0) - \theta_0 \right] + \left( \mathbb{E}_{n, k} - \mathbb{E} \right) \left[ m_3(O; \hat{H}^{(j)}) - m_3(O; H_0) \right] + \mathbb{E} \left[ m_3(O; \hat{H}^{(j)}) - m_3(O; H_0) \right].
\end{align*}
\end{small}

By assumption \ref{a:rate-conditions}, $\lVert \hat{\eta}^{(j)} - \eta_0 \rVert_2^2 = o_p(1)$ for any $\eta \in H$. It follows from simple algebra that there exists a universal constant such that 
\begin{align*}
\mathbb{E} \left[ \left( m_3(O; \hat{H}^{(j)}) - m_3(O; H_0) \right)^2 \right] \leq c \sum_{\eta \in H} \lVert \hat{\eta}^{(j)} - \eta_0 \rVert_2^2 = o_p(1).
\end{align*}
Then, using the Markov inequality,
\begin{align*}
\left| \left( \mathbb{E}_{n, k} - \mathbb{E} \right) \left[ m_3(O; \hat{H}^{(j)}) - m_3(O; H_0) \right] \right| = o_p \left( | \mathcal{I}_j |^{-1/2} \right) = o_p \left(n^{-1/2}\right).
\end{align*}
Dealing with $ \mathbb{E} \left[ m_3(O; \hat{H}^{(j)}) - m_3(O; H_0) \right]$ requires assumptions on the rates on the components of its error decomposition. By corollary \ref{co:error-bounds},
\begin{tiny}
\begin{align*}
&\left| \mathbb{E} \left[ m_3(O; \hat{H}^{(j)}) - m_3(O; H_0) \right] \right| \\
&\  \leq  \min \left\{ \left\lVert P^{A, \mathcal{K}}_{\mathcal{L}_2(Z)} \left(k_0(\tau_0, g_0) - \hat{k}^{(j)}(\hat{\tau}^{(j)}, \hat{g}^{(j)})\right) \right\rVert_2 \left\lVert q_{k, 0} - \hat{q}_k^{(j)} \right\rVert_2, \ \left\lVert k_0(\tau_0, g_0) - \hat{k}^{(j)}(\hat{\tau}^{(j)}, \hat{g}^{(j)}) \right\rVert_2 \left\lVert P_{A, \mathcal{K}}^{\mathcal{L}_2(Z)} \left( q_{k, 0} - \hat{q}_k^{(j)} \right) \right\rVert_2  \right\}  \\
&\ \  + \min \left\{ \left\lVert P_{\mathcal{L}_2(W)}^{Z, \mathcal{T}} \left(\tau_0(g_0) - \hat{\tau}^{(j)}(\hat{g}^{(j)}) \right) \right\rVert_2 \left\lVert q_{\tau, 0}(\hat{q}_k^{(j)}) - \hat{q}_\tau^{(j)}(\hat{q}_k^{(j)}) \right\rVert_2, \ \left\lVert \tau_0(g_0) - \hat{\tau}^{(j)}(\hat{g}^{(j)}) \right\rVert_2 \left\lVert P^{\mathcal{L}_2(W)}_{Z, \mathcal{T}}  \left( q_{\tau, 0}(\hat{q}_k^{(j)}) - \hat{q}_\tau^{(j)}(\hat{q}_k^{(j)}) \right) \right\rVert_2  \right\} \\
&\ \  + \left\lVert \left(\hat{g}^{(j)} - g_0 \right) \right\rVert_2 \left\lVert \alpha_{g, 0}(Z; \hat{q}_k^{(j)}, \hat{q}_\tau^{(j)}) - \hat{\alpha}_g^{(j)}(Z; \hat{q}_k^{(j)}, \hat{q}_\tau^{(j)}) \right\rVert_2.
\end{align*}
\end{tiny}
Using definition \ref{d:l2-convergence-rates}, the above can be written as
\begin{align*}
&\left| \mathbb{E} \left[ m_3(O; \hat{H}^{(j)}) - m_3(O; H_0) \right] \right| \\
&\  \leq \min \left\{ \epsilon_{n, k, \mathcal{L}_2(Z)} \epsilon_{n, q_k}, \epsilon_{n, k} \epsilon_{n, q_k, \mathcal{K}}  \right\} + \min \left\{ \epsilon_{n, \tau, \mathcal{L}_2(W)} \epsilon_{n, q_\tau}, \epsilon_{n, \tau} \epsilon_{n, q_\tau, \mathcal{T}}  \right\} + \epsilon_{n, g} \epsilon_{n, \alpha_g} = o_p \left( n^{-1/2} \right),
\end{align*}
where the final equality holds by the assumption on L2 rate conditions, assumption \ref{a:rate-conditions}.2.

This simplifies the problem to 
\begin{align*}
\hat{\theta}_k - \theta_0 &= \mathbb{E}_{n, k} \left[ m_3(O; H_0) - \theta_0 \right] + o_p \left( n^{-1/2} \right) \\
\sqrt{n} \left( \hat{\theta}_k - \theta_0 \right) &= \frac{1}{\sqrt{n}} \sum_{i=1}^n \left( m_3(O_i; H_0) - \theta_0 \right) + o_p(1).
\end{align*}
By the Central Limit Theorem, as $n \rightarrow \infty$,
\begin{align*}
\sqrt{n} \left( \hat{\theta}_n - \theta_0  \right) &\rightarrow N(0, \sigma_0^2), & \sigma_0^2 &= \E{\left( m_3(O; H_0) - \theta_0 \right)^2}.
\end{align*}
By Slutsky's theorem, as long as $\hat{\sigma}^2_n \rightarrow \sigma^2_0$ in probability for an estimator $\hat{\sigma}^2_n$ of $\sigma^2_0$, it holds that as $n \rightarrow \infty$,
\begin{align*}
\frac{\sqrt{n}}{\hat{\sigma}^2_n} \left( \hat{\theta}_n - \theta_0  \right) &\rightarrow N(0, 1).
\end{align*}

\end{proof}

\section{Data Description} \label{sec:data-desc}
The sample consists of 1,983 individuals.

\subsubsection*{$Y$: Household net worth at 35 (Z9141400)}
Household net worth was top-coded at 600,000\$ and bottom-coded at -300,000\$. 7.0\% of individuals were top-coded, 0.3\% bottom-coded.

\begin{figure}[h]
\begin{center}
\caption{Histogram for $Y$}  \label{f:yhist}
\includegraphics[width=0.786\textwidth]{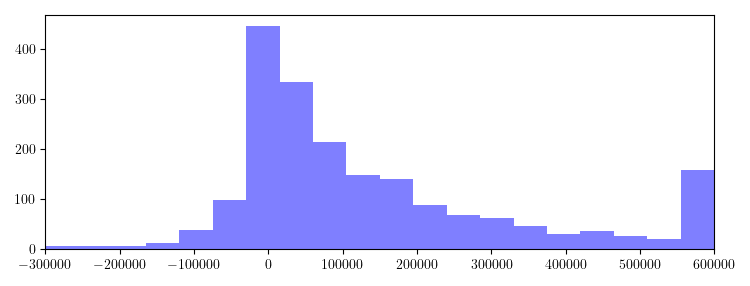}
\end{center}
\begin{footnotesize}
\textit{Notes}:
Distribution of household net worth.
\end{footnotesize}
\end{figure}

\subsubsection*{$A$: Bachelor's degree obtained (Z9084400)}
If there is a date of obtaining a bachelor's degree (Z9084400 $\geq 0$ or invalid skip $-3$), $A=1$. 50.0\% of individuals in the sample have obtained a BA degree.

\subsubsection*{$Z$: Pre-college ability measures}
Instruments are credit-weighted high-school GPAs in English (R9872000), Math (R9872200), Social Sciences (R9872300) and Life Sciences (R9872400), as well as the ASVAB percentile in each individual's respective age group.

\begin{figure}[h]
\begin{center}
\caption{Histogram for high-school GPA}  \label{f:gpahist}
\includegraphics[width=0.786\textwidth]{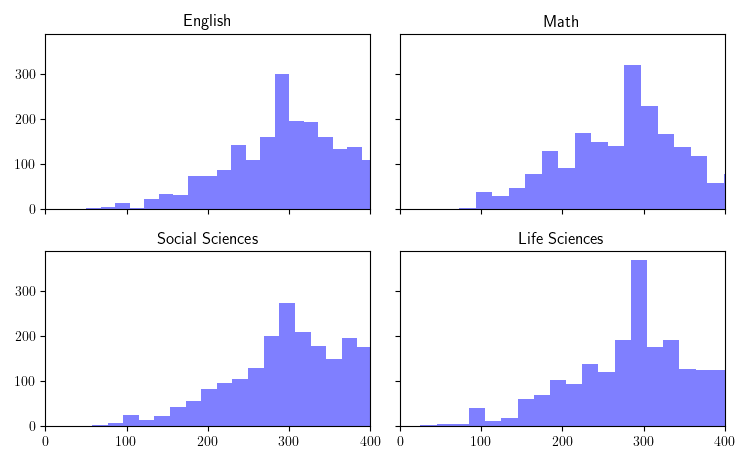}
\end{center}
\begin{footnotesize}
\textit{Notes}:
Distribution of household high-school GPAs.
\end{footnotesize}
\end{figure}

\begin{figure}[h]
\begin{center}
\caption{Histogram for ASVAB percentile in 3-month age cohort}  \label{f:asvabhist}
\includegraphics[width=0.786\textwidth]{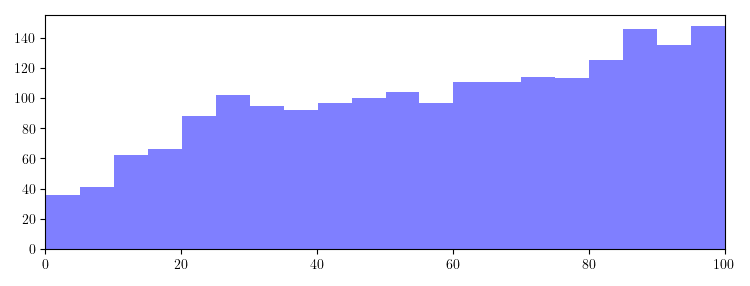}
\end{center}
\begin{footnotesize}
\textit{Notes}:
Distribution of ASVAB percentiles.
\end{footnotesize}
\end{figure}

\subsubsection*{$W$: Pre-college risky behaviour}
The proxies consist of risky behaviour dummies. They equal one when an individual has engaged in behaviour considered "risky" by age 17 or earlier if missing.

\begin{table}[h]
\centering
\caption{Probability of engaging in risky behaviour $W$ by 17 (in \%)}  \label{t:w-pr}
          \begin{tabular}{c | c c c c c c c c c} & & & mari- & run & attack & sell & destroy & steal & steal \\ 
                                                         & drink & smoke & juana & away & someone & drugs & property & $<50$\$ & $>50$\$ \\ 
                                                         \midrule
                                                         $\Pr$   &  65.5 &  47.1 &         29.8 &    10.9 &          19.0 &       9.1 &            32.0 &       38.1 &        8.0 \\
          \end{tabular}
\end{table}

\begin{table}[h]
\centering
\caption{Correlation matrix of pre-college risky behaviour $W$ (in \%)}  \label{t:w-pr}
          \begin{tabular}{l | c c c c c c c c c} & & & mari- & run & attack & sell & destroy & steal & steal \\ 
                                                         & drink & smoke & juana & away & someone & drugs & property & $<50$\$ & $>50$\$ \\ 
\midrule
drink            &            100 &              49 &             41 &                18 &              15 &                  21 &                        24 &                 28 &                 14 \\
smoke          &             49 &             100 &             49 &                21 &              20 &                  28 &                        27 &                 33 &                 17 \\
marijuana            &             41 &              49 &            100 &                26 &              24 &                  43 &                        29 &                 35 &                 21 \\
run away         &             18 &              21 &             26 &               100 &              27 &                  22 &                        20 &                 17 &                 23 \\
attack           &             15 &              20 &             24 &                27 &             100 &                  27 &                        28 &                 24 &                 18 \\
sell drugs       &             21 &              28 &             43 &                22 &              27 &                 100 &                        30 &                 27 &                 27 \\
destroy property &             24 &              27 &             29 &                20 &              28 &                  30 &                       100 &                 39 &                 21 \\
steal $<50\$$        &             28 &              33 &             35 &                17 &              24 &                  27 &                        39 &                100 &                 29 \\
steal $>50\$$        &             14 &              17 &             21 &                23 &              18 &                  27 &                        21 &                 29 &                100 \\
\bottomrule
\end{tabular}
\end{table}

\subsubsection*{$X$: Individual, family, and regional covariates}
The proxies consist of risky behaviour dummies. They equal one when an individual has engaged in behaviour considered "risky" by age 17 or earlier if missing.

\begin{table}[h]
\centering
\caption{Covariates in first stage}  \label{t:na}
          \begin{tabular}{l | l l l l} 
          variable & NLS97 basis & info & replace invalid by  \\ 
          \midrule
          hh net worth 1997 & R1204700 &  & 35\% quantile (8834.35) \\
          male & R0536300 & 46.6\% &  \\
          citizen birth non us & R1201300 & 2.3\% (not US-born)  &   \\ 
          citizen birth other & R1201300 & 13.2\% (not det.)  &   \\ 
          bio mom age first birth & R1200100 &  & median (23)  \\ 
          bio mom age subject birth & R1200100 & & median (26) \\
          bio mom educ & R1302500 & & median (13) \\
          bio dad educ & R1302400 & & median (12) \\
          relation parent figure & R1205300 & & both biological parents (1) \\
          parent religious & R1486900 & & median (3) \\
          n siblings & T6745900 & & median (2) \\
          region north central at 17 & R1200300 (i.a.) & 27.8\% & \\
          region south at 17 & R1200300 (i.a.) & 36.3\% & \\
          region west at 17 & R1200300 (i.a.) & 21.6\% & \\
          urban at 17 & R1217500 (i.a.) & 69.8\% & \\
          other non rural at 17 & R1217500 (i.a.) & 0.3\% & \\
          non english home & R0551900 & 14.1\% & 0  \\
          poor english interview & R2394903 & 0.3\% & 0  \\
          \end{tabular}
\end{table}

\begin{table}[h]
\centering
\caption{Additional covariates in outcome model}  \label{t:na}
          \begin{tabular}{l | l l l l} 
          variable & NLS97 basis & info & replace invalid by  \\ 
          \midrule
          net worth age 20 & Z9048900 &  & 35\% quantile (2737.7) \\
          gpa college & B0004600 &  & 400+ as 400 \\
          region north central at 35 & U0001900 (i.a.) & 24.5\% & \\
          region south at 35 & U0001900 (i.a.) & 38.7\% & \\
          region west at 35 & U0001900 (i.a.) & 23.2\% & \\
          urban at 35 & U0015000 (i.a.) & 81.9\% & \\
          other non rural at 35 & U0015000 (i.a.) & 0.4\% & \\
          \end{tabular}
\end{table}

\clearpage

\begin{table}
\caption{OLS coefficients key variables}
\label{}
\begin{center}
\begin{tabular}{lllll}
\hline
                              & OLS       & NC        & IV        & ICC        \\
\hline
const                         & -32.36    & -30.21    & 46.85     & 13.52      \\
                              & (37.41)   & (36.21)   & (43.38)   & (45.10)    \\
$T$                    &           & 27.76***  &           & 16.05**    \\
                              &           & (4.82)    &           & (7.37)     \\
$A$                       & 59.18***  & 30.90***  & 222.97*** & 125.15**   \\
                              & (9.12)    & (10.40)   & (34.74)   & (52.93)    \\
\hline
\multicolumn{5}{c}{ } \\[-1ex]
\multicolumn{5}{p{11cm}}{
\begin{minipage}{1 \linewidth}
\begin{footnotesize}
\textit{Notes}:
The table contains estimates and their standard errors (in parantheses) for $\beta$ in the $A$ row, and the linear parameter on $T$ if used in the method from four estimators: Ordinary Least Squares (OLS), Proximal Learning (NC), Instrumental Variables (IV), and Instrumented Common Confounding (ICC). Asterisks indicate significance at the 1\% (***), 5\% (**) and 10\% (*) level.
\end{footnotesize}
\end{minipage}}
\end{tabular}
\end{center}
\end{table}

\begin{table}
\caption{OLS coefficients individual- and family-covariates}
\label{}
\begin{center}
\begin{tabular}{lllll}
\hline
                              & OLS       & NC        & IV        & ICC        \\
\hline
hh net worth 1997          & 0.15***   & 0.16***   & 0.12***   & 0.14***    \\
                              & (0.03)    & (0.03)    & (0.03)    & (0.03)     \\
net worth age 20           & 0.35***   & 0.36***   & 0.35***   & 0.36***    \\
                              & (0.09)    & (0.08)    & (0.09)    & (0.08)     \\
gpa college                  & 16.00***  & 11.58***  & -6.01     & 2.86       \\
                              & (3.44)    & (4.00)    & (6.19)    & (6.94)     \\
male                          & 30.19***  & 24.33***  & 40.32***  & 32.06***   \\
                              & (8.39)    & (7.94)    & (9.14)    & (9.05)     \\
n siblings                   & -6.29***  & -6.78***  & -6.08**   & -6.47***   \\
                              & (1.73)    & (2.16)    & (2.37)    & (2.21)     \\
citizen birth non us       & -5.50     & 1.61      & -16.48    & -7.42      \\
                              & (27.65)   & (28.22)   & (30.87)   & (29.11)    \\
citizen birth other         & 34.53***  & 37.24***  & 29.66**   & 33.18**    \\
                              & (13.24)   & (12.88)   & (14.09)   & (13.27)    \\
bio mom age first birth   & 2.28**    & 2.60**    & 1.11      & 1.80       \\
                              & (1.00)    & (1.03)    & (1.15)    & (1.13)     \\
bio mom age subject birth & -0.08     & -0.06     & -0.01     & 0.00       \\
                              & (0.55)    & (0.57)    & (0.62)    & (0.58)     \\
bio mom educ                & 2.91      & 3.21*     & -0.16     & 1.53       \\
                              & (1.95)    & (1.77)    & (2.04)    & (2.03)     \\
bio dad educ                & 1.84      & 3.22*     & -2.52     & 0.38       \\
                              & (1.84)    & (1.71)    & (2.06)    & (2.32)     \\
relation parent figure      & -8.81***  & -10.29*** & -5.41**   & -7.78***   \\
                              & (2.22)    & (2.44)    & (2.75)    & (2.82)     \\
parent religious             & -0.03     & -0.03     & -0.04     & -0.03      \\
                              & (0.03)    & (0.03)    & (0.03)    & (0.03)     \\
non english home            & -10.64    & -12.37    & -9.47     & -11.23     \\
                              & (13.01)   & (13.20)   & (14.40)   & (13.43)    \\
poor english interview      & 75.41     & 56.96     & 78.91     & 65.72      \\
                              & (112.29)  & (78.33)   & (85.58)   & (79.79)    \\
\hline
\multicolumn{5}{c}{ } \\[-1ex]
\multicolumn{5}{p{15cm}}{
\begin{minipage}{1 \linewidth}
\begin{footnotesize}
\textit{Notes}:
The table contains estimates and their standard errors (in parantheses) for slope coefficients on individual and family covariates using four estimators: Ordinary Least Squares (OLS), Proximal Learning (NC), Instrumental Variables (IV), and Instrumented Common Confounding (ICC). Asterisks indicate significance at the 1\% (***), 5\% (**) and 10\% (*) level.
\end{footnotesize}
\end{minipage}}
\end{tabular}
\end{center}
\end{table}

\begin{table}
\caption{Slope coefficients regional covariates}
\label{}
\begin{center}
\begin{tabular}{lllll}
\hline
                              & OLS       & NC        & IV        & ICC        \\
\hline
urban 17                     & 1.08      & -1.48     & 5.09      & 1.59       \\
                              & (10.04)   & (9.68)    & (10.64)   & (10.01)    \\
urban 35                     & -30.42*** & -28.21*** & -38.53*** & -32.79***  \\
                              & (11.48)   & (10.89)   & (12.00)   & (11.47)    \\
other non rural 17         & 24.13     & 19.02     & 42.62     & 30.34      \\
                              & (23.31)   & (24.07)   & (26.55)   & (25.25)    \\
other non rural 35         & -92.49    & -82.51    & -48.20    & -66.35     \\
                              & (72.08)   & (66.52)   & (73.14)   & (68.57)    \\
region north central 17    & 43.59*    & 42.51**   & 38.27*    & 40.92**    \\
                              & (22.91)   & (20.27)   & (22.13)   & (20.68)    \\
region north central 35    & -28.57    & -27.57    & -11.03    & -19.45     \\
                              & (24.64)   & (21.07)   & (23.28)   & (22.11)    \\
region south 17             & 16.45     & 14.74     & 8.67      & 11.96      \\
                              & (18.82)   & (18.22)   & (19.92)   & (18.66)    \\
region south 35             & -22.18    & -20.87    & -1.67     & -11.54     \\
                              & (19.98)   & (18.40)   & (20.49)   & (19.65)    \\
region west 17              & 27.39     & 25.91     & 24.62     & 25.23      \\
                              & (22.03)   & (19.97)   & (21.79)   & (20.34)    \\
region west 35              & 16.58     & 15.37     & 34.70     & 24.20      \\
                              & (22.98)   & (20.06)   & (22.19)   & (21.22)    \\
\hline
\multicolumn{5}{c}{ } \\[-1ex]
\multicolumn{5}{p{13cm}}{
\begin{minipage}{1 \linewidth}
\begin{footnotesize}
\textit{Notes}:
The table contains estimates and their standard errors (in parantheses) for slope coefficients on regional covariates using four estimators: Ordinary Least Squares (OLS), Proximal Learning (NC), Instrumental Variables (IV), and Instrumented Common Confounding (ICC). Asterisks indicate significance at the 1\% (***), 5\% (**) and 10\% (*) level.
\end{footnotesize}
\end{minipage}}
\end{tabular}
\end{center}
\end{table}

\begin{table}
\caption{Slope coefficients risky behaviour proxies}
\label{}
\begin{center}
\begin{tabular}{lrrrr}
\toprule
                              & OLS   &       & IV   &         \\
\midrule
ever drank 17               & 14.44     &           & 12.56     &            \\
                              & (10.10)   &           & (10.63)   &            \\
ever smoked 17              & -2.40     &           & 7.39      &            \\
                              & (10.01)   &           & (10.90)   &            \\
ever marijuana 17               & -9.03     &           & -0.38     &            \\
                              & (10.90)   &           & (12.10)   &            \\
ever ran away 17           & -8.52     &           & -12.39    &            \\
                              & (12.91)   &           & (14.74)   &            \\
ever attack 17              & -1.71     &           & 15.39     &            \\
                              & (10.46)   &           & (12.54)   &            \\
ever sell drugs 17         & -22.44    &           & -19.73    &            \\
                              & (15.87)   &           & (17.27)   &            \\
ever destroy property 17   & 3.86      &           & 2.57      &            \\
                              & (9.89)    &           & (10.59)   &            \\
ever steal bit 17          & -1.29     &           & -4.15     &            \\
                              & (9.50)    &           & (10.30)   &            \\
ever steal lot 17          & -20.98    &           & -11.98    &            \\
                              & (15.45)   &           & (17.12)   &            \\
\bottomrule
\multicolumn{5}{c}{ } \\[-1ex]
\multicolumn{5}{p{10cm}}{
\begin{minipage}{1 \linewidth}
\begin{footnotesize}
\textit{Notes}:
The table contains estimates and their standard errors (in parantheses) for slope coefficients on risky behaviour proxies using four estimators: Ordinary Least Squares (OLS), Proximal Learning (NC), Instrumental Variables (IV), and Instrumented Common Confounding (ICC). Asterisks indicate significance at the 1\% (***), 5\% (**) and 10\% (*) level.
\end{footnotesize}
\end{minipage}}
\end{tabular}
\end{center}
\end{table}

\end{document}